\documentclass[11pt]{article}
\pdfoutput=1
\usepackage[protrusion=true,expansion=true]{microtype}
\usepackage[T1]{fontenc}
\usepackage{lmodern}
\usepackage{slantsc}
\usepackage{amsmath,amssymb,amsfonts,amsthm}
\usepackage{subcaption}
\usepackage{graphicx}
\usepackage{fullpage}
\usepackage{setspace}
\usepackage[backref=page]{hyperref}
\usepackage{color}
\usepackage{wrapfig}
\usepackage{tikz}
\usetikzlibrary{decorations.pathreplacing}
\usepackage{algorithm}
\usepackage{algorithmic}
\usepackage{mdframed}
\usepackage{multirow}
\usepackage{xspace}
\usepackage{pgfplots}
\usepackage{framed}
\usepackage{thmtools}
\usepackage{thm-restate}
\usepackage[siunitx]{circuitikz}
\usepackage{xcolor}
\pgfplotsset{compat=1.5}

\newtheorem{theorem}{Theorem}[section]
\newtheorem{corollary}[theorem]{Corollary}
\newtheorem{lemma}[theorem]{Lemma}

\newtheorem{definition}[theorem]{Definition}

\newenvironment{proofof}[1]{\begin{trivlist} \item {\bf Proof
#1:~~}}
  {\qed\end{trivlist}}

\newcommand{\namedref}[2]{\hyperref[#2]{#1~\ref*{#2}}}
\newcommand{\thmlab}[1]{\label{thm:#1}}
\newcommand{\thmref}[1]{\namedref{Theorem}{thm:#1}}
\newcommand{\lemlab}[1]{\label{lem:#1}}
\newcommand{\lemref}[1]{\namedref{Lemma}{lem:#1}}

\newcommand{\corlab}[1]{\label{cor:#1}}

\newcommand{\seclab}[1]{\label{sec:#1}}
\newcommand{\secref}[1]{\namedref{Section}{sec:#1}}

\newcommand{\figlab}[1]{\label{fig:#1}}
\newcommand{\figref}[1]{\namedref{Figure}{fig:#1}}
\newcommand{\alglab}[1]{\label{alg:#1}}
\newcommand{\algref}[1]{\namedref{Algorithm}{alg:#1}}
\newcommand{\tablelab}[1]{\label{tab:#1}}
\newcommand{\tableref}[1]{\namedref{Table}{tab:#1}}

\def \OPT    {\mdef{\mathsf{OPT}}}

\newcommand{\PPr}[1]{\ensuremath{\mathbf{Pr}\left[#1\right]}}

\renewcommand{\O}[1]{\ensuremath{\mathcal{O}\left(#1\right)}}

\newcommand{\eps}{\varepsilon}

\def \TRUEFLAG   {\mdef{\mathsf{TRUE}}}
\def \FALSEFLAG   {\mdef{\mathsf{FALSE}}}
\def \HEAVY   {\mdef{\mathsf{HEAVY}}}
\def \Recluster   {\mdef{\textsc{Recluster}}}

\def \ba    {\mdef{\mathbf{a}}}

\def \calE    {\mdef{\mathcal{E}}}

\def \calS    {\mdef{\mathcal{S}}}

\def \bA    {\mdef{\mathbf{A}}}
\def \bB    {\mdef{\mathbf{B}}}

\def \bME    {\mdef{\mathbf{E}}}

\def \bM    {\mdef{\mathbf{M}}}
\def \bP    {\mdef{\mathbf{P}}}
\def \bR    {\mdef{\mathbf{R}}}
\def \bS    {\mdef{\mathbf{S}}}
\def \bSigma    {\mdef{\mathbf{\Sigma}}}
\def \bT    {\mdef{\mathbf{T}}}
\def \bQ    {\mdef{\mathbf{Q}}}
\def \bV    {\mdef{\mathbf{V}}}
\def \bU    {\mdef{\mathbf{U}}}
\def \bX    {\mdef{\mathbf{X}}}

\def \bW    {\mdef{\mathbf{W}}}
\def \bb    {\mdef{\mathbf{b}}}
\def \bfe    {\mdef{\mathbf{e}}}

\def \bu    {\mdef{\mathbf{u}}}
\def \bw    {\mdef{\mathbf{w}}}
\def \bv    {\mdef{\mathbf{v}}}
\def \bx    {\mdef{\mathbf{x}}}
\def \by    {\mdef{\mathbf{y}}}

\newcommand{\mdef}[1]{{\ensuremath{#1}}\xspace}  

\DeclareMathOperator*{\polylog}{polylog}
\DeclareMathOperator*{\poly}{poly}
\DeclareMathOperator*{\Recourse}{Recourse}
\DeclareMathOperator*{\Trace}{Trace}

\newcommand{\ignore}[1]{}

\newif\ifnotes\notestrue 
\ifnotes
\newcommand{\samson}[1]{\textcolor{purple}{{\bf (Samson:} {#1}{\bf ) }} \marginpar{\tiny\bf
             \begin{minipage}[t]{0.5in}
               \raggedright S:
            \end{minipage}}}            							
\else
\newcommand{\samson}[1]{}
\fi

\makeatletter
\renewcommand*{\@fnsymbol}[1]{\textcolor{mahogany}{\ensuremath{\ifcase#1\or *\or \dagger\or \ddagger\or
 \mathsection\or \triangledown\or \mathparagraph\or \|\or **\or \dagger\dagger
   \or \ddagger\ddagger \else\@ctrerr\fi}}}
\makeatother

\providecommand{\email}[1]{\href{mailto:#1}{\nolinkurl{#1}\xspace}}

\definecolor{mahogany}{rgb}{0.75, 0.25, 0.0}
\definecolor{darkblue}{rgb}{0.0, 0.0, 0.55}
\definecolor{darkpastelgreen}{rgb}{0.01, 0.75, 0.24}
\definecolor{bleudefrance}{rgb}{0.19, 0.55, 0.91}
\definecolor{darkgreen}{rgb}{0.0, 0.2, 0.13}
\definecolor{darkgoldenrod}{rgb}{0.72, 0.53, 0.04}
\definecolor{darkred}{rgb}{0.55, 0.0, 0.0}
\hypersetup{
     colorlinks   = true,
     citecolor    = mahogany,
		 linkcolor		= olive
}

\AtBeginDocument{%
  \DeclareFontShape{T1}{lmr}{m}{scit}{<->ssub*lmr/m/scsl}{}%
}

\allowdisplaybreaks
\begin{document}

\allowdisplaybreaks

\title{Consistent Low-Rank Approximation}
\author{
David P. Woodruff\thanks{Carnegie Mellon University. \email{dpwoodru@gmail.com}} 
\and 
Samson Zhou\thanks{Texas A\&M University. 
\email{samsonzhou@gmail.com}}
}

\maketitle

\begin{abstract}
We introduce and study the problem of consistent low-rank approximation, in which rows of an input matrix $\mathbf{A}\in\mathbb{R}^{n\times d}$ arrive sequentially and the goal is to provide a sequence of subspaces that well-approximate the optimal rank-$k$ approximation to the submatrix $\mathbf{A}^{(t)}$ that has arrived at each time $t$, while minimizing the recourse, i.e., the overall change in the sequence of solutions. We first show that when the goal is to achieve a low-rank cost within an additive $\varepsilon\cdot||\mathbf{A}^{(t)}||_F^2$ factor of the optimal cost, roughly $\mathcal{O}\left(\frac{k}{\varepsilon}\log(nd)\right)$ recourse is feasible. For the more challenging goal of achieving a relative $(1+\varepsilon)$-multiplicative approximation of the optimal rank-$k$ cost, we show that a simple upper bound in this setting is $\frac{k^2}{\varepsilon^2}\cdot\text{poly}\log(nd)$ recourse, which we further improve to $\frac{k^{3/2}}{\varepsilon^2}\cdot\text{poly}\log(nd)$ for integer-bounded matrices and $\frac{k}{\varepsilon^2}\cdot\text{poly}\log(nd)$ for data streams with polynomial online condition number. We also show that $\Omega\left(\frac{k}{\varepsilon}\log\frac{n}{k}\right)$ recourse is necessary for any algorithm that maintains a multiplicative $(1+\varepsilon)$-approximation to the optimal low-rank cost, even if the full input is known in advance. Finally, we perform a number of empirical evaluations to complement our theoretical guarantees, demonstrating the efficacy of our algorithms in practice. 
\end{abstract}

\section{Introduction}
Low-rank approximation is a fundamental technique that is frequently used in machine learning, data science, and statistics to identify important structural information from large datasets. 
Given an input matrix $\bA\in\mathbb{R}^{n\times d}$ consisting of $n$ observations across $d$ features, the goal of low-rank approximation is to decompose $\bA$ into a combination of $k$ independent latent variables called factors. 
We represent these factors by matrices $\bU \in \mathbb{R}^{n \times k}$ and $\bV \in \mathbb{R}^{k \times d}$, so that the product $\bU\bV$ should be an ``accurate'' representation of the original dataset $\bA$. 
Formally, we want to minimize the quantity $\min_{\bU\in\mathbb{R}^{n\times k}, \bV\in\mathbb{R}^{k\times d}}\mathcal{L}(\bU \bV\ - \bA)$, where $\mathcal{L}$ denotes any predetermined loss function chosen for its specific properties. 
Though there is a variety of standard metrics, such as subspace alignment angles or peak signal-to-noise ratio (PSNR), in this paper we focus on Frobenius loss, which is perhaps the most common loss function, due to its connection with least squares regression. 
As a result, the problem is equivalent to finding the right factor matrix $\bV\in\mathbb{R}^{k\times d}$ of orthonormal rows to minimize the quantity $\|\bA-\bA\bV^\top\bV\|_F^2$, in which case the optimal left factor matrix is $\bU=\bA\bV^\top$. 

The rank parameter $k$ is generally chosen based on the complexity of the underlying model chosen to fit the data. 
The identification and subsequent utilization of the factors can often decrease the number of relevant features in an observation and thus simultaneously improve interpretability and decrease dimensionality. 
In particular, we can use the factors $\bU$ and $\bV$ to approximately represent the original dataset $\bA$, thus using only $(n+d)k$ parameters in the representation rather than the original $nd$ entries of $\bA$. 
Thereafter, given a vector $\bx\in\mathbb{R}^d$, we can compute the matrix-vector product $\bU\bV\bx$ as an approximation to $\bA\bx$ in time $(n+d)k$, whereas computing $\bA\bx$ requires $nd$ time. 
As a result of these advantages and more, low-rank approximation is one of the most common tools, with applications in recommendation systems, mathematical modeling, predictive analytics, dimensionality reduction, computer vision, and natural language processing~\cite{ChierichettiG0L17,ClarksonW17,CohenMM17,MuscoW17a,MuscoW17b,BakshiW18,BravermanHMSSZ21,VelingkerVWZ23,MuscoS24,SongVWZ24,KapralovMS26}. 

In fact, in many applications, methods based on low-rank approximation have consistently outperformed other data summarization techniques, including clustering. 
A prominent example is the Netflix Prize, an open competition launched in 2006 to design the most accurate algorithm for predicting user movie ratings using only historical rating data, without any additional information about users or films. 
The winning approaches~\cite{bell2007bellkor,bell2008bellkor,koren2009bellkor,bell2010all} relied heavily on low-rank matrix approximation to estimate missing ratings and achieved substantially higher accuracy than competing methods, including Netflix’s own linear regression–based system.

\paragraph{Data streams.}
The streaming model of computation has emerged as a popular setting for analyzing large evolving datasets, such as database logs generated from commercial transactions, financial markets, Internet of Things (IoT) devices, scientific observations, social network correspondences, or virtual traffic measurements. 
In the row-arrival model, each stream update provides an additional observation, i.e., an additional matrix row $\ba_t\in\mathbb{R}^d$ of the ultimate input $\bA\in\mathbb{R}^{n\times d}$ for low-rank approximation. 
Often, downstream decisions and policies must be determined with uncertainty about future inputs. 
That is, the input may be revealed in a data stream and practitioners may be required to make irrevocable choices at time $t\in[n]$, given only the dataset $\bA^{(t)}=\ba_1\circ\ldots\circ\ba_t$ consisting of the first $t$ data point observations. 
A prototypical example is the setting of online caching/paging algorithms~\cite{FiatKLMSY91,Irani96}, which must choose to keep or evict items in the cache at every time step. This generalizes to the $k$-server problem for penalties that are captured by metric spaces~\cite{ManasseMS90,BansalBMN15,BubeckCR23}.  


\paragraph{Consistency.}
Although irrevocable decisions are theoretically interesting in the context of online algorithms, such a restriction may be overly stringent for many practical settings. 
\cite{LattanziV17} notes that in the earlier setting of caching and paging, a load balancer that receives online requests for assignment to different machines can simply reassign some of the past tasks to other machines to increase overall performance if necessary. 
Moreover, the ability of algorithms to adjust previous decisions based on updated information allows for a richer understanding of the structure of the underlying problem beyond the impossibility barriers of online algorithms. 
On the other hand, such adjustments have downstream ramifications to applications that rely on the decisions/policies resulting from the algorithm. 
Therefore, the related notions of consistency and recourse were defined to quantify the cumulative number of changes to the output solution of an algorithm over time. 
Low-recourse online algorithms have been well-studied for a number of problems~\cite{GuptaK14,GuptaKS14,LackiOPSZ15,GuG016,MegowSVW16,GuptaL20,BhattacharyaBLS23}, while consistent clustering and facility location have recently received considerable attention~\cite{LattanziV17,Cohen-AddadHPSS19,FichtenbergerLN21,LackiHGRJ23}. 
In a standard scenario in feature engineering, low-rank approximation algorithms are used to select specific features or linear combinations of features, on which models are trained. 
Thus, large consistency costs correspond to expensive retrainings of whole large-scale machine learning systems~\cite{LattanziV17}, due to different sets of features being passed to the learner. 

Building on the notion of consistency, we formalize the problem of consistent low-rank approximation, which captures the need for online algorithms that adapt to evolving data while keeping their outputs stable for downstream use. 
As discussed above, low-rank approximation is widely used for feature engineering: the factors produced define the feature subspace on which downstream models are trained. 
In dynamic settings, high recourse can cause these features to change abruptly even under minor updates, triggering frequent and costly retraining of large-scale models~\cite{LattanziV17}. 
Consistent, low-recourse LRA algorithms maintain feature stability over time, reducing retraining costs and improving reliability.

The importance of stability extends across many practical domains. 
In biometrics, consistent low-dimensional representations of fingerprints or iris patterns are crucial for maintaining reliable identification~\cite{jain2004introduction}. 
In image processing, tasks such as object detection, handwriting recognition, and facial recognition depend on derived features that should not fluctuate unpredictably as new data arrives~\cite{mark2012feature}. 
In data compression and signal processing, stable reduced representations preserve essential structure while controlling noise~\cite{witten2002data,proakis2007digital}. 
In text mining and information retrieval, numerical features such as TF–IDF vectors or embeddings must remain coherent to maintain the quality of search and classification~\cite{aggarwal2015mining,schutze2008introduction}. 
Even in large-scale data curation for foundation models—where clustering, a constrained form of low-rank approximation, is used to deduplicate data—high recourse leads to unstable representative sets and repeated retraining of models~\cite{LackiHGRJ23}.

Across these settings, the underlying principle is consistent: downstream systems rely not only on the quality of the approximation but also on the \emph{stability} of the feature representations over time. By explicitly accounting for this need, consistent low-rank approximation provides solutions that evolve smoothly while maintaining strong accuracy guarantees, delivering both theoretical insight and concrete practical benefits in dynamic, real-world pipelines.

\subsection{Our Contributions}
In this paper, we initiate the study of consistent low-rank approximation. 

\paragraph{Formal model.}
Given an accuracy parameter $\eps>0$, our goal is to provide a $(1+\eps)$-approximation to low-rank approximation at all times. 
We assume the input is a matrix $\bA\in\mathbb{R}^{n\times d}$, whose rows $\ba_1,\ldots,\ba_n$ arrive sequentially, so that at each time $t$, the algorithm only has access to $\bA^{(t)}$, the first $t$ rows of $\bA$.  
That is, the goal of the algorithm is firstly to output a set $\bV^{(t)}\in\mathbb{R}^{k\times d}$ of $k$ factors at each time $t\in[n]$, so that 
\[\|\bA^{(t)}-\bA^{(t)}(\bV^{(t)})^\top\bV^{(t)}\|_F^2\le(1+\eps)\cdot\OPT_t,\]
where $\OPT_t$ is the cost of the optimal low-rank approximation at time $t$, $\OPT_t=\min_{\bV\in\mathbb{R}^{k\times d}}\|\bA^{(t)}-\bA^{(t)}\bV^\top\bV\|_F^2$. 
In other words, we want the low-rank cost induced by the factor $\bV^{(t)}$ returned by the algorithm to closely capture the optimal low-rank approximation. 
Secondly, we would like the sequence $\bV^{(1)},\ldots,\bV^{(n)}$ of factors to change minimally over time. 
Specifically, the goal of the algorithm is to minimize $\sum_{t=2}^{n}\Recourse(\bV^{(t)},\bV^{(t-1)})$, where $\Recourse(\bR,\bT)=\|\bP_\bR-\bP_\bT\|_F^2$ is the squared subspace distance between the orthogonal projection matrices $\bP_\bR$, $\bP_\bT$ of the two subspaces. 

We remark on our choice of the cost function $\Recourse(\bR,\bT)$ for factors $\bR$ and $\bT$. 
At first glance, a natural setting of the cost function may be the number of vectors that are different between $\bR$ and $\bT$, since in some sense it captures the change between $\bR$ and $\bT$. 
However, it should be observed that even if there is a unique rank $k$ subspace $\bV$ that minimizes the low-rank approximation cost $\|\bA-\bA\bV^\top\bV\|_F^2$, there may be many representations of $\bV$, up to any arbitrary rotation of the basis vectors within the subspace.  
Thus, a cost function sensitive to the choice of basis vectors may not be appropriate because a large change in the change of basis vectors may not result in any change in the resulting projection $\bA\bV^\top\bV$. 
This implies that a reasonable cost function should capture the difference in the spaces spanned by the subspaces $\bR$ and $\bT$. 
For example, the dimension of the subspace of $\bT$ that is orthogonal to $\bR$ would be an appropriate quantity. 
However, it should be noted that this quantity still punishes a subspace $\bT$ that is a small perturbation of $\bR$, for example if $\bR$ is the elementary vector $(0,1)$ and $\bT$ is the vector $(\eps,\sqrt{1-\eps^2})$ for arbitrarily small $\eps$. 
A more robust quantity would be a continuous analogue of the dimension, which is the squared mass of the projection of $\bT$ away from $\bR$; this quantity corresponds exactly to our cost function $\Recourse$.  
Thus, we believe that our choice of the consistency cost function is quite natural. 

We note that we can further assume that the input matrix $\bA$ has integer entries bounded in magnitude by some parameter $M$. 
We remark that this assumption is standard in numerical linear algebra because in general it is difficult to represent real numbers up to arbitrary precision in the input of the algorithm. 
Instead, for inputs that are rational, after appropriate scaling each entry of the input matrix can be written as an integer. 
Thus, this standard assumption can model the number of bits used to encode each entry of the matrix, without loss of generality. 

\paragraph{Theoretical results.}
We first note that the optimal low-rank approximation can completely change at every step, in the sense that the optimal subspace $\bV^{(t-1)}$ at time $t-1$ may still have dimension $k$ after being projected onto the optimal subspace $\bV^{(t)}$. 
Thus, to achieve optimality, it may be possible that $\Omega(nk)$ recourse could be necessary, i.e., by recomputing the best $k$ factors after the arrival of each of the $n$ rows. 
Nevertheless, on the positive side, we first show sublinear recourse can be achieved if the goal is to simply achieve an additive $\eps\cdot\|\bA^{(t)}\|_F^2$ additive error to the low-rank approximation cost at all times. 
\begin{restatable}{theorem}{thmlrafrob}
\thmlab{thm:lra:frob}
Suppose $\bA\in\mathbb{Z}^{n\times d}$ is an integer matrix with rank $r>k$ and entries bounded in magnitude by $M$ and let $\bA^{(t)}$ denote the first $t$ rows of $\bA$, for any $t\in[n]$. 
There exists an algorithm that achieves $\eps\cdot\|\bA^{(t)}\|_F^2$-additive approximation to the cost of the optimal low-rank approximation $\bA$ at all times and achieves recourse  $\O{\frac{k}{\eps}\log(ndM)}$. 
\end{restatable}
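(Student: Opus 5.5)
The approach is a lazy recomputation scheme driven by the growth of the Frobenius norm. The algorithm maintains a current subspace $\bV$ and a reference norm $F=\|\bA^{(s)}\|_F^2$ recorded at the last recomputation time $s$. At each time $t$, if $\|\bA^{(t)}\|_F^2\le(1+\eps/2)F$, it keeps $\bV^{(t)}=\bV^{(t-1)}$. Otherwise it recomputes $\bV^{(t)}$ as the optimal rank-$k$ subspace of $\bA^{(t)}$, given by its top $k$ right singular vectors, and resets $F=\|\bA^{(t)}\|_F^2$. The proof has two parts: correctness of the additive guarantee between recomputations, and a count of recomputations. Each recomputation costs recourse at most $\|\bP_{\bV^{(t)}}-\bP_{\bV^{(t-1)}}\|_F^2\le 2k$, since both projections have rank $k$.

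\emph{Correctness.} Fix a recomputation time $s$ and a later time $t$ with no recomputation in between. Write $\bA^{(t)}=\bA^{(s)}\circ\bB$, where $\bB$ is the block of rows arriving after $s$. The cost splits row-wise:
\[\|\bA^{(t)}(\mathbf{I}-\bP_\bV)\|_F^2=\|\bA^{(s)}(\mathbf{I}-\bP_\bV)\|_F^2+\|\bB(\mathbf{I}-\bP_\bV)\|_F^2.\]
Since $\bV$ is optimal for $\bA^{(s)}$, the first term equals $\OPT_s$. Because $\bA^{(s)}$ is a row-submatrix of $\bA^{(t)}$, we have $\OPT_s\le\OPT_t$ (restricting any subspace to fewer rows cannot increase cost). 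For the second term,
\[\|\bB(\mathbf{I}-\bP_\bV)\|_F^2\le\|\bB\|_F^2=\|\bA^{(t)}\|_F^2-\|\bA^{(s)}\|_F^2\le\tfrac{\eps}{2}\|\bA^{(s)}\|_F^2\le\eps\|\bA^{(t)}\|_F^2.\]
Together these give cost at most $\OPT_t+\eps\|\bA^{(t)}\|_F^2$ at every time.

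\emph{Counting recomputations.} Recomputations occur only while $\|\bA^{(t)}\|_F^2>0$; while the prefix is zero, any fixed subspace has zero cost and incurs no recourse. After the first nonzero row arrives, integrality gives $\|\bA^{(t)}\|_F^2\ge1$. Every entry is bounded by $M$, so $\|\bA^{(t)}\|_F^2\le ndM^2$. Each recomputation multiplies $F$ by more than $1+\eps/2$. The number of recomputations is therefore at most
\[1+\log_{1+\eps/2}(ndM^2)=\O{\tfrac1\eps\log(ndM)}.\]
Multiplying by the $2k$ recourse per recomputation gives total recourse $\O{\frac{k}{\eps}\log(ndM)}$.

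\emph{Main obstacle.} The only delicate points are the boundary cases. The first is the initial phase, where prefix norms are zero; this is handled as above. The second is ensuring each recomputation indeed costs $\O{k}$ recourse even when $\bA^{(t)}$ has rank below $k$ early on. In that case one pads $\bV$ to dimension $k$ arbitrarily, and the $2k$ bound on the squared projection distance still holds. The hypothesis $r>k$ is only needed so the problem is nontrivial. The integrality assumption is what converts the multiplicative norm growth into a logarithmic count, so that is the step I would state carefully.
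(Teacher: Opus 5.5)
Your proposal is correct and matches the paper's proof essentially step for step. The shared steps are: lazily recompute the top-$k$ right singular vectors whenever $\|\bA^{(t)}\|_F^2$ grows by a $(1+\Theta(\eps))$ factor; bound the cost in between by $\OPT_s\le\OPT_t$ plus the squared norm of the newly arrived rows; and count at most $\O{\frac{1}{\eps}\log(ndM)}$ recomputations using integrality and $\|\bA\|_F^2\le ndM^2$. Your $\eps/2$ threshold, your $2k$ per-recomputation recourse bound (the accurate bound for $\|\bP_\bR-\bP_\bT\|_F^2$, where the paper says $k$), and your handling of the zero-prefix and rank-below-$k$ boundary cases are only cosmetic refinements of the same argument.
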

We remark that the algorithm corresponding to \thmref{thm:lra:frob} uses $\frac{kd}{\eps}\cdot\polylog(ndM)$ bits of space and $d\cdot\poly\left(k,\frac{1}{\eps},\log(ndM)\right)$ amortized update time. 
Since the squared Frobenius norm is an upper bound on the optimal low-rank approximation cost, achieving additive $\eps\cdot\|\bA^{(t)}\|_F^2$ error to the optimal cost is significantly easier than achieving relative $(1+\eps)$-multiplicative error, particularly in the case where the top singular vectors correspond to large singular values. 
In fact, we can even achieve recourse linear in $k$ if the online condition number of the stream is at most $\poly(n)$:
\begin{restatable}{theorem}{thmcondnorecourse}
\thmlab{thm:condno:recourse}
Given a stream with online condition number $\poly(n)$, there exists an algorithm that achieves a $(1+\eps)$-approximation for low-rank approximation, and uses recourse $\O{\frac{k}{\eps^2}\log^3 n}$. 
\end{restatable}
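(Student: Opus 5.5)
The plan is a lazy-update scheme driven by the growth of the optimal cost. Because the stream has online condition number $\poly(n)$, whenever $\OPT_t>0$ it lies between $\OPT_{t_0}$ and $\poly(n)\cdot\OPT_{t_0}$ from its first nonzero time $t_0$ onward. Before that, $\OPT_t=0$, meaning $\bA^{(t)}$ has rank at most $k$.

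\textbf{Rank phase.} While $\OPT_t=0$, we keep $\bV^{(t)}$ equal to the row span of $\bA^{(t)}$, padded to $k$ dimensions. Each new row either already lies in that span or adds one new direction. Each new direction costs $\O{1}$ recourse, so this phase costs $\O{k}$ in total.

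\textbf{Epochs.} After $t_0$, $\OPT_t$ is nondecreasing in $t$, since adding rows cannot decrease the optimal cost. We start a new epoch each time $\OPT_t$ has grown by a factor $(1+\eps)$ since the epoch began. By the condition-number assumption there are $\O{\frac{1}{\eps}\log n}$ epochs.

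\textbf{Within an epoch.} At the start of an epoch we compute a subspace $\bV$ that is a $(1+\eps)$-approximation for the current prefix. We then keep $\bV$ fixed as long as $\|\bA^{(t)}-\bA^{(t)}\bV^\top\bV\|_F^2\le(1+\eps)\OPT_t$, and recompute only when this fails.

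Fixed-subspace cost grows only additively: the new rows contribute $\|\ba_i-\ba_i\bV^\top\bV\|_2^2$ each. A naive count of recomputations per epoch is therefore unbounded. To bound it, I would charge each recomputation against the residual mass the new rows place outside $\bV$. Failure requires that mass to be at least $\eps\OPT$. Since $\OPT_t$ stays within a $(1+\eps)$ factor during the epoch, only $\O{\frac{1}{\eps}\log n}$ such failures should occur inside it. The argument would use the fact that the tail $\bA_{-k}$ grows by at least a constant fraction of the off-subspace mass whenever the optimal subspace does not rotate.

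\textbf{Recourse per recomputation.} Each recomputation costs at most $2k$ recourse. This is because $\|\bP_\bR-\bP_\bT\|_F^2\le 2k$ for any two rank-$k$ projections.

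\textbf{Total.} Combining the pieces gives $\O{k\cdot\frac{1}{\eps}\log n\cdot\frac{1}{\eps}\log n}$ before lower-order losses. The additional $\log n$ factor would come from using the online ridge-leverage-score sampling approximation rather than exact SVD. That approximation gives $(1+\eps)$ projection-cost preservation with $\frac{k}{\eps}\log n\log\kappa$ rows, and with $\kappa=\poly(n)$ this yields $\O{\frac{k}{\eps^2}\log^3 n}$.

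\textbf{Main obstacle.} The hardest step is bounding the number of recomputations within an epoch. Adversarial rows could repeatedly push mass just off the current subspace without increasing $\OPT$ much. If the direct charging argument fails, I would first try changing the recomputation rule: recompute only when a sampled online-leverage-score row is added. The total number of sampled rows is bounded by the sum of online ridge leverage scores, $\O{k\log n\log\kappa}$. The $\eps$ dependence would then come from the sampling rate $\frac{1}{\eps^2}$ and a projection-cost-preserving guarantee.
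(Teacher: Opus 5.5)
Your proposal has a genuine gap, and it is where you suspect: nothing bounds the number of recomputations inside an epoch, and the charging you sketch does not do so. A failure of the frozen subspace $\bV$ only certifies that the new rows put at least $\eps\cdot\OPT_t$ of mass outside $\bV$. That mass can be absorbed by a rotation of the optimal subspace instead of landing in the tail. For example, take $k=1$, $(\bA^{(s)})^\top\bA^{(s)}=\mathrm{diag}(L,\lambda)$ with $L>\lambda$, and $\bV=\bfe_1$. Appending the row $(a,b)$ with $b^2=2\eps\lambda$ and $a^2\gg L$ makes $\bfe_1$ fail, while $\OPT$ stays $\lambda(1+\O{\eps L/a^2})$. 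The only automatic progress per failure is an additive $\eps\cdot\OPT$ increase of $\|\bA^{(t)}\|_F^2$. Since $\|\bA^{(t)}\|_F^2$ can be $\poly(n)\cdot\OPT$, this still permits $\poly(n)/\eps$ failures per epoch. Getting $\O{\frac1\eps\log n}$ would need a new potential argument that the proposal does not contain. The step ``$\OPT_t$ stays within a $(1+\eps)$ factor, hence only $\O{\frac1\eps\log n}$ failures'' is a non sequitur. The extra $\log n$ ``from sampling rather than exact SVD'' is fitted to the target rather than derived.

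Your fallback, recomputing only when online ridge-leverage sampling adds a row, is the paper's route. The paper maintains the exact top-$k$ subspace of the $\O{\frac{k}{\eps^2}\log^3 n}$-row online coreset from \thmref{thm:online:pcp}, which changes only at single-row insertions. However, charging $2k$ per insertion gives only the quadratic $\frac{k^2}{\eps^2}\polylog(n)$ bound. The missing ingredient is a per-insertion bound, which the paper supplies as \lemref{lem:update:recourse}: appending one row moves the optimal rank-$k$ subspace by $\Recourse\le 8$.

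You cannot simply import that lemma, though. Its proof claims that $\bW=\bV^*_{t-1}\cap\ba_t^\perp$ is spanned by eigenvectors of $\bB_{t-1}$. But $\bW$ is not $\bB_{t-1}$-invariant unless the projection of $\ba_t$ onto $\bV^*_{t-1}$ is an eigenvector. The intermediate conclusion $\dim(\bV^*_{t-1}\cap\bV^*_t)\ge k-2$ fails generically. The constant bound itself fails for large $k$, by my own calculation below (the paper does not address this). Take $\bB_{t-1}=\mathrm{diag}(2k,2k-1,\ldots,1)$ and append one sufficiently heavy row along the all-ones vector. The new top-$k$ space then tends to that vector plus the top $k-1$ eigenvectors of the compression of $\bB_{t-1}$ to the orthogonal complement of that vector. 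For $k/2\le j<k$, the $j$-th of these has an $\Omega(1/(k+1-j))$ fraction of its mass on the old bottom-$k$ block. Summing gives $\Recourse=\Omega(\log k)$. So a complete proof along either route needs a new, correct bound on how far the (near-)optimal subspace must move per inserted coreset row.
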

We remark that for streams with online condition number $\poly(n)$, we actually show a stronger result in \lemref{lem:update:recourse}. 
In particular, we show that the optimal rank-$k$ subspace changes by only a constant amount after rank-one perturbations corresponding to single-entry changes, row modifications, row insertions, and row deletions. 
Thus this result implies \thmref{thm:condno:recourse} using standard techniques for reducing the ``effective'' stream length and in fact, also immediately gives an algorithm that maintains the optimal rank-$k$ approximation under any sequence of such updates while incurring only $\O{n}$ total recourse over a stream of $n$ operations. 
Hence, our approach can handle explicit distributional shifts arising in insertion–deletion streams; extending this ability to handle implicit deletions such as in the sliding window model~\cite{DatarGIM02,CrouchMS13,BravermanWZ21,BravermanGLWZ18,BravermanDMMUWZ20,WoodruffZ21,BorassiELVZ20,EpastoMMZ22a,JayaramWZ22,BlockiLMZ23,WoodruffY23,WoodruffZZ23,Cohen-AddadJYZZ25,BravermanGWWZ26,NagawanshiPWWZ26} is a natural direction for future work. 

For standard matrices with integer entries bounded by $\poly(n)$, however, the assumption that the online condition number is bounded by $\poly(n)$ may not hold.   
For example, it is known that there exist integer matrices with dimension $n\times d$ but optimal low-rank cost as small as $\exp(-\Omega(k))$. 
To that end, we first observe that a simple application of a result by~\cite{BravermanDMMUWZ20} can be used to achieve recourse $\frac{k^2}{\eps^2}\cdot\polylog(ndM)$ while maintaining a $(1+\eps)$-multiplicative approximation at all times. 
Indeed, for constant $\eps$, roughly $k\cdot\polylog(ndM)$ rows can be sampled through a process known as online ridge-leverage score sampling, to preserve the low-rank approximation at all times. 
Then for quadratic recourse, it suffices to recompute the top right $k$ singular vectors for the sampled submatrix each time a new row is sampled. 
A natural question is whether $\Omega(k)$ recourse is necessary for each step, i.e., whether recomputing the top right $k$ singular vectors is necessary. 
We show this is not the case, and that overall the recourse can be made sub-quadratic. 
\begin{restatable}{theorem}{thmmain}
\thmlab{thm:main}
Suppose $\bA\in\mathbb{Z}^{n\times d}$ is an integer matrix with entries bounded in magnitude by $M$. 
There exists an algorithm that achieves a $(1+\eps)$-approximation to the cost of the optimal low-rank approximation $\bA$ at all times and achieves recourse $\frac{k^{3/2}}{\eps^2}\cdot\polylog(ndM)$. 
\end{restatable}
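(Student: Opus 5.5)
The plan is to combine two regimes and balance them at a threshold of $\sqrt{k}$. Following the discussion before the statement, run online ridge-leverage score sampling~\cite{BravermanDMMUWZ20}. At every time $t$ this maintains a reweighted row sample $\bS^{(t)}\bA^{(t)}$ that is a projection-cost preserving sketch of $\bA^{(t)}$ with $(1\pm\eps)$ distortion. Its total number of sampled rows is $\frac{k}{\eps^2}\polylog(ndM)$, where the $\polylog(ndM)$ factor bounds the online condition number for integer matrices. The sketch changes only when a row is sampled, so the whole problem reduces to maintaining a near-optimal subspace for a matrix $\bB$ that receives $N=\frac{k}{\eps^2}\polylog(ndM)$ row insertions. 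The output subspace is changed only at those insertion times.

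Next, analyze a single insertion of a row $\bb$ into $\bB$. Let $\sigma_1\ge\dots\ge\sigma_r$ be the singular values of the current sketch. Split the current top-$k$ space into its top $k-m$ directions and its bottom $m$ directions, with $m=\lceil\sqrt k\,\rceil$. This gives two cases.
\begin{itemize}
\item \emph{Gap present.} If there is an index $j\in[k-m,k]$ with a multiplicative gap $\sigma_j^2\ge(1+\eps/\sqrt{k})\,\sigma_{k+1}^2$, a Davis--Kahan/Wedin style argument applies. It bounds the change of the top-$j$ projection caused by a rank-one update by roughly $\|\bb\|^2/(\sigma_j^2-\sigma_{k+1}^2)$. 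Summed over time, together with the ridge leverage score of $\bb$, this is controlled via the potential $\sum\log\sigma_i^2$, which grows by only $\polylog(ndM)$ overall.
\item \emph{No gap.} If instead the squared singular values $\sigma_{k-m},\dots,\sigma_{k+1}$ all lie within a $(1+\eps/\sqrt k)^{m}\approx(1+\eps)$ factor of each other, the bottom $m$ directions of the current solution are interchangeable with their competitors up to a $(1+\eps)$ loss in cost. So only a rank-$m=\sqrt k$ block needs to be recomputed, costing $O(\sqrt k)$ recourse per insertion.
\end{itemize}
With $N$ insertions at $O(\sqrt k)$ each, the no-gap case contributes $\sqrt k\cdot\frac{k}{\eps^2}\polylog(ndM)=\frac{k^{3/2}}{\eps^2}\polylog(ndM)$. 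The gap case contributes a matching or smaller amount after amortizing against the potential.

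To make this rigorous I would keep a ``lazy'' solution $\bV$ and maintain the invariant that $\|\bB-\bB\bV^\top\bV\|_F^2\le(1+\eps)\OPT(\bB)$. On each insertion, if the invariant still holds, nothing changes. Otherwise, replace $\bV$ by the span of the top $k-m$ directions of the new sketch (which by the gap bound move little), followed by a fresh bottom-$m$ block. Errors compound across the $\log$-many epochs in which the cost at least doubles, so I would rescale $\eps$ by a $\polylog$ factor and restart the stable part once per epoch. Each restart pays $O(k)$ recourse, for $O(k\log(ndM))$ in total.

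The main obstacle is the gap case. Standard perturbation bounds require an additive gap comparable to $\|\bb\|^2$, but the sampled rows are reweighted by inverse ridge-leverage probabilities and may be as large as the ridge term itself. I would first try to prove the per-step bound $\|\bP_{\mathrm{new}}-\bP_{\mathrm{old}}\|_F^2\lesssim \sqrt{k}\cdot\tau(\bb)/\eps$, where $\tau$ is the ridge leverage score. Summing with $\sum\tau=\frac{k}{\eps}\polylog$ would then give the claimed bound.
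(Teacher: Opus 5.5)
Your reduction (online ridge-leverage sampling to $N=\frac{k}{\eps^2}\polylog(ndM)$ insertions, then $O(\sqrt{k})$ amortized recourse per insertion) is the same as the paper's. The gap is in the per-insertion mechanism, which carries the whole $k^{3/2}$ bound: it is not established, and as sketched it does not work.

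\emph{Gap case.} A Davis--Kahan bound with separation $\sigma_j^2-\sigma_{k+1}^2$ only controls how far the \emph{new top-$j$} space leaks out of the \emph{old top-$k$} space. It does not control the motion of the top-$j$ projection itself; that needs a gap at index $j$. As you note, $\|\mathbf{b}\|_2^2$ is also not small relative to $\frac{\eps}{\sqrt{k}}\sigma_{k+1}^2$, since a small ridge leverage score still allows $\langle\mathbf{b},\mathbf{v}_1\rangle^2$ of order $\tau(\mathbf{b})\,\sigma_1^2$. So the per-step bound $\sqrt{k}\,\tau(\mathbf{b})/\eps$ you hope for is exactly the missing lemma. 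Moreover, your lazy solution jumps only after several insertions, and squared subspace distances do not telescope. Summing per-step bounds therefore would not bound the jump anyway.

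\emph{No-gap case.} This case rests on the wrong condition. A relative near-tie $\sigma_{k-m}^2\le(1+\eps)\sigma_{k+1}^2$ does not make the bottom $m$ directions disposable. Discarding them costs about $m\,\sigma_{k+1}^2$, which is $\Theta(\sqrt{k})\cdot\mathsf{OPT}$ when the tail mass sits on $\sigma_{k+1}$; a near-tie licenses only a one-for-one swap with $\mathbf{v}_{k+1}$. Also, with no gap near index $k-m$, nothing in your argument controls how far the top-$(k-m)$ block moves.

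\emph{What the paper does instead.} It avoids perturbation theory and uses an \emph{absolute} mass test at each full recompute of \algref{alg:lra}. If $\sum_{i=k-\sqrt{k}}^{k}\sigma_i^2<\frac{\eps}{3}\mathsf{OPT}$, all $\sqrt{k}$ of those directions can be dropped at a $(1+\frac{\eps}{3})$ loss. Each new row then overwrites one of them, and a full recompute is forced every $\sqrt{k}$ rows. Otherwise, \lemref{lem:heavy:recourse:step} argues that within an epoch a recompute cannot displace more than $\sqrt{k}$ directions without raising $\mathsf{OPT}$ by more than a $(1+\frac{\eps}{3})$ factor.

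Second, your restart accounting assumes only $O(\log(ndM))$ cost-doubling epochs, which is false for integer matrices. Anti-Hadamard-type matrices have optimal rank-$k$ cost as small as $\exp(-\Omega(k))$, so the cost can double $\Omega(k)$ times. Paying $O(k)$ per restart then gives $\Omega(k^2)$ recourse, which already breaks the $k^{3/2}$ target. The paper closes this with \lemref{lem:lra:cost:rank}: an integer matrix of rank $r>k$ has rank-$k$ cost at least $(ndM^2)^{-k/(r-k)}$. Hence many epochs can occur only while $r-k$ is small. In that regime a recompute costs only $O(r-k)$ recourse, because the old and new top-$k$ spaces both lie in the $r$-dimensional row span. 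Grouping $r-k\in[2^i,2^{i+1})$ gives $O(\frac{k}{\eps 2^i}\log(ndM))$ epochs at $O(2^i)$ recourse each, for $O(\frac{k}{\eps}\log^2(ndM))$ in total. Your plan needs a rank-sensitive argument of this kind.
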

We again remark that the algorithm corresponding to \thmref{thm:main} uses $\frac{kd}{\eps}\cdot\polylog(ndM)$ bits of space and $d\cdot\poly\left(k,\frac{1}{\eps},\log(ndM)\right)$ amortized update time. 
Finally, we show that $\Omega\left(\frac{k}{\eps}\log\frac{n}{k}\right)$ recourse is necessary for any multiplicative $(1+\eps)$-approximation algorithm for low-rank approximation, even if the full input is known in advance. 
\begin{restatable}{theorem}{thmlb}
\thmlab{thm:lb}
For any parameter $\eps>\frac{\log n}{n}$, there exists a sequence of rows $\bx_1,\ldots,\bx_n\in\mathbb{R}^{d}$ such that any algorithm that produces a $(1+\eps)$-approximation to the cost of the optimal low-rank approximation at all times must have consistency cost $\Omega\left(\frac{k}{\eps}\log\frac{n}{k}\right)$.
\end{restatable}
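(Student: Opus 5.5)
We will build a hard instance block by block. The building block is a stream on which the optimal direction has to be rotated about $\frac{1}{\eps}\log\frac{n}{k}$ times, each rotation costing constant recourse. The full lower bound then comes from running $k$ copies on orthogonal coordinates.

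\textbf{Rank-one gadget.} Take $k=1$ and work in two dimensions, with standard basis vectors $e_1,e_2$. The stream alternates between rows along $e_1$ and rows along $e_2$, in phases whose lengths grow geometrically by a factor $(1+c\eps)$ for a suitable constant $c$. Let $S_1(t)$ and $S_2(t)$ be the cumulative squared masses along $e_1$ and $e_2$, and suppose the leader alternates from phase to phase, so that at the end of every phase
\[
\max\{S_1,S_2\}\ge(1+c\eps)\min\{S_1,S_2\}.
\]
Let $\bv$ be the unit vector the algorithm outputs, and write $\bv=(\cos\theta,\sin\theta)$. Its cost is $S_1\sin^2\theta+S_2\cos^2\theta$, while $\OPT=\min\{S_1,S_2\}$. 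Suppose $S_1\ge(1+c\eps)S_2$. Then a $(1+\eps)$-approximation forces $\cos^2\theta\le O(1/c)$, so $\bv$ lies within a constant angle of $e_1$ when $c$ is a large constant. At the end of the next phase the roles are swapped, so $\bv$ must lie near $e_2$.

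Each such switch changes the projection by a constant in the metric $\|\bP_\bR-\bP_\bT\|_F^2$. The recourse metric is squared distance rather than distance, so it does not satisfy the triangle inequality. We therefore have to argue that the sum of squared steps between two far-apart projections is still $\Omega(1)$. The point is that a projection that is $\Omega(1)$-far from both endpoints must be approximately optimal at some intermediate time only if the algorithm can wait. To handle this, make each phase a single row (or a single block of rows) of mass $(1+c\eps)$ times the current total. Then the output at the last time before the leader flips and the output at the first time after it must already be far apart, so one step costs $\Omega(1)$.

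\textbf{Counting switches.} The total mass grows by roughly a factor $(1+c\eps)$ per phase. If entries may be real-valued, with $n$ rows available this alone gives $n$ switches, which is too many and shows that the parameterization must be chosen carefully. So we restrict to unit-norm rows; equivalently, each phase adds an integer number of unit rows, which matches the intended bound. With unit rows, phase $j$ has length roughly $c\eps(1+c\eps)^j$, and this length must be at least $1$. Starting from total mass $\approx 1/\eps$ and ending at total $n$, the number of phases is
\[
\log_{1+c\eps}(n\eps)\approx\frac{1}{\eps}\log(n\eps).
\]
The hypothesis $\eps>\frac{\log n}{n}$ keeps this range nonempty and, up to constants, gives $\Omega\!\left(\frac{1}{\eps}\log n\right)$.

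\textbf{Rank $k$.} Use $d=2k$ and split the $n$ rows into $k$ groups of $n/k$ rows, each group supported on its own coordinate pair $\{e_{2i-1},e_{2i}\}$. The groups' streams are interleaved so that every gadget progresses simultaneously. At every time the cost is block-additive, and $\OPT_t$ picks one direction per pair, provided the ratios are arranged so that taking two directions from one pair is never beneficial (for example by scaling all blocks to comparable totals). A $(1+\eps)$-approximation overall must then be an $O(1+k\eps)$ approximation per block, which is too weak. To fix this, make the blocks identical and synchronized: the slack $\eps\cdot\OPT$ then spreads over $k$ blocks with equal cost, and an averaging argument shows that a constant fraction of the blocks must each be $(1+O(\eps))$-approximate at each switch time. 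Each such block contributes $\Omega(1)$ recourse, because the recourse metric decomposes across blocks up to cross terms, and the cross terms can be controlled since $\bP$ is a rank-$k$ projection.

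This yields $k\cdot\Omega\!\left(\frac{1}{\eps}\log\frac{n}{k}\right)$. The main obstacle is this last step: making the averaging rigorous while the output subspace may mix blocks. The first thing to try is bounding the recourse from below by $\sum_i\|\bP_i-\bP'_i\|_F^2$, where $\bP_i$ is the compression of the projection to block $i$.
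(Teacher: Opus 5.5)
The obstacle you single out, outputs that mix across blocks, is not the real difficulty and needs no per-block averaging. The paper uses your synchronized construction in its simplest form: odd phases add copies of $\mathbf{e}_1,\ldots,\mathbf{e}_k$, and even phases add copies of $\mathbf{e}_{k+1},\ldots,\mathbf{e}_{2k}$. It asserts that after each phase the output must have mass close to $k$ on the leading block, and a global trace computation gives this. Let $\mathbf{P}_1,\mathbf{P}_2$ project onto the two blocks, let each leading direction have mass $S_1\ge(1+C\eps)S_2$, and let $\mathbf{Q}$ be the output rank-$k$ projection. Its cost is $kS_1+kS_2-S_1\Trace(\mathbf{P}_1\mathbf{Q})-S_2\Trace(\mathbf{P}_2\mathbf{Q})$. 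Since $\Trace(\mathbf{P}_2\mathbf{Q})\le k-\Trace(\mathbf{P}_1\mathbf{Q})$, the cost exceeds $\mathsf{OPT}=kS_2$ by at least $(S_1-S_2)\bigl(k-\Trace(\mathbf{P}_1\mathbf{Q})\bigr)$. So a $(1+\eps)$-approximation forces $\Trace(\mathbf{P}_1\mathbf{Q})\ge k(1-1/C)$. The roles swap at the next phase end, and compressing to the block gives $\|\mathbf{Q}-\mathbf{Q}'\|_F^2\ge\frac1k\bigl(\Trace(\mathbf{P}_1(\mathbf{Q}-\mathbf{Q}'))\bigr)^2\ge k(1-2/C)^2$.

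The genuine gap is the one you raised and then left behind. That bound separates outputs at phase ends, but the recourse is $\sum_t\|\mathbf{Q}_t-\mathbf{Q}_{t-1}\|_F^2$, and a displacement of squared size $D$ split into $m$ equal steps costs about $D/m$. The paper's proof stops at the same point. Comparing phase-end outputs suffices for the counting measure $\sum_t|V_t\setminus V_{t-1}|$ of \secref{sec:mult:err}, where each of the $\Omega(k)$ displaced directions must be replaced at least once, but not for the squared distance.

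Your single-row fix does not survive your switch to unit rows. A phase then has $\Theta(\eps S)$ rows, and every output is feasible while $S_1,S_2$ are within a factor $1+\eps/3$ of each other. An algorithm that knows the input can rotate $\mathbf{e}_1$ to $\mathbf{e}_2$ across that stretch at cost $O(1/(\eps S))$, which sums to only $O(1/\eps)$ over the geometric phases. Your count $\frac1\eps\log(\eps n)$ also falls short of $\frac1\eps\log n$ when $\eps$ is near $\frac{\log n}{n}$.

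Your reason for dropping real-valued rows is backwards. Forcing $n$ single-step flips gives a stronger bound, and $n>\frac1\eps\log n$ is exactly the hypothesis $\eps>\frac{\log n}{n}$, so that gadget already settles $k=1$.

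For larger $k$, the shared slack $\eps\cdot\mathsf{OPT}$ is a real obstruction, not a technicality: about $k/C$ directions may lag at any time. So even with one real-valued row per direction per phase, the pairs $(\mathbf{e}_\ell,\mathbf{e}_{k+\ell})$ can be rotated in staggered windows of $\Theta(k/C)$ steps, at total cost $O(C)$ per phase. No averaging over blocks can therefore give $\Omega(1)$ per block per switch. Getting the factor $k$ in the squared metric needs a construction in which individual rows force large single-step jumps, and neither your plan nor the paper's phase-end comparison supplies one.
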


\paragraph{Empirical evaluations.}
We complement our theoretical results with a number of empirical evaluations in \secref{sec:exp}. 
Our results show that although our formal guarantees provide a worst-case analysis of the approximation cost of the low-rank solution output by our algorithm, the performance can be even (much) better in practice. 
Importantly, our results show that algorithms for online low-rank approximation such as Frequent Directions~\cite{GhashamiLPW16} do not achieve good recourse, motivating the study of algorithms specifically designed for consistent low-rank approximation. 

\paragraph{Organization of the paper.} 
We give the linear recourse algorithms in \secref{sec:simple} and conduct empirical evaluations in \secref{sec:exp}.  
We give our result for integer-valued matrices in \secref{sec:mult:err}. 
We defer all proofs to the full appendix, and specifically the lower bound to \secref{sec:lb}. 
The reader may also find it helpful to consult \secref{sec:prelims} for standard notation and additional preliminaries used in our paper. 

\subsection{Related Work}
\subsubsection{Related Work: Consistency}
We remark that there has been a flurry of recent work studying consistency for various problems. 
The problem of consistent clustering was initialized by \cite{LattanziV17}, who gave an algorithm with recourse $k^2\cdot\polylog(n)$ for $k$-clustering on insertion-only streams, i.e., the incremental setting.  
This recourse bound was subsequently improved to $k\cdot\polylog(n)$ by \cite{FichtenbergerLN21}, while a version robust to outliers was presented by \cite{GuoKLX21}. 
The approximation guarantee was also recently improved to $(1+\eps)$ by \cite{ChanJWZ25}. 
A line of recent work has studied $k$-clustering in the dynamic setting~\cite{LackiHGRJ23,BhattacharyaCGL24,ForsterS25}, where points may be inserted and deleted. 
Rather than $k$-clustering, \cite{Cohen-AddadLMP22} studied consistency for correlation clustering, where edges are positively or negatively labeled, and the goal is to form as many clusters as necessary to minimize the number of negatively labeled edges within a cluster and the number of positively labeled edges between two different clusters. 
For problems beyond clustering, a line of work has also focused on submodular maximization~\cite{JagharghKLV19,DuttingFLNSZ24a,DuttingFLNSZ24b}.

\paragraph{Consistent clustering.}
Another approach might be to adapt ideas from the consistent clustering literature. 
In this setting, a sequence of points in $\mathbb{R}^d$ arrive one-by-one, and the goal is to maintain a constant-factor approximation to the $(k,z)$-clustering cost, while minimizing the total recourse. 
Here, the recourse incurred at a time $t$ is the size of the symmetric difference between the clustering centers selected at time $t-1$ and at time $t$. 
The only algorithm to achieve recourse subquadratic in $k$ is the algorithm by \cite{FichtenbergerLN21}, which attempts to create robust clusters at each time by looking at geometric balls with increasing radius around each existing point to pick centers that are less sensitive to possible future points. 
Unfortunately, such a technique utilizes the geometric properties implicit in the objective of $k$-clustering and it is not obvious what the corresponding analogues should be for low-rank approximation. 

\subsubsection{Related Work: Strawman Approaches}
In this section, we briefly describe a number of existing techniques in closely related models and provide intuition on why they do not suffice for our setting. 

\paragraph{Frequent directions and online ridge leverage score sampling.}
The most natural approach would be to apply existing algorithms from the streaming literature for low-rank approximation. 
The two most related works are the deterministic Frequent Directions work by \cite{GhashamiLPW16} and the (online) ridge leverage score sampling procedure popularized by \cite{CohenMM17,BravermanDMMUWZ20}. 
Both procedures maintain a small number of rows that capture the ``important'' directions of the matrix at all times. 
Hence to report a near-optimal rank-$k$ approximation at each time, these algorithms simply return the top $k$ right singular vectors of the singular value decomposition of the matrix stored by each algorithm. 
However, one could easily envision a situation in which the $k$-th and $(k+1)$-th largest singular vectors repeatedly alternate, incurring recourse at each step. 
For example, suppose $k=1$ and at all times $2t$ for integral $t>0$, the top singular vectors are $(2t, 0)$ and $(0, 2t-\eps)$, but at all times $2t+1$ for integral $t\ge0$, the top singular vectors are $(2t+1-\eps,0)$ and $(0,2t+1)$. 
Then at all times $2t$, the best rank-$k$ solution for $k=1$ would be the elementary vector $\mathbf{e}_1$ while at all times $2t+1$, the best rank-$k$ solution would be the elementary vector $\mathbf{e}_2$. 
These algorithms would incur recourse $n$, whereas even an algorithm that never changes the initial vector $\mathbf{e}_2$ would incur recourse $0$. 
Thus, these algorithms seem to fail catastrophically, i.e., not even provide a $\poly(n)$-multiplicative approximation to the recourse, even for simple inputs. 

One may observe that our goal is to only upper bound the total recourse, rather than to achieve a multiplicative approximation to the recourse. 
Indeed for this purpose, the online ridge leverage sampling technique provides some gain. 
In particular, \cite{BravermanDMMUWZ20} showed that over the entirety of the stream, at most $\frac{k}{\eps^2}\cdot\polylog\left(n,d,\frac{1}{\eps}\right)$ rows will be sampled into the sketch matrix in total, and moreover the sketch matrix will accurately capture the residual for the projection onto any subspace of dimension $k$. 
Thus to achieve $\O{k^2}$ recourse, it suffices to simply recompute the top-$k$ singular vectors each time a new row is sampled by the online ridge leverage sampling procedure. 

\paragraph{Singular value decomposition.} 
Note that the previous example also shows that more generally, it does not suffice to simply output the top-$k$ right singular vectors of the singular value decomposition (SVD). 
However, one might hope that it suffices to replace just a single direction in the SVD each time a row is sampled into the sketch matrix by online ridge leverage score sampling. 
Unfortunately, it seems possible that an approximately optimal solution from a previous step could require all $k$ factors to be replaced by the arrival of a single row. 
Suppose for example, that the factor $\bV^{(1)}$ consisting of the elementary vectors $\mathbf{e}_{k+1},\ldots,\mathbf{e}_{2k}$ achieves the same loss as the factor $\bV^{(2)}$ consisting of the elementary vectors $\mathbf{e}_1,\ldots,\mathbf{e}_k$. 
Now if the next row is non-zero exactly in the coordinates $1,\ldots,k$, then the top $k$ space could change entirely, from $\bV^{(1)}$ to $\bV^{(2)}$. 
While this worst-case input is unavoidable, we show this can only happen a small number of times. 
It is more problematic when only one of these factors drastically change, while the other $k-1$ factors only change by a little, but we still output $k$ completely new factors. 
Our algorithm avoids this by carefully choosing the factors to replace based on casework on the corresponding singular values. 

\subsection{Technical Overview}
In this section, we provide intuition for our main results, summarizing our algorithms, various challenges, as well as natural other approaches and why they do not work. 

\paragraph{Warm-up: additive error algorithm.}
As a simple warm-up, we first describe our algorithm that achieves additive error at most $\eps\cdot\|\bA^{(t)}\|_F^2$ across all times $t\in[n]$, while only incurring recourse $\O{\frac{k}{\eps}\log(ndM)}$, corresponding to \thmref{thm:lra:frob}. 
This algorithm is quite simple. 
We maintain the squared Frobenius norm of the matrix $\bA^{(t)}$ across all times $t\in[n]$. 
We also maintain the same set of factors provided the squared Frobenius norm has not increased by a factor of $(1+\eps)$ since the previous time we changed the set of factors. 
When the squared Frobenius norm has increased by a factor of $(1+\eps)$ at a time $t$ since the previous time we changed the set of factors, then we simply use the singular value decomposition of $\bA^{(t)}$ to set $\bV^{(t)}$ to be the top $k$ right singular vectors of $\bA^{(t)}$. 
Since we change the entire set of factors, this process can incur recourse cost at most $k$. 

The correctness follows from the observation that each time we reset $\bV^{(t)}$, we find the optimal solution at time $t$. Now, over the next few times after $t$, our solution can only degrade by the squared Frobenius norm of the submatrix formed by the incoming rows, which is less than an $\eps$-fraction of the squared Frobenius norm of the entire matrix, due to the requirement that we recompute $\bV^{(t)}$ each time the squared Frobenius norm has increased by a $(1+\eps)$ factor. 
Since the Frobenius norm can only increase by a multiplicative $(1+\eps)$ factor a total of at most $\O{\frac{1}{\eps}\log(ndM)}$ times and we incur recourse cost $k$ each time, then the resulting recourse is at most the desired $\O{\frac{k}{\eps}\log(ndM)}$. 

\paragraph{Stream reduction for relative error algorithm.}
We now discuss the goal of achieving $\frac{k^{3/2}}{\eps^2}\cdot\polylog(ndM)$ recourse while maintaining a relative $(1+\eps)$-multiplicative approximation to the optimal low-rank approximation cost at all times, i.e., \thmref{thm:main}. 
We first utilize online ridge-leverage score sampling~\cite{BravermanDMMUWZ20} to sample $\frac{k}{\eps}\cdot\polylog(ndM)$ rows of the stream $\calS$ of rows $\ba_1,\ldots,\ba_n$ on-the-fly, to form a stream $\calS'$ consisting of reweighted rows $\bb_1,\ldots,\bb_m$ of $\bA$ with $m=\frac{k}{\eps}\cdot\polylog(ndM)$. 
By the guarantees of online ridge-leverage score sampling, to achieve a $(1+\eps)$-approximation to the matrix $\bA^{(t)}$ consisting of the rows $\ba_1,\ldots,\ba_t$, it suffices to achieve a $(1+\O{\eps})$-approximation to the matrix $\bB^{(t')}$ consisting of the rows $\bb_1,\ldots,\bb_{t'}$ that have arrived by time $t$. 
Note that since $\bB$ is a submatrix of $\bA$, we have $t'\le t$. 
Moreover, the rows of $\bA$ that are sampled into $\bB$ are only increased by at most a $\poly(n)$ factor, so we can assume that the magnitude of the entries is still bounded polynomially by $n$. 
Thus it suffices to perform consistent low-rank approximation on the matrix $\bB$ instead. 
Hence for the remainder of the discussion, we assume that the stream length is $\frac{k}{\eps}\cdot\polylog(ndM)$ rather than $n$. 

\paragraph{Relative error algorithm on reduced stream.}
We now describe our algorithm for $(1+\eps)$-multiplicative relative error at all times of the stream of length $\frac{k}{\eps}\cdot\polylog(ndM)$. 
At some time $s$ in the stream, we use the singular value decomposition of $\bA^{(s)}$ to compute the top $k$ right singular values of $\bA^{(s)}$, which we then set to be our factor $\bV^{(s)}$. 
We can do this each time the optimal low-rank approximation cost has increased by a multiplicative $(1+\O{\eps})$-factor since the previous time we set our factor to be $\bV^{(s)}$. 
We first discuss how to maintain a $(1+\eps)$-approximation with the desired recourse in between the times at which the optimal low-rank approximation cost has increased by a $(1+\eps)$-multiplicative factor. 

Towards this goal, we perform casework on the contribution of the bottom $\sqrt{k}$ singular values within the top $k$ right singular values of $\bV^{(s)}$. 
Namely, if $\sum_{i=k-\sqrt{k}}^k\sigma_i^2(\bA^{(s)})$ is ``small'', then intuitively, the corresponding singular vectors can be replaced without substantially increasing the error. 
Hence in this case, we can replace one of the bottom $\sqrt{k}$ singular vectors with the new row each time a new incoming row arrives. 
This procedure will incur $\sqrt{k}$ total recourse over the next $\sqrt{k}$ updates, after which at time $t$ we reset the factor to be the top $k$ right singular vectors of the matrix $\bA^{(t)}$. 
Therefore, we incur recourse $\O{k}$ across $\sqrt{k}$ time steps. 

On the other hand, if $\sum_{i=k-\sqrt{k}}^k\sigma_i^2(\bA^{(s)})$ is ``large'', then the optimal low-rank approximation factors cannot be projected away too much from these directions, since otherwise the optimal low-rank approximation cost would significantly increase. 
Thus, we can simply choose the optimal set of top right $k$ singular vectors at each time, because there will be substantial overlap between the new subspace and the old subspace. 
In particular, if we choose our threshold to be $\sum_{i=k-\sqrt{k}}^k\sigma_i^2(\bA^{(s)})$ to be an $\O{\eps}$-factor of the optimal cost, then we show that incurring $\sqrt{k}$ recourse will increase the low-rank approximation cost by $(1+\eps)$, which violates the assumption that we consider times during which the optimal low-rank approximation cost has not increased by a $(1+\eps)$-multiplicative factor. 
Therefore, the recourse is at most $\sqrt{k}$ across each time. 

In summary, between the times at which the optimal low-rank approximation cost has increased by a $(1+\eps)$-multiplicative factor, we incur at most $\sqrt{k}$ recourse for each time. 
Since the stream length is $\frac{k}{\eps}\cdot\polylog(ndM)$, we then incur $\frac{k^{1.5}}{\eps}\cdot\polylog(ndM)$ total recourse across these times, which is our desired bound. 
It remains to bound the total recourse at times $s$ when the optimal low-rank approximation cost has increased by a $(1+\eps)$ multiplicative factor, as we reset our solution to be the top $k$ right singular values, incurring recourse $k$ at each of these times. 
Because the Frobenius norm is at most $\poly(ndM)$, a natural conclusion would be that the optimal low-rank approximation cost can increase by a $(1+\eps)$ multiplicative factor at most $\O{\frac{1}{\eps}\log(ndM)}$ times. 
Unfortunately, this is not the case. 

\paragraph{Anti-Hadamard matrices.}
Problematically, there exist constructions of anti-Hadamard integer matrices, which have dimension $n\times d$ but optimal low-rank cost as small as $\exp(-\O{k})$. 
Hence, the optimal cost can double $\O{k}$ times, thereby incurring total recourse $\O{k^{2.5}}$, which is undesirably large. 
Instead, we show that when the optimal low-rank cost is exponentially small, then the rank of the matrix must also be quite small, meaning that the recourse of our algorithm cannot be as large as in the full-rank case. 
Namely, we generalize a result by \cite{ClarksonW09} to show that if an integer matrix $\bA\in\mathbb{Z}^{n\times d}$ has rank $r>k$ and entries bounded in magnitude by $M$, then its optimal low-rank approximation cost is at least $(ndM^2)^{-\frac{k}{r-k}}$.
Hence, we only need to consider anti-Hadamard matrices when the rank is less than $2k$. 

Fortunately, when the rank is at most $r<2k$, we can apply a more fine-grained analysis for the above cases, since there are only $r$ vectors spanning the row span, so the recourse in many of the previous operations can be at most $r-k$. 
In particular, for $r<k$, we can simply maintain the entire row span. 
We then argue that if the rank $r$ of the matrix is between $k+2^i$ and $k+2^{i+1}$, then there can be at most $\O{\frac{1}{\eps}\frac{k}{2^i}}$ epochs before the cost of the optimal low-rank approximation is at least $(ndM^2)^{-100}$. 
Moreover, the recourse incurred by recomputing the top eigenspace is at most $r-k\le2^{i+1}$, so that the total recourse for the times where the rank of the matrix is between $k+2^i$ and $k+2^{i+1}$ is at most $\O{\frac{k}{\eps}\log(ndM)}$. 
It then follows that the total recourse across all times before the rank becomes at least $2k$ is at most $\O{\frac{k}{\eps}\log^2(ndM)}$. 

A keen reader might ask whether $k^{3/2}$ is best possible recourse for our algorithmic approach. 
To that end, observe that if we change a large number of factors per update, then the total recourse will increase. 
Let us suppose that there are roughly $k$ updates, which is the size of the online projection-cost preserving coreset. 
Now if we change $r$ factors per update, then the total recourse will be at least $kr$. 
On the other hand, if we change a smaller number of factors per update, then we will need to recompute every $\frac{k}{r}$ steps. 
Each recompute takes $k$ recourse, for a total of $\frac{k^2}{r}$ overall recourse. 
Hence, the maximum of the quantities $kr$ and $\frac{k^2}{r}$ is minimized at $r=\sqrt{k}$, giving recourse $kr=\frac{k^2}{r}=k^{3/2}$.

\paragraph{Recourse lower bound.}
Our lower bound construction that shows recourse $\Omega\left(\frac{k}{\eps}\log\frac{n}{k}\right)$ is necessary, corresponding to \thmref{thm:lb}, is simple. 
We divide the stream into $\Theta\left(\frac{1}{\eps}\log\frac{n}{k}\right)$ phases where the optimal low-rank approximation cost increases by a multiplicative $(1+\O{\eps})$ factor between each phase. 
Moreover, the optimal solution to the $i$-th phase is orthogonal to the optimal solution to the $(i-1)$-th phase, so that depending on the parity of the phase $i$, the optimal solution is either the first $k$ elementary vectors, or the elementary vectors $k+1$ through $2k$. 
Therefore, a multiplicative $(1+\eps)$-approximation at all times requires incurring $\Omega(k)$ recourse between each phase, which shows the desired $\Omega\left(\frac{k}{\eps}\log\frac{n}{k}\right)$ lower bound. 

\subsection{Preliminaries}
\seclab{sec:prelims}
We use $[n]$ to denote the set $\{1,\ldots,n\}$. 
We use $\poly(n)$ to denote a fixed polynomial in $n$, which can be adjusted using constants in the parameter settings. 
We use $\polylog(n)$ to denote $\poly(\log n)$. 
We use $\circ$ to denote the vertical concatentation of rows $\ba_1,\ba_2\in\mathbb{R}^d$, so that $\ba_1\circ\ba_2=\begin{bmatrix}\ba_1\\\ba_2\end{bmatrix}$. 
We say an event $\calE$ occurs with high probability if $\PPr{\calE}\ge 1-\frac{1}{\poly(n)}$. 
Recall that the Frobenius norm of a matrix $\bA\in\mathbb{R}^{n\times d}$ is defined by $\|\bA\|_F=\left(\sum_{i=1}^n\sum_{j=1}^dA_{i,j}^2\right)^{1/2}$. 

The singular value decomposition of a matrix $\bA\in\mathbb{R}^{n\times d}$ with rank $r$ is the decomposition $\bA=\bU\bSigma\bV$ for $\bU\in\mathbb{R}^{n\times r}$, $\bSigma\in\mathbb{R}^{r\times r}$, $\bV\in\mathbb{R}^{r\times d}$, where the columns of $\bU$ are orthonormal, the rows of $\bV$ are orthonormal, and $\bSigma$ is a diagonal matrix whose entries correspond to the singular values of $\bA$.
The trace of a square matrix $\mathbf{Q} \in \mathbb{R}^{n \times n}$, denoted $\Trace(\mathbf{Q})$, is defined as the sum of its diagonal entries, i.e., $\Trace(\mathbf{Q}) = \sum_{i=1}^{n} Q_{i,i}$.
Equivalently, $\Trace(\mathbf{Q})$ equals the sum of the eigenvalues of $\mathbf{Q}$ counted with multiplicity.

Next, we recall an important property about the optimal solution for our formulation of low-rank approximation, i.e., with Frobenius loss. 
\begin{theorem}[Eckart-Young-Mirsky theorem]
\cite{eckart1936approximation,mirsky1960symmetric}
\thmlab{thm:lra:opt}
Let $\bA\in\mathbb{R}^{n\times d}$ with rank $r$ have singular value decomposition $\bA=\bU\bSigma\bV$ for $\bU\in\mathbb{R}^{n\times r}$, $\bSigma\in\mathbb{R}^{r\times r}$, $\bV\in\mathbb{R}^{r\times d}$ and singular values $\sigma_1(\bA)\ge\sigma_2(\bA)\ge\ldots\ge\sigma_d(\bA)$. 
Let $\bX$ be the top $k$ right singular vectors of $\bA$, i.e., the top $k$ rows of $\bV$, breaking ties arbitrarily. 
Then an optimal rank $k$ approximation of $\bA$ is $\bA\bX^\top\bX$ and the cost is $\|\bA-\bA\bX^\top\bX\|_F^2=\sum_{i=k+1}^d\sigma_i^2(\bA)$. 
\end{theorem}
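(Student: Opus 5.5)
The statement immediately above is the Eckart--Young--Mirsky theorem, which the paper cites rather than proves; still, here is how I would establish it. The approach is to reduce the cost to a trace, then maximize that trace over rank-$k$ projections by an eigenvalue interlacing (Ky Fan) argument.

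\textbf{Reduction to a trace.} For any $\bX\in\mathbb{R}^{k\times d}$ with orthonormal rows, set $\bP=\bX^\top\bX$. This is an orthogonal projection of rank $k$. By the Pythagorean identity applied row by row,
\[\|\bA-\bA\bP\|_F^2=\|\bA\|_F^2-\|\bA\bP\|_F^2=\sum_{i}\sigma_i^2(\bA)-\Trace(\bP\bA^\top\bA\bP).\]
Since $\bP$ is idempotent and the trace is cyclic, $\Trace(\bP\bA^\top\bA\bP)=\Trace(\bA^\top\bA\bP)$. Minimizing the cost is therefore equivalent to maximizing $\Trace(\bA^\top\bA\bP)$ over rank-$k$ projections $\bP$.

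\textbf{Upper bound on the trace.} Write $\bA^\top\bA=\sum_i\sigma_i^2\bv_i^\top\bv_i$, where $\bv_i$ are the rows of $\bV$. Then
\[\Trace(\bA^\top\bA\bP)=\sum_i\sigma_i^2 w_i,\qquad w_i=\|\bP\bv_i^\top\|_2^2.\]
The weights satisfy $0\le w_i\le 1$. Completing $\{\bv_i\}$ to an orthonormal basis of $\mathbb{R}^d$, we also get $\sum_i w_i\le\Trace(\bP)=k$.

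A linear function with nonincreasing coefficients $\sigma_1^2\ge\sigma_2^2\ge\cdots$, maximized over such box-and-budget constraints, attains its maximum by putting weight $1$ on the first $k$ coordinates. The justification is an exchange argument: moving mass from a later coordinate to an earlier one never decreases the sum. Hence $\Trace(\bA^\top\bA\bP)\le\sum_{i=1}^k\sigma_i^2$. This step is the only nontrivial one, and I expect it to be the main obstacle. The key point to check is that the constraint $w_i\le 1$ holds; it does, because $\bP$ is a contraction and $\bv_i$ is a unit vector.

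\textbf{Achievability and conclusion.} Taking $\bX$ to be the top $k$ rows of $\bV$ gives $w_i=1$ for $i\le k$ and $w_i=0$ otherwise, so the upper bound is attained. Ties among the $\sigma_i$ do not affect the value, which justifies breaking them arbitrarily. The optimal cost is therefore
\[\|\bA\|_F^2-\sum_{i\le k}\sigma_i^2=\sum_{i>k}\sigma_i^2(\bA).\]

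Finally, to show that $\bA\bX^\top\bX$ is optimal among all rank-$k$ matrices $\bB$, not just among projections of $\bA$: the row space of any such $\bB$ lies in some rank-$k$ projection $\bP$. Projecting each row of $\bA$ onto that row space can only reduce the error, so $\|\bA-\bB\|_F\ge\|\bA-\bA\bP\|_F$. The projection case above therefore covers all rank-$k$ approximations.
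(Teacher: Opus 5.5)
Your proof is correct and complete. The paper does not prove this statement at all: it quotes it as a classical result with citations to Eckart--Young and Mirsky. So the only comparison is between a black-box citation and your self-contained argument.

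Your route is the standard proof via Ky Fan's maximum principle. You reduce the cost to $\|\bA\|_F^2-\Trace(\bA^\top\bA\bP)$, write the trace as $\sum_i\sigma_i^2w_i$ with $0\le w_i\le 1$ and $\sum_i w_i\le\Trace(\bP)=k$, and then maximize. Despite your opening sentence, no interlacing is actually used.

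The step you flag as the main obstacle closes in one line. For $i\le k$ we have $\sigma_i^2\ge\sigma_k^2$ and $w_i-1\le 0$; for $i>k$ we have $\sigma_i^2\le\sigma_k^2$ and $w_i\ge 0$. Hence
\[\sum_i\sigma_i^2w_i-\sum_{i\le k}\sigma_i^2=\sum_{i\le k}\sigma_i^2(w_i-1)+\sum_{i>k}\sigma_i^2w_i\le\sigma_k^2\Bigl(\sum_i w_i-k\Bigr)\le 0.\]
This also covers the case $\sum_i w_i<k$, which a pure mass-exchange argument leaves implicit.

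What your version buys over the citation is that it makes explicit two facts the paper uses silently:
\begin{itemize}
\item The value $\sum_{i>k}\sigma_i^2$ does not depend on how ties are broken.
\item By your last paragraph, minimizing over $\bA\bX^\top\bX$ with orthonormal $\bX$ (which is how the paper defines $\OPT_t$) is the same as minimizing over all rank-$k$ matrices. This is what underlies the paper's remark that the optimal left factor is $\bA\bV^\top$.
\end{itemize}
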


As a corollary to \thmref{thm:lra:opt}, we have that an algorithmic procedure to compute an optimal rank $k$ approximation is just to take the top $k$ right singular vectors of $\bA$, c.f., procedure $\Recluster(\bA,k)$ in \algref{alg:recluster}, though faster methods for approximate SVD can also be used. 

\begin{corollary}
\corlab{cor:recluster:correct}
There exists an algorithm $\Recluster(\bA,k)$ that outputs a set of orthonormal rows $\bX$ that produces the optimal rank $k$ approximation to $\bA$. 
\end{corollary}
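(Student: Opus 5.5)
The corollary follows directly from \thmref{thm:lra:opt}, so the plan is to describe the procedure $\Recluster(\bA,k)$ explicitly and check that its output has the two required properties.

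\textbf{The procedure.} Given $\bA\in\mathbb{R}^{n\times d}$, $\Recluster(\bA,k)$ computes a singular value decomposition $\bA=\bU\bSigma\bV$. For instance, it can compute the eigendecomposition of the symmetric PSD matrix $\bA^\top\bA$. Its eigenvectors are the right singular vectors, and its eigenvalues are the squared singular values $\sigma_i^2(\bA)$. The procedure then sorts the singular values in nonincreasing order, breaking ties arbitrarily, and outputs $\bX$ consisting of the top $k$ rows of $\bV$.

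\textbf{Orthonormality.} The rows of $\bV$ are orthonormal by the definition of the SVD, so any subset of them is also orthonormal.

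\textbf{Optimality.} \thmref{thm:lra:opt} states that $\bA\bX^\top\bX$ is an optimal rank-$k$ approximation, with cost $\sum_{i=k+1}^d\sigma_i^2(\bA)$. Hence $\bX$ attains $\OPT=\min_{\bV}\|\bA-\bA\bV^\top\bV\|_F^2$.

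\textbf{The rank-deficient case.} If $\mathrm{rank}(\bA)=r<k$, then $\bV$ has only $r$ rows. In that case the procedure pads $\bX$ with $k-r$ arbitrary orthonormal vectors that are orthogonal to the row span of $\bA$; these exist whenever $k\le d$. The padded $\bX$ still has orthonormal rows. Its cost is $0$, because the row span of $\bA$ already lies in the span of $\bX$, and $0$ is optimal.

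The only step needing any care is this degenerate case, and it is routine. The existence of an exact SVD is a standard fact, following from the spectral theorem applied to $\bA^\top\bA$, so no other obstacle arises.
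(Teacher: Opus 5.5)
Your proposal is correct and matches the paper's argument: compute an SVD of $\mathbf{A}$, return its top $k$ right singular vectors, and conclude optimality from the Eckart--Young--Mirsky theorem (\thmref{thm:lra:opt}). The only difference is in the rank-deficient case, where the paper's \textsc{Recluster} procedure returns the first $\min(r,k)$ rows of $\mathbf{V}$ rather than padding to $k$ orthonormal rows; both choices give cost $0$.
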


\begin{algorithm}[!htb]
\caption{$\Recluster(\bA,k)$, i.e., Truncated SVD}
\alglab{alg:recluster}
\begin{algorithmic}[1]
\REQUIRE{Matrix $\bA\in\mathbb{R}^{n\times d}$, rank parameter $k$}
\ENSURE{Top $k$ right singular vectors of $\bA$}
\STATE{Let $r$ be the rank of $\bA$}
\STATE{Let $\bU\in\mathbb{R}^{n\times r},\bSigma\in\mathbb{R}^{r\times r},\bV\in\mathbb{R}^{r\times d}$ be the singular value decomposition of $\bA=\bU\bSigma\bV$}
\STATE{Return the first $\min(r,k)$ rows of $\bV$}
\end{algorithmic}
\end{algorithm}
We now justify the motivation for our definition of recourse. 
\begin{lemma}
For subspaces $\bR$ and $\bT$ of rank $k$, let their corresponding orthogonal projection matrices be $\bP$ and $\bQ$. 
Then there exist constants $C_1,C_2>0$ such that $C_1(\|\bP-\bP\bQ\|_F^2+\|\bQ-\bQ\bP\|_F^2)\le\Recourse(\bR,\bT)\le C_2(\|\bP-\bP\bQ\|_F^2+\|\bQ-\bQ\bP\|_F^2)$. 
\end{lemma}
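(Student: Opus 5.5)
Since $\Recourse(\bR,\bT)=\|\bP-\bQ\|_F^2$, the plan is to express both sides of the claimed inequality through traces of products of $\bP$ and $\bQ$. I will use only that $\bP,\bQ$ are symmetric and idempotent with $\Trace(\bP)=\Trace(\bQ)=k$. In fact, I expect both sides to agree exactly, so that $C_1=C_2=1$ works.

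First, I will expand $\|\bP-\bQ\|_F^2=\Trace((\bP-\bQ)^\top(\bP-\bQ))=\Trace(\bP)+\Trace(\bQ)-2\Trace(\bP\bQ)$, using $\bP^2=\bP$, $\bQ^2=\bQ$ and $\Trace(\bQ\bP)=\Trace(\bP\bQ)$. This gives $2k-2\Trace(\bP\bQ)$.

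Second, I will compute $\|\bP-\bP\bQ\|_F^2=\|\bP(\bI-\bQ)\|_F^2=\Trace((\bI-\bQ)\bP\bP(\bI-\bQ))$. By cyclicity and idempotence of $\bI-\bQ$ and $\bP$, this equals $\Trace(\bP(\bI-\bQ))=k-\Trace(\bP\bQ)$. Symmetrically, $\|\bQ-\bQ\bP\|_F^2=k-\Trace(\bQ\bP)=k-\Trace(\bP\bQ)$. Summing the two gives $2k-2\Trace(\bP\bQ)$, which equals $\Recourse(\bR,\bT)$ exactly.

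Hence the statement holds with $C_1=C_2=1$. The identity matrix is written explicitly here; in the paper's notation one would write $\bP-\bP\bQ$ directly rather than using $\bI$, and the trace manipulations are unchanged.

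The only step needing care is the trace bookkeeping. In particular, $\Trace(\bP\bQ\bP)=\Trace(\bP\bQ)$ follows from cyclicity, and $\bP\bQ$ itself need not be symmetric. I will also note that equal rank $k$ is what makes the two one-sided terms equal, though the sum identity holds even without it. No spectral or principal-angle argument is needed. An alternative check would write each side as $\sum_i \sin^2\theta_i$ (times 2) via principal angles, but the trace computation is shorter.
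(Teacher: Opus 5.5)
Your proof is correct, and it is sharper than the paper's. Your trace computations show that $\|\bP-\bP\bQ\|_F^2+\|\bQ-\bQ\bP\|_F^2$ and $\|\bP-\bQ\|_F^2$ are both exactly $2k-2\Trace(\bP\bQ)$, so $C_1=C_2=1$.

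The paper takes a more roundabout route and ends up with $C_1=\tfrac12$ and $C_2=2$:
\begin{itemize}
\item For the upper bound on $\Recourse(\bR,\bT)$, it uses the triangle inequality $\|\bP-\bQ\|_F\le\|\bP-\bP\bQ\|_F+\|\bP\bQ-\bQ\|_F$. This naturally produces the term $\bP\bQ-\bQ$ rather than $\bQ-\bQ\bP$. The paper therefore has to prove the auxiliary identity $\|\bQ-\bQ\bP\|_F=\|\bQ-\bP\bQ\|_F$ via transposes and cyclicity of trace. Squaring then gives $C_2=2$.
\item For the lower bound, it computes the sum of the two one-sided terms as $2k-2\Trace(\bP\bQ)$. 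It then rewrites this as $2k+2\|\bQ\bP\|_F^2-4\Trace(\bP\bQ)$ and uses $\|\bQ\bP\|_F^2\le k$ to bound it by $2\|\bP-\bQ\|_F^2$, giving $C_1=\tfrac12$.
\end{itemize}

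The paper's trace computation already contains the exact identity. It simply never writes out $\|\bP-\bQ\|_F^2=2k-2\Trace(\bP\bQ)$ on its own, and so it loses a factor of $2$ in each direction.

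Your version gains three things. It avoids both the triangle inequality and the auxiliary transpose identity. It gives the tight constants. And, as you observe, it shows the sum identity does not depend on the two subspaces having equal rank.

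The paper's argument buys nothing extra beyond matching the loose constant-factor form of the statement.
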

\begin{proof}
By definition, we have $\Recourse(\bR,\bT)=\|\bP-\bQ\|_F^2$. 
By the triangle inequality, we have $\|\bP-\bQ\|_F\le\|\bP-\bP\bQ\|_F+\|\bP\bQ-\bQ\|_F$. 
Observe that $\|\bQ-\bQ\bP\|_F^2=\|\bQ-\bP\bQ\|_F^2$ because the left-hand side is the trace of $(\bQ-\bQ\bP)^\top(\bQ-\bQ\bP)$, which equals the trace of $(\bQ-\bP\bQ)(\bQ-\bQ\bP)$, since $\bP^\top=\bP$ and $\bQ^\top=\bQ$ for projection matrices $\bQ$ and $\bP$. 
By the cyclic property of trace, the trace of $(\bQ-\bP\bQ)(\bQ-\bQ\bP)$ thus equals the trace of $(\bQ-\bQ\bP)(\bQ-\bP\bQ)$, which by the same argument, is the trace of the right-hand side. 
Hence it follows that $\|\bQ-\bQ\bP\|_F^2=\|\bQ-\bP\bQ\|_F^2$, as desired. 
Thus, we have $\|\bP-\bQ\|_F\le\|\bP-\bP\bQ\|_F+\|\bQ\bP-\bQ\|_F$. 

Next, observe that 
\begin{align*}
\|\bP-\bP\bQ\|_F^2+\|\bP\bQ-\bQ\|_F^2
&= \Trace(\bP) + \Trace(\bQ) - 2\Trace(\bP\bQ) \\
&= 2k - 2\Trace(\bP\bQ),
\end{align*}
since $\bP^2=\bP$, $\bQ^2=\bQ$ are symmetric idempotents, and $\Trace(\bP)=\Trace(\bQ)=k$.  
Similarly, 
\begin{align*}
2k+2\|\bQ\bP\|_F^2-4\Trace(\bP\bQ)
&= 2k + 2\Trace(\bP\bQ\bP) - 4\Trace(\bP\bQ) \\
&= 2k - 2\Trace(\bP\bQ),
\end{align*}
using $\|\bQ\bP\|_F^2=\Trace(\bP\bQ\bP)=\Trace(\bP\bQ)$.  
Thus, we have $\|\bP-\bP\bQ\|_F^2+\|\bP\bQ-\bQ\|_F^2=2k+2\|\bQ\bP\|_F^2-4\cdot\Trace(\bP^\top\bQ)$. 
The latter quantity is at most $4k-4\cdot\Trace(\bP^\top\bQ)=2(2k-2\cdot\Trace(\bP^\top\bQ))$, which is just $2\|\bP-\bQ\|_F^2$. 
\end{proof}
Thus, it suffices to work with the less natural but perhaps more mathematically accessible definition of $\|\bR-\bR\bT^\dagger\bT\|_F^2+\|\bT-\bT\bR^\dagger\bR\|_F^2$, i.e., the symmetric difference of the mass of the two subspaces, as a notion of recourse. 

Finally, we recall the following two standard facts from numerical linear algebra. 
\begin{theorem}[Min-max theorem]
\thmlab{thm:min:max}
Let $\bA\in\mathbb{R}^{n\times d}$ be a matrix with singular values $\sigma_1(\bA)\ge\sigma_2(\bA)\ge\ldots\ge\sigma_d(\bA)$ and let $\xi_j(\bA)=\sigma_{d-j+1}(\bA)$ for all $j\in[d]$, so that $\xi_1(\bA)\le\ldots\le\xi_d(\bA)$ is the reverse spectrum of $\bA$. 
Then for any subspace $\bV$ of $\bA$ with dimension $k$, there exist unit vectors $\bx,\by\in\bV$ such that
\[\|\bV\bx\|_2^2\le\sigma_k^2(\bA),\qquad\|\bV\by\|_2^2\ge\xi_k^2(\bA).\]
\end{theorem}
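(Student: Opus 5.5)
We read the quantities $\|\bV\bx\|_2^2$ and $\|\bV\by\|_2^2$ in the statement as $\|\bA\bx\|_2^2$ and $\|\bA\by\|_2^2$, where $\bx,\by$ range over unit vectors in the $k$-dimensional subspace $\bV\subseteq\mathbb{R}^d$. This is the standard Courant--Fischer form, and the only reading under which the inequalities are meaningful. The plan is a dimension-counting argument using the singular value decomposition.

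Let $\bv_1,\ldots,\bv_d$ be an orthonormal basis of right singular vectors of $\bA$, with $\|\bA\bv_i\|_2=\sigma_i(\bA)$; pad with zero singular values if $\mathrm{rank}(\bA)<d$. For any unit vector $\bz=\sum_i c_i\bv_i$, orthogonality of the left singular vectors gives
\[\|\bA\bz\|_2^2=\sum_i c_i^2\sigma_i^2(\bA).\]
So $\|\bA\bz\|_2^2$ is a convex combination of the $\sigma_i^2(\bA)$ with weights $c_i^2$ supported on the coordinates where $\bz$ has mass.

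For $\bx$, let $W=\mathrm{span}(\bv_k,\ldots,\bv_d)$, which has dimension $d-k+1$. Since $\dim\bV+\dim W=d+1>d$, the intersection $\bV\cap W$ is nontrivial, so we can take a unit vector $\bx$ in it. Such an $\bx$ has nonzero coefficients only on indices $i\ge k$, hence
\[\|\bA\bx\|_2^2\le\max_{i\ge k}\sigma_i^2(\bA)=\sigma_k^2(\bA).\]

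For $\by$, take $W'=\mathrm{span}(\bv_1,\ldots,\bv_{d-k+1})$, again of dimension $d-k+1$. It intersects $\bV$ nontrivially, and a unit $\by$ in the intersection satisfies
\[\|\bA\by\|_2^2\ge\min_{i\le d-k+1}\sigma_i^2(\bA)=\sigma_{d-k+1}^2(\bA)=\xi_k^2(\bA).\]

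The only delicate point is the indexing convention for $\xi$, namely confirming that $\xi_k=\sigma_{d-k+1}$ is exactly the bound produced by the $(d-k+1)$-dimensional span. The rest is the Grassmann dimension formula $\dim(\bV\cap W)\ge\dim\bV+\dim W-d$ together with the diagonalization identity above.
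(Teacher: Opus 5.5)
Your argument is correct. It is the textbook Courant--Fischer dimension-counting proof: intersect $\bV$ with $\mathrm{span}(\bv_k,\ldots,\bv_d)$ (respectively $\mathrm{span}(\bv_1,\ldots,\bv_{d-k+1})$), and use $\|\bA\bz\|_2^2=\sum_i c_i^2\sigma_i^2(\bA)$. You also handle the indexing $\xi_k=\sigma_{d-k+1}$ correctly.

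Your reading of $\|\bV\bx\|_2$ as $\|\bA\bx\|_2$ is the right repair of what is evidently a typo. Taken literally, with $\bV$ having orthonormal rows and $\bx$ a unit vector in its row span, $\|\bV\bx\|_2=1$ always, and the first inequality would be false in general.

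The paper itself gives no proof; it only recalls the min-max theorem as a standard fact from numerical linear algebra. So there is no different route to compare against.
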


\begin{theorem}[Cauchy interlacing theorem]
\thmlab{thm:cauchy:interlace}
\cite{hwang2004cauchy,fisk2005very}
Let $\bA\in\mathbb{R}^{n\times d}$ be a matrix with singular values $\sigma_1(\bA)\ge\sigma_2(\bA)\ge\ldots\ge\sigma_d(\bA)$. 
Let $\bv\in\mathbb{R}^d$ and $\bB=\bA\circ\bv\in\mathbb{R}^{(n+1)\times d}$ with singular values $\sigma_1(\bB)\ge\sigma_2(\bB)\ge\ldots\ge\sigma_d(\bB)$. 
Then $\sigma_i(\bB)\ge\sigma_i(\bA)$ for all $i\in[d]$. 
\end{theorem}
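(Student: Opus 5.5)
The plan is to pass from singular values to eigenvalues of the Gram matrices and then use the variational (Courant--Fischer) characterization of eigenvalues. For every $i\in[d]$ we have $\sigma_i^2(\bA)=\lambda_i(\bA^\top\bA)$ and $\sigma_i^2(\bB)=\lambda_i(\bB^\top\bB)$, with eigenvalues listed in non-increasing order. Since singular values are non-negative, it suffices to prove $\lambda_i(\bB^\top\bB)\ge\lambda_i(\bA^\top\bA)$ for all $i$ and then take square roots.

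The key algebraic identity comes from $\bB=\bA\circ\bv$, i.e., $\bB$ is $\bA$ with the row $\bv^\top$ appended. Hence
\[\bB^\top\bB=\bA^\top\bA+\bv\bv^\top.\]
Since $\bv\bv^\top$ is positive semidefinite, every unit vector $\bx\in\mathbb{R}^d$ satisfies
\[\bx^\top\bB^\top\bB\bx=\|\bB\bx\|_2^2=\|\bA\bx\|_2^2+\langle\bv,\bx\rangle^2\ge\|\bA\bx\|_2^2.\]

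Next I invoke Courant--Fischer in its max--min form, which is the quantitative version of \thmref{thm:min:max}:
\[\lambda_i(\bM)=\max_{\dim S=i}\ \min_{\bx\in S,\ \|\bx\|_2=1}\bx^\top\bM\bx\]
for symmetric $\bM$. Let $S^\ast$ be the span of the top $i$ right singular vectors of $\bA$. Then
\[\lambda_i(\bB^\top\bB)\ \ge\ \min_{\bx\in S^\ast,\ \|\bx\|_2=1}\|\bB\bx\|_2^2\ \ge\ \min_{\bx\in S^\ast,\ \|\bx\|_2=1}\|\bA\bx\|_2^2\ =\ \sigma_i^2(\bA).\]
The first inequality is the max--min formula applied to $\bB^\top\bB$ at the particular subspace $S^\ast$. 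The second is the pointwise bound from the previous step. The final equality holds because the minimum of $\|\bA\bx\|_2^2$ over unit vectors in $S^\ast$ is attained at the $i$-th right singular vector. Taking square roots yields $\sigma_i(\bB)\ge\sigma_i(\bA)$.

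There is no real obstacle here. The only care needed is index bookkeeping: the argument must cover all $i\in[d]$, including indices beyond the rank, where $\sigma_i(\bA)=0$ and the inequality is trivial. The one nontrivial ingredient is the max--min characterization; if only the weaker form stated in \thmref{thm:min:max} is taken as given, I would derive the max--min form by a standard dimension-counting argument. Any $i$-dimensional subspace meets the span of the bottom $d-i+1$ eigenvectors nontrivially, which gives the upper bound. Choosing the top-$i$ eigenspace gives the matching lower bound.
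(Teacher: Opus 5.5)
Your proof is correct. The paper gives no proof of this statement; it quotes it as a standard fact from the cited references. Those references treat the classical interlacing between a Hermitian matrix and a principal submatrix.

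Along that route, one notes that $\bA\bA^\top$ is the leading $n\times n$ principal submatrix of $\bB\bB^\top$, which gives $\lambda_i(\bB\bB^\top)\ge\lambda_i(\bA\bA^\top)\ge\lambda_{i+1}(\bB\bB^\top)$. One then transfers this to singular values because $\bA\bA^\top$ and $\bA^\top\bA$ share their nonzero eigenvalues.

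You instead stay on the $d\times d$ side, using the rank-one PSD update $\bB^\top\bB=\bA^\top\bA+\bv\bv^\top$ and evaluating Courant--Fischer at the top-$i$ right singular subspace of $\bA$. This is Weyl monotonicity, and it has two advantages. It is self-contained: your dimension-counting sketch supplies exactly the direction you use, namely $\min_{\bx\in S,\|\bx\|_2=1}\bx^\top\bM\bx\le\lambda_i(\bM)$ for every $i$-dimensional $S$. It also handles the zero-padded indices uniformly.

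Your argument yields only the monotone half of interlacing. That is all the statement claims. However, \lemref{lem:update:recourse} later invokes this theorem for the other half as well, via $\lambda_k\ge\mu_{k+1}$. Your method gives that half in one more line. Any $(i+1)$-dimensional subspace meets $\bv^\perp$ in dimension at least $i$, and the two quadratic forms coincide on $\bv^\perp$. Hence $\lambda_{i+1}(\bB^\top\bB)\le\lambda_i(\bA^\top\bA)$.
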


\section{Simple Algorithms with Optimal Recourse}
\seclab{sec:simple}
In this section, we briefly describe two simple algorithms that achieve recourse linear in $k$. 

\subsection{Additive Error}
The first algorithm roughly $\O{\frac{k}{\eps}\log n}$ recourse when the goal is to maintain an additive $\eps\cdot\|\bA^{(t)}\|_F^2$ error at all times $t\in[n]$, where $\bA^{(t)}$ is the $t$ rows of matrix $\bA$ that have arrived at time $t$. 
We simply track the squared Frobenius norm of the matrix $\bA^{(t)}$ at all times. 
For each time the squared Frobenius norm has increased by a factor of $(1+\eps)$, then we recompute the singular value decomposition of $\bA^{(t)}$ and choose $\bV^{(t)}$ to be the top $k$ right singular vectors of $\bA^{(t)}$. 
For all other times, we maintain the same set of factors. 

The main intuition is that each time we reset $\bV^{(t)}$, we find the optimal solution at time $t$. 
Over the next few steps after $t$, our solution will degrade, but the most it can degrade by is the squared Frobenius norm of the submatrix formed by the incoming rows. 
Thus as long as this quantity is less than an $\eps$-fraction of the squared Frobenius norm of the entire matrix, then our correctness guarantee will hold. 
On the other hand, such a guarantee \emph{must} hold as long as the squared Frobenius norm has not increased by a $(1+\eps)$-multiplicative factor. 
Thus we incur recourse $k$ for each of the $\O{\frac{1}{\eps}\log(ndM)}$ times the matrix can have its squared Frobenius norm increase by $(1+\eps)$ multiplicatively. 
We give our algorithm in full in \algref{alg:lra:frob}.

\begin{algorithm}[!htb]
\caption{Additive error algorithm for low-rank approximation with low recourse}
\alglab{alg:lra:frob}
\begin{algorithmic}[1]
\REQUIRE{Rows $\ba_1,\ldots,\ba_n$ of input matrix $\bA\in\mathbb{R}^{n\times d}$ with integer entries bounded in magnitude by $M$, error parameter $\eps>0$}
\ENSURE{Additive $\eps\cdot\|\bA^{(t)}\|_F^2$ error to the cost of the optimal low-rank approximation at all times}
\STATE{$C\gets 0$}
\FOR{each row $\ba_t$}
\IF{$\|\bA^{(t)}\|_F^2\ge(1+\eps)\cdot C$}
\STATE{$\bV\gets\Recluster(\bA^{(t)},k)$}
\STATE{$C\gets\|\bA^{(t)}\|_F^2$}
\ENDIF
\STATE{Return $\bV$}
\ENDFOR
\end{algorithmic}
\end{algorithm}

We show correctness of \algref{alg:lra:frob} at all times: 
\begin{restatable}{lemma}{lemfrobcorrect}
\lemlab{lem:frob:correct}
Let $\bA^{(t)}$ be the first $t$ rows of $\bA$ and let $\bV^{(t)}$ be the output of \algref{alg:lra:frob} at time $t$. 
Let $\OPT_t$ be the cost of the optimal low-rank approximation at time $t$. 
Then $\|\bA^{(t)}-\bA^{(t)}(\bV^{(t)})^\top\bV^{(t)}\|_F^2\le\OPT_t+\eps\cdot\|\bA^{(t)}\|_F^2$. 
\end{restatable}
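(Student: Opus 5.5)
Fix a time $t$ and let $s\le t$ be the most recent time at which \algref{alg:lra:frob} executed $\Recluster$, so that $\bV^{(t)}=\bV^{(s)}$ is the top-$k$ right singular subspace of $\bA^{(s)}$ and $C=\|\bA^{(s)}\|_F^2$. Time $s$ exists because at $t=1$ we have $C=0$, so the test fires at the first row. If $\ba_1=0$ the test still passes since $0\ge 0$, and the statement is then trivial until a nonzero row arrives. By \corref{cor:recluster:correct}, $\|\bA^{(s)}-\bA^{(s)}(\bV^{(s)})^\top\bV^{(s)}\|_F^2=\OPT_s$. Since no recompute happened at times $s+1,\ldots,t$, we have $\|\bA^{(t)}\|_F^2<(1+\eps)\|\bA^{(s)}\|_F^2$.

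Next I split the cost row-wise. Write $\bA^{(t)}=\bA^{(s)}\circ\bR$, where $\bR$ consists of rows $\ba_{s+1},\ldots,\ba_t$. Because the Frobenius cost of projecting onto a fixed subspace is additive over rows, and projecting a row onto the orthogonal complement cannot increase its norm, we get
\[\|\bA^{(t)}-\bA^{(t)}(\bV^{(t)})^\top\bV^{(t)}\|_F^2\le \OPT_s+\|\bR\|_F^2 .\]

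It remains to bound the two terms. For the first, $\OPT_s\le\OPT_t$: take any rank-$k$ projection; its cost on $\bA^{(t)}$ is at least its cost on the submatrix $\bA^{(s)}$, which is at least $\OPT_s$. (Alternatively, this follows from Cauchy interlacing, \thmref{thm:cauchy:interlace}, applied to the tail singular values via \thmref{thm:lra:opt}, but the direct argument is simpler.) For the second,
\[\|\bR\|_F^2=\|\bA^{(t)}\|_F^2-\|\bA^{(s)}\|_F^2<\eps\|\bA^{(s)}\|_F^2\le\eps\|\bA^{(t)}\|_F^2 .\]
Combining the two bounds gives the claim.

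There is no real obstacle here. The only care needed is at the boundary: the case $s=t$, which is immediate, and making sure the reference point $s$ is well defined at the start of the stream.
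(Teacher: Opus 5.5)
Your proposal is correct and follows the paper's proof essentially step for step. Both split the cost at the last recompute time $s$ into $\OPT_s$ on $\bA^{(s)}$ plus at most $\sum_{i=s+1}^t\|\ba_i\|_2^2<\eps\|\bA^{(s)}\|_F^2$ from the new rows, and then use $\OPT_s\le\OPT_t$. Your write-up is slightly cleaner on two points the paper leaves implicit: you take $s$ to be the most recent $\Recluster$ call at or before $t$ (the paper uses the time $\bV^{(t-1)}$ was set, which misses a recompute at time $t$), and you state the final step $\eps\|\bA^{(s)}\|_F^2\le\eps\|\bA^{(t)}\|_F^2$ explicitly.
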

\begin{proof}
Let $s$ be the time at which $\bV^{(t-1)}$ was first set, so that $\bV^{(t-1)}=\bV^{(s)}$ are the top $k$ right singular vectors of $\bA^{(s)}$. 
Therefore, $\|\bA^{(s)}-\bA^{(s)}(\bV^{(s)})^\top\bV^{(s)}\|_F^2=\OPT_s$. 
Hence,
\begin{align*}
\|\bA^{(t)}-\bA^{(t)}(\bV^{(t)})^\top\bV^{(t)}\|_F^2&=\|\bA^{(s)}-\bA^{(s)}(\bV^{(t)})^\top\bV^{(t)}\|_F^2+\sum_{i=s+1}^t\|\ba_i-\ba_i(\bV^{(t)})^\top\bV^{(t)}\|_F^2\\
&=\|\bA^{(s)}-\bA^{(s)}(\bV^{(s)})^\top\bV^{(s)}\|_F^2+\sum_{i=s+1}^t\|\ba_i-\ba_i(\bV^{(t)})^\top\bV^{(t)}\|_F^2\\
&=\OPT_s+\sum_{i=s+1}^t\|\ba_i-\ba_i(\bV^{(t)})^\top\bV^{(t)}\|_2^2.
\end{align*}
Note that since $(\bV^{(t)})^\top\bV^{(t)}$ is a projection operator, then the length of $\ba_i$ cannot increase after being projected onto the row span of $\bV^{(t)}$, so that $\|\ba_i-\ba_i(\bV^{(t)})^\top\bV^{(t)}\|_2^2\le\|\ba_i\|_2^2$. 
Therefore, 
\begin{align*}
\|\bA^{(t)}-\bA^{(t)}(\bV^{(t)})^\top\bV^{(t)}\|_F^2&=\OPT_s+\sum_{i=s+1}^t\|\ba_i-\ba_i(\bV^{(t)})^\top\bV^{(t)}\|_2^2\\
&\le\OPT_s+\sum_{i=s+1}^t\|\ba_i\|_2^2\\
&\le\OPT_s+\eps\cdot\|\bA^{(s)}\|_F^2,
\end{align*}
where the last inequality is due to Line 3 of \algref{alg:lra:frob}. 
Finally, by the monotonicity of the optimal low-rank approximation cost with additional rows, we have that $\OPT_s\le\OPT_t$ and thus, 
\[\|\bA^{(t)}-\bA^{(t)}(\bV^{(t)})^\top\bV^{(t)}\|_F^2\le\OPT_t+\eps\cdot\|\bA^{(s)}\|_F^2,\]
as desired.
\end{proof}

It then remains to bound the recourse of \algref{alg:lra:frob}:  
\begin{restatable}{lemma}{lemfrobrecourse}  
\lemlab{lem:frob:recourse}
The recourse of \algref{alg:lra:frob} is at most $\O{\frac{k}{\eps}\log(ndM)}$. 
\end{restatable}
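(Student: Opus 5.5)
The recourse of \algref{alg:lra:frob} is incurred only at the times when Line 3 triggers and $\bV$ is replaced by $\Recluster(\bA^{(t)},k)$. At every other time $\bV^{(t)}=\bV^{(t-1)}$, so the recourse there is zero. The plan is to bound the recourse of each single reset by a constant times $k$, and then to bound the number of resets by $\O{\frac{1}{\eps}\log(ndM)}$.

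\textbf{Recourse per reset.} Both $\bV^{(t-1)}$ and $\bV^{(t)}$ have at most $k$ orthonormal rows, so their projection matrices $\bP,\bQ$ have trace at most $k$. Then
\[\|\bP-\bQ\|_F^2=\Trace(\bP)+\Trace(\bQ)-2\Trace(\bP\bQ)\le 2k,\]
since $\Trace(\bP\bQ)=\|\bP\bQ\|_F^2\ge0$. (At the very first reset there is no previous subspace; we treat it as the zero subspace, which gives recourse at most $k$.) So each reset costs at most $2k$.

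\textbf{Number of resets.} Let $t_1<t_2<\cdots$ be the reset times, and write $F_t=\|\bA^{(t)}\|_F^2$.
\begin{itemize}
\item The first reset happens at the first $t$ with $F_t>0$, i.e.\ at the first nonzero row. The condition $F_t\ge(1+\eps)\cdot 0$ also holds trivially while $\bA^{(t)}=0$, but those resets output the empty solution and cost nothing, so we ignore them.
\item Since the entries are nonzero integers, $F_{t_1}\ge1$.
\item By Line 3, $F_{t_{j+1}}\ge(1+\eps)F_{t_j}$, so $F_{t_j}\ge(1+\eps)^{j-1}$.
\item On the other hand, $F_t\le ndM^2$ for every $t$.
\end{itemize}
Hence the number of resets $J$ satisfies $(1+\eps)^{J-1}\le ndM^2$, i.e.
\[J\le 1+\frac{\log(ndM^2)}{\log(1+\eps)}=\O{\frac{1}{\eps}\log(ndM)},\]
using $\log(1+\eps)\ge\eps/2$ for $\eps\le1$. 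For $\eps>1$ the bound is only better.

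Multiplying the two bounds gives total recourse $\O{\frac{k}{\eps}\log(ndM)}$.

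The only delicate point is the lower bound $F_{t_1}\ge1$ on the first nonzero Frobenius norm, which is exactly where the integrality assumption is used. Without it the number of doublings would be unbounded. The initial condition $C=0$ also needs the careful handling described above so that the degenerate all-zero prefix does not contribute to the count.
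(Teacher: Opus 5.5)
Your proposal is correct and follows the paper's argument. Each reset costs $\O{k}$ recourse, and the squared Frobenius norm is at least $1$ at the first nonzero row (by integrality), at most $ndM^2$ always, and grows by a factor $(1+\eps)$ between consecutive resets, so there are only $\O{\frac{1}{\eps}\log(ndM)}$ resets. Your explicit $2k$ trace bound and your handling of the all-zero prefix, where $C=0$ makes Line~3 fire trivially, are details the paper glosses over.

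One small correction: your aside that the $\eps>1$ case is ``only better'' is not quite right. The count $\log_{1+\eps}(ndM^2)$ does shrink, but $1/\log(1+\eps)$ is not $\O{1/\eps}$ as $\eps\to\infty$, so the lemma implicitly assumes $\eps=\O{1}$, in the paper as well.
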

\begin{proof}
Since each entry of $\bA$ is an integer bounded in magnitude by at most $M$, then the squared Frobenius norm of $\bA$ is at most $(nd)M^2$. 
Moreover, each entry of $\bA$ is an integer bounded, the first time it is nonzero, the squared Frobenius norm must be at least $1$. 
Hence, the squared Frobneius norm of $\bA$ can increase by a factor of $(1+\eps)$ at most $\log_{(1+\eps)}(nd)M^2=\O{\frac{1}{\eps}\log(ndM)}$ from the first time it is nonzero. 
Each time $t$ it does so, we recompute the right singular values of $\bA^{(t)}$ to be the set of factors $\bV^{(t)}$. 
Thus the recourse incurred at these times is at most $\O{\frac{k}{\eps}\log(ndM)}$. 
For all other times, we retain the same choice of the factors. 
Hence the desired claim follows. 
\end{proof}
\thmref{thm:lra:frob} then follows from \lemref{lem:frob:correct} and \lemref{lem:frob:recourse}.

\subsection{Bounded Online Condition Number}
We next show that the optimal rank $k$ subspace incurs at most constant recourse under rank one perturbations. 
In particular, it suffices to consider the case where a single row is added to the matrix:
\begin{restatable}{lemma}{lemupdaterecourse}
\lemlab{lem:update:recourse}
Let $\bA^{(t-1)}\in\mathbb{R}^{(t-1)\times d}$ and $\bA^{(t)}\in\mathbb{R}^{t\times d}$ such that $\bA^{(t)}$ is $\bA^{(t-1)}$ with the row $\bA_t$ appended. 
Let $\bV^*_{t-1}$ and $\bV^*_t$ be the optimal rank-$k$ subspaces (the span of the top $k$ right singular vectors) of $\bA^{(t-1)}$ and $\bA^{(t)}$, respectively. 
Then $\Recourse(\bV^*_{t-1},\bV^*_t)\le 8$. 
\end{restatable}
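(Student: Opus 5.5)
The plan is to pass to the Gram matrices $\bM=(\bA^{(t-1)})^\top\bA^{(t-1)}$ and $\bM'=(\bA^{(t)})^\top\bA^{(t)}=\bM+\ba\ba^\top$, where $\ba=\bA_t$. Let $\bP$ and $\bQ$ be the projections onto $\bV^*_{t-1}$ and $\bV^*_t$, which are the top-$k$ eigenspaces of $\bM$ and $\bM'$. By the computation in the lemma justifying our recourse definition, $\Recourse=\|\bP-\bQ\|_F^2=2k-2\Trace(\bP\bQ)$. It therefore suffices to show $\Trace(\bP\bQ)\ge k-4$. A convenient sufficient condition is that $\bV^*_{t-1}$ and $\bV^*_t$ share a common subspace of dimension at least $k-4$, since $\Trace(\bP\bQ)\ge\dim(\bV^*_{t-1}\cap\bV^*_t)$.

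To find such a shared subspace I would decompose $\mathbb{R}^d$ into the Krylov space $K=\mathrm{span}\{\bM^j\ba\}_{j\ge0}$ and its orthogonal complement $K^\perp$. Both subspaces are invariant under $\bM$ and under $\bM'$. Moreover $\bM'=\bM$ on $K^\perp$, because $\ba\perp K^\perp$. Consequently an eigenbasis of $\bM$ on $K^\perp$ is also an eigenbasis of $\bM'$ there, with the same eigenvalues. On $K$, the vector $\ba$ has a nonzero component in every eigenspace of $\bM|_K$, so $\bM|_K$ has simple spectrum and the eigenvalues of $\bM'|_K$ strictly interlace those of $\bM|_K$. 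This is the sharp form of \thmref{thm:cauchy:interlace}.

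Write the top-$k$ spaces as $(\text{top }k_1\text{ in }K^\perp)\oplus(\text{top }k_2\text{ in }K)$ for $\bM$, and similarly with $k_1',k_2'$ for $\bM'$. Interlacing shows that an eigenvalue's rank in the merged list shifts by at most one, so $|k_1-k_1'|\le1$. The $K^\perp$ components are nested subspaces of a common eigenbasis, so they contribute recourse at most $1$.

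The main obstacle is the $K$ component, where eigenvectors genuinely rotate. There I would use the explicit secular-equation form of the eigenvectors of $\bM'|_K$, namely $\bx\propto(\mu\bI-\bM)^{-1}\ba$. From this form I would argue that the top-$k_2'$ eigenspace of $\bM'|_K$ and the span of $\{\ba\}\cup(\text{top }k_2\text{ eigenvectors of }\bM|_K)$ differ by at most a bounded number of directions. The route is to compare both subspaces against the complement $(\bV^*_{t-1}\cap K)\cap\ba^\perp$, which has dimension at least $k_2-1$. On that complement the quadratic forms of $\bM$ and $\bM'$ agree and are at least $\lambda_k(\bM)\ge\lambda_{k+1}(\bM')$, again by interlacing. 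A Courant--Fischer argument (\thmref{thm:min:max}) should then show that this common part is almost captured by $\bQ$, with the total mass lost outside $\bQ$ bounded by a constant.

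Collecting the losses gives a constant bound on $k-\Trace(\bP\bQ)$: at most $1$ from the $K^\perp$ index shift, $1$ from projecting out $\ba$, and at most $2$ from the interlacing slack in $K$. Hence $\Recourse(\bV^*_{t-1},\bV^*_t)\le 8$. Ties in the spectrum, which make the optimal subspaces non-unique, are handled by fixing consistent tie-breaking within the common eigenbasis on $K^\perp$.
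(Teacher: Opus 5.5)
There is a genuine gap, and it sits at the decisive step. Your reduction to $\Trace(\bP\bQ)\ge k-4$ and the Krylov splitting are correct. The issue is the claim that on $K$ the top-$k_2'$ eigenspace of $\bM'$ differs from $\mathrm{span}(\{\ba\}\cup\text{top-}k_2\text{ of }\bM)$ in boundedly many directions, which you justify by ``a Courant--Fischer argument'' on $\bW=(\bV^*_{t-1}\cap K)\cap\ba^{\perp}$. That claim is only asserted. The sole information on $\bW$ is $\bw^\top\bM'\bw=\bw^\top\bM\bw\ge\lambda_k(\bM)\ge\lambda_{k+1}(\bM')$. Without a gap between $\lambda_k(\bM')$ and $\lambda_{k+1}(\bM')$, this says nothing about how much of $\bW$ lies outside the range of $\bQ$: weight on eigenvectors below $\lambda_{k+1}(\bM')$ can be offset by weight above it, and Davis--Kahan-type control needs a gap. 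Moreover, in the generic case (simple spectrum, all $a_i=\langle\ba,\bu_i\rangle\neq0$) one has $K=\mathbb{R}^d$, so the splitting does no work. Your ``convenient sufficient condition'' also fails there. The eigenvectors of $\bM'$ are $\bx_j\propto\sum_i\frac{a_i}{\mu_j-\lambda_i}\bu_i$, so $\bigl[\langle\bu_i,\bx_j\rangle\bigr]_{i>k,\,j\le k}$ is a row- and column-scaled Cauchy matrix of full rank $\min(k,d-k)$. Hence $\bV^*_{t-1}\cap\bV^*_t=\{0\}$ once $d\ge 2k$. The paper's own proof breaks at the same point. It treats $\bW=\bV^*_{t-1}\cap\bS_a$ as spanned by eigenvectors of $\bB_{t-1}$, but a subspace of an invariant subspace need not be invariant. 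From this it concludes $\dim(\bV^*_{t-1}\cap\bV^*_t)\ge k-2$, which the Cauchy argument refutes for $k\ge3$, $d\ge2k$.

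In fact the gap cannot be closed, because $k-\Trace(\bP\bQ)=\sum_{j\le k}\sum_{i>k}\langle\bu_i,\bx_j\rangle^2$ can be of order $k$. Take the following example.
\begin{itemize}
\item Let $d=2k$, and let $\bM$ have eigenvalues $c\pm\rho^{m}$ for $0\le m\le k-1$, with $c=\rho^k$. Then $\bM\succeq0$, and one may take $\bA^{(t-1)}=\bM^{1/2}$, $\bA_t=\ba^\top$.
\item Set $a_i^2=\alpha\,|\lambda_i-c|$, and let $\alpha\to\infty$, then $\rho\to\infty$.
\item The top eigenvector of $\bM'$ tends to $\ba/\|\ba\|_2$, which has exactly half its mass on the bottom $k$ eigenvectors of $\bM$.
\item Fix $0\le m\le k-2$ and look at the secular equation $\sum_i\frac{a_i^2}{\mu-\lambda_i}=1$ on the interval between $c+\rho^m$ and $c+\rho^{m+1}$. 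The far terms (eigenvalues $c\pm\rho^{m'}$ with $m'\ge m+2$) contribute $\mp1$ each after dividing by $\alpha$, so they cancel in pairs. The four nearest terms then force a root at $\mu\approx c+\rho^{m+1/2}$.
\item The corresponding eigenvector has asymptotically equal mass on the eigenvectors of $\bM$ for $c\pm\rho^{m}$ and $c\pm\rho^{m+1}$, so again half its mass lies outside $\bV^*_{t-1}$.
\end{itemize}
These $k$ vectors span $\bV^*_t$. Hence $k-\Trace(\bP\bQ)\to k/2$, i.e.\ $\Recourse(\bV^*_{t-1},\bV^*_t)\to k$, which exceeds $8$ for $k\ge 9$ and suitable finite $\alpha,\rho$. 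So the missing step is not merely unproved but false: a single row insertion can move the optimal rank-$k$ subspace by recourse linear in $k$. Any correct version of the statement needs an extra hypothesis, such as a spectral gap at $k$.
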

\begin{proof}
The recourse between two subspaces is defined as the squared Frobenius norm of the difference between their corresponding orthogonal projection matrices.
We first consider the covariance matrices induced by the optimal rank-$k$ subspaces. 
Namely, consider the covariance matrices $\bB_{t-1}=(\bA^*_{(t-1)})^\top \bA^*_{(t-1)}$ and $\bB_t = (\bA^{(t)})^\top\bA^{(t)}$. 
Then we have the relationship $\bB_t=\bB_{t-1}+\bA_t^\top\bA$, i.e., $\bB_t$ is obtained from $\bB_{t-1}$ by a rank-1 positive semi-definite (PSD) update. 
By the Eckart-Young theorem, c.f., \thmref{thm:lra:opt}, the subspace $\bV^*_t$ is the span of the top $k$ eigenvectors of $\bB_t$, and similarly for $\bV^*_{t-1}$ and $\bB_{t-1}$. 

We next consider the intersection of the eigenspaces of $\bB_t$ and $\bB_{t-1}$. 
We first aim to show that the dimension of the intersection of the two subspaces is at least $k-1$, i.e., $\dim(\bV^*_{t-1} \cap \bV^*_t) \geq k-1$.
Let $\bS_a$ be the subspace orthogonal to $\bA_t$, so that $\bS_a = \{\bv \in \mathbb{R}^d : \bA_t\bv = 0\}$ and $\dim(\bS_a) = d-1$. 
Consider the intersection $\bW = \bV^*_{t-1} \cap \bS_a$. 
By the properties of subspace dimensions:
\[\dim(\bW) = \dim(\bV^*_{t-1}) + \dim(\bS_a) - \dim(\bV^{t-1}\cap\bS_a).\]
Since $\dim(bV^*_{t-1})=k$, $\dim(\bS_a)=d-1$, and $\dim(\bV^*_{t-1}\cap\bS_a) \le d$, we have:
\[\dim(\bW) \geq k + (d-1) - d = k-1.\]
Now we analyze the properties of vectors in $\bW$. 
Let $\bw\in\bW$. 
Since $\bw\in\bS_a$, we have $\bA_t\bw = 0$. 
Observe that
\[\bB_t\bw = (\bB_{t-1} + \bA_t^\top\bA_t)\bw = \bB_{t-1}\bw + \bA_t^\top (\bA_t\bw) = \bB_{t-1}\bw.\]
Hence, $\bW$ is a subspace contained in both $\bB_{t-1}$ and $\bB_t$. 

Let $\lambda_1 \geq \ldots \geq \lambda_d$ be the eigenvalues of $\bB_{t-1}$, and $\mu_1 \geq \ldots \geq \mu_d$ be the eigenvalues of $\bB_t$. 
Since $\bB_t$ is a rank-1 PSD update of $\bB_{t-1}$, the eigenvalues interlace by \thmref{thm:cauchy:interlace}: 
\[\mu_1 \geq \lambda_1 \geq \mu_2 \geq \lambda_2 \geq \ldots \geq \mu_k \geq \lambda_k \geq \mu_{k+1} \geq \lambda_{k+1} \ldots.\]
Since $\bW \subseteq \bV^*{t-1}$, the subspace $\bW$ is spanned by eigenvectors of $\bB_{t-1}$ corresponding to eigenvalues $\lambda_1,\ldots,\lambda_k$. 
Because $\bB_t\bw = \bB_{t-1}\bw$ for $\bw \in W$, these are also eigenvectors of $\bB_t$ with the same eigenvalues. 
Since $\lambda_k \geq \mu_{k+1}$, the eigenvalues associated with the subspace $\bW$ are greater than or equal to the $(k+1)$-th eigenvalue of $\bB_t$. 
Thus, $\bW$ must be a subspace of the top-$(k+1)$ eigenspace of $\bB_t$. 
Therefore, $\bW\subseteq\bV^*_{t-1}\cap(\bV^*_t\cup\{\bu\})$, where $\bu$ is the eigenvector of $\bB_t$ corresponding to eigenvalue $\mu_{k+1}$. 
Since $\dim(\bW)\ge k-1$, we have established that $\dim(\bV^*_{t-1} \cap\bV^*_t) \geq k-2$. 

Now, let $\bP_{t-1}$ and $\bP_t$ be the orthogonal projection matrices onto $\bV^*_{t-1}$ and $\bV^*_t$, respectively. 
Let $\bW_{\text{int}} = \bV^*_{t-1} \cap \bV^*_t$ and $\bP_{\text{shared}}$ be the projection onto $\bW_{\text{int}}$.
If $\dim(\bW_{\text{int}}) = k$, then $\bV^*{t-1} = \bV^*_t$ and so the recourse is $\|\bP_t - \bP_{t-1}\|_F^2 = 0$.

Otherwise, if $\dim(\bW_{int})=1$, we can decompose the orthogonal projection matrices as:
\[\bP_{t-1} = \bP_{\text{shared}} + \bu_1\bu_1^\top,\qquad \bP^*{t} = \bP_{\text{shared}} + \bu_2\bu_2^\top,\]
where $\bu_1, \bu_2$ are unit vectors orthogonal to $\bW_{\text{int}}$.  
The recourse is $\|\bP_t - \bP_{t-1}\|_F^2$. 
Thus, we have
\[\bP_t - \bP_{t-1} = (\bP_{\text{shared}} + \bu_2\bu_2^\top) - (\bP_{\text{shared}} + \bu_1\bu_1^\top) = \bu_2\bu_2^\top - \bu_1 \bu_1^\top,\]
so that by generalized triangle inequality, 
\[\Recourse(\bP_t,\bP_{t-1})\le2\|\bu_1\bu_1^\top\|_F^2+2\|\bu_2\bu_2^\top\|_F^2.\]
Since $\bu_1$ and $\bu_2$ are unit vectors, then we have $\Recourse(\bP_t,\bP_{t-1})\le 4$. 
The same proof with four unit vectors shows that if $\dim(\bW_{int})=2$, then $\Recourse(\bP_t,\bP_{t-1})\le 8$. 
\end{proof}
Finally, we remark that due to the simplicity of rank-one perturbations, any algorithm that maintains the SVD only needs to perform a rank-one update to the SVD, which takes time $\O{(n+d)k+k^3}$ for a matrix with dimensions at most $n\times d$~\cite{brand2006fast}. 
In particular, note that whenever a single entry of a matrix is changed, any row of a matrix is changed, a new row of a matrix is added, or an existing row of a matrix is deleted, these all correspond to rank one perturbations of the matrix. 
Thus, we immediately have the following corollary:
\begin{theorem}
There exists an algorithm that maintains the optimal rank-$k$ approximation of a matrix under any sequence of rank-one updates, including entry modifications, row modifications, row insertions, or row deletions, and incurs total recourse $\O{n}$ on a stream of $n$ updates. 
Moreover, if each row has dimension $d$, then the update time is $\O{(n+d)k+k^3}$. 
\end{theorem}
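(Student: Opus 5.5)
The algorithm is the obvious one: at every update, recompute the optimal rank-$k$ subspace, i.e., the span of the top $k$ right singular vectors. The recourse bound then follows from \lemref{lem:update:recourse}, applied once per update. Two facts need checking: every listed operation is a rank-one perturbation of the Gram matrix (or a composition of at most two such perturbations), and \lemref{lem:update:recourse} extends from row appends to these other perturbations.

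\textbf{Reducing each update to PSD rank-one steps.} Write $\bB=\bA^\top\bA$.
\begin{itemize}
\item A row insertion $\ba$ sends $\bB\mapsto\bB+\ba^\top\ba$, which is exactly the setting of \lemref{lem:update:recourse}.
\item A row deletion sends $\bB\mapsto\bB-\ba^\top\ba$. Read in reverse, this is an insertion, and since $\Recourse$ is symmetric the same bound of $8$ holds.
\item A row modification $\ba\mapsto\ba'$ is a deletion followed by an insertion. The triangle inequality gives $\|\bP-\bP''\|_F\le\|\bP-\bP'\|_F+\|\bP'-\bP''\|_F$, so the squared recourse is at most $4\cdot 8=32$.
\item A single-entry change is a special case of a row modification.
\end{itemize}
Each of these steps is a rank-one PSD update or its reverse, so the key step in \lemref{lem:update:recourse} still applies. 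That step says the intersection of the old top-$k$ eigenspace with $\ba^\perp$ has dimension $\ge k-1$, is invariant, and lies in the new top-$(k+1)$ eigenspace by interlacing (\thmref{thm:cauchy:interlace}). Hence each operation costs $\O{1}$ recourse.

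\textbf{Handling ties.} This is the main obstacle. If eigenvalues tie at position $k$, the top-$k$ eigenspace is not unique, and an unlucky tie-breaking rule could swap many directions at once. I would fix this with a canonical rule: among all optimal subspaces, output one that contains the largest possible subspace of the previous output. The argument of \lemref{lem:update:recourse} already shows that $\bW$, of dimension at least $k-1$, can be kept inside some new optimal subspace. More precisely, the new optimal subspace is the strict top part together with a choice inside the tie eigenspace, and $\bW$ can be extended within that tie eigenspace. With this rule, the bound $\O{1}$ per step holds for the algorithm's actual outputs. Summing over $n$ updates gives total recourse $\O{n}$.

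\textbf{Running time.} I would invoke the rank-one SVD update of \cite{brand2006fast}. It maintains a thin rank-$k$ SVD under a rank-one modification in $\O{(n+d)k+k^3}$ time, so applying it once or twice per operation gives the stated update time. Strictly, exact maintenance of the full top-$k$ space needs the whole SVD, and the theorem cites Brand's bound. I would simply state that the exact-maintenance claim follows from the recomputation argument and the time bound from \cite{brand2006fast}, without reproving that bound.
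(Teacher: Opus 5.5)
\textbf{Gap: the per-update recourse bound you rely on does not hold.} Your reductions themselves are fine. In one respect they are more careful than the paper. A row or entry modification changes $\bA^\top\bA$ by an indefinite rank-two matrix, not a PSD rank-one one. Splitting it into a deletion (a reversed insertion) plus an insertion, and paying for the split with the triangle inequality, is the right way to reach \lemref{lem:update:recourse}.

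The problem is the step you import from that lemma, which the paper's own proof of the lemma also relies on. From $\bB'\bw=\bB\bw$ for $\bw\in\bW=\bV\cap\ba^\perp$ you only learn that $\bB$ and $\bB'$ agree on $\bW$. That does not make $\bW$ invariant, since a subspace of an invariant subspace need not be invariant. For example, with $\bB=\mathrm{diag}(3,2,0)$, $k=2$, $\ba=(1,-1,1)$, you get $\bW=\mathrm{span}(e_1+e_2)$, yet $\bB(e_1+e_2)=(3,2,0)$. Interlacing only shows that Rayleigh quotients on $\bW$ are at least $\mu_{k+1}$, which does not place $\bW$ inside the new top-$(k+1)$ eigenspace.

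In fact the containment fails generically. Suppose $\bB$ has distinct eigenvalues $\lambda_1>\dots>\lambda_d$ and $\ba$ has no zero coordinate $a_i$ in the eigenbasis of $\bB$. Then the $j$-th eigenvector of $\bB'$ is proportional to $\bv'_j=\bigl(a_i/(\mu_j-\lambda_i)\bigr)_i$. So $\sum_{j\le k+1}c_j\bv'_j$ lies in the old top-$k$ space only if $\sum_{j\le k+1}c_j/(\mu_j-\lambda_i)=0$ for every $i>k$. When $d\ge 2k+1$, this Cauchy system forces $c=0$. Hence the old top-$k$ space meets even the new top-$(k+1)$ space only in $\{0\}$. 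This refutes both $\dim(\bV^*_{t-1}\cap\bV^*_t)\ge k-2$ and your tie-breaking rule of keeping $\bW$ inside a new optimal subspace.

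This cannot be repaired, because the per-update $\O{1}$ bound itself fails for large $k$. Let $\bB=\bA^\top\bA=\mathrm{diag}(2k,2k-1,\dots,1)$ and insert the row $\sqrt{\rho}\,\mathbf{1}$ with $\rho\to\infty$. The new top eigenvector tends to $\mathbf{1}/\sqrt{2k}$. For $2\le j\le k$, the $j$-th eigenvector is proportional to $\bigl(1/(i-j+\theta_j)\bigr)_{i\in[2k]}$, where $\theta_j\in(0,1)$ solves $\sum_i 1/(i-j+\theta_j)=0$, i.e., $\pi\cot(\pi\theta_j)\approx\ln\frac{j}{2k-j+1}$.
\begin{itemize}
\item For $k/2\le j\le k$ this gives $\theta_j\in[1/2,2/3]$.
\item So that vector's squared norm is at most $\pi^2/\sin^2(\pi\theta_j)\le 4\pi^2/3$.
\item Its squared mass on the coordinates $i>k$ is at least $\frac{1}{k-j+2}-\frac{1}{k+2}$.
\end{itemize}
Summing over these $j$ gives $\Trace(\bP'(I-\bP))=\Omega(\log k)$, i.e., $\|\bP-\bP'\|_F^2=\Omega(\log k)$. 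The block $(I-\bP)\bP'$ is essentially a Hilbert matrix.

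All eigenvalues here are simple, so the optimal subspaces are unique. Alternately inserting and deleting this row therefore forces any algorithm that maintains the optimal subspace to pay $\Omega(n\log k)$ over $n$ updates. So the $\O{n}$ bound cannot come from summing a constant per update, and for exact maintenance it does not hold uniformly in $k$. A correct statement needs either a $k$-dependent per-update bound (the trivial one is $2k$) or a relaxation from exact to approximate optimality.
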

Given the statement in \lemref{lem:update:recourse}, we immediately obtain an algorithm with $\O{n}$ recourse in the row-arrival model for a stream of length $n$. 
Because $n$ recourse is too large, we use the following standard approach to decrease the effective number of rows in the matrix. 
\begin{definition}[Projection-cost preserving sketch]
Given a matrix $\bA\in\mathbb{R}^{n\times d}$, a matrix $\bM\in\mathbb{R}^{m\times d}$ is a $(1+\eps)$ projection-cost preserving sketch of $\bA$ if for all projection matrices $\bP\in\mathbb{R}^{d\times d}$, 
\[(1-\eps)\|\bA-\bA\bP\|_F^2\le\|\bM-\bM\bP\|_F^2\le(1+\eps)\|\bA-\bA\bP\|_F^2.\]
\end{definition}
Intuitively, a projection-cost preserving sketch is a sketch matrix that approximately captures the residual mass after projecting away any rank $k$ subspace. 
The following theorem shows that a projection-cost preserving sketch can be acquired via online ridge leverage sampling. 
\begin{theorem}[Theorem 3.1 in \cite{BravermanDMMUWZ20}]
\thmlab{thm:online:pcp}
Given an accuracy parameter $\eps>0$, a rank parameter $k>0$, and a matrix $\bA=\ba_1\circ\ldots\circ\ba_n\in\mathbb{R}^{n\times d}$ whose rows arrive sequentially in a stream with condition number $\kappa$, there exists an algorithm that outputs a matrix $\bM$ with $\O{\frac{k}{\eps^2}\log n\log^2\kappa}$ rescaled rows of $\bA$ such that 
\[(1-\eps)\|\bA-\bA_{(k)}\|_F^2\le\|\bM-\bM_{(k)}\|_F^2\le(1+\eps)\|\bA-\bA_{(k)}\|_F^2,\]
so that with high probability, $\bM$ is a rank $k$ projection-cost preservation of $\bA$.
\end{theorem}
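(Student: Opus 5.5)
The statement is \thmref{thm:online:pcp}, which is quoted from \cite{BravermanDMMUWZ20}; here is how I would reprove it. The approach is online ridge-leverage score sampling. When row $\ba_t$ arrives, compute a regularization parameter $\lambda_t$ that approximates $\frac{1}{k}\|\bA^{(t)}-\bA^{(t)}_{(k)}\|_F^2$ within a constant factor. This can be done from the current sketch, since inductively that sketch is already a constant-factor projection-cost preserving sketch of $\bA^{(t-1)}$, and appending a row only increases the tail cost. Then compute the online ridge leverage score
\[
\tau_t=\min\left(1,\ \ba_t\left(\bM_{t-1}^\top\bM_{t-1}+\lambda_t\mathbf{I}\right)^{-1}\ba_t^\top\right),
\]
and keep $\ba_t$ with probability $p_t=\min(1,C\eps^{-2}\log n\cdot\tau_t)$, rescaled by $1/\sqrt{p_t}$.

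\textbf{Step 1 (spectral guarantee).} Show that with high probability, simultaneously for all $t$,
\[
(1-\eps)\left((\bA^{(t)})^\top\bA^{(t)}+\lambda_t\mathbf{I}\right)\preceq \bM_t^\top\bM_t+\lambda_t\mathbf{I}\preceq(1+\eps)\left((\bA^{(t)})^\top\bA^{(t)}+\lambda_t\mathbf{I}\right).
\]
The sampling probabilities depend adaptively on earlier coins, so I would use a matrix Freedman inequality on the martingale $\sum_i (\frac{\xi_i}{p_i}-1)\ba_i^\top\ba_i$, normalized by $((\bA^{(t)})^\top\bA^{(t)}+\lambda_t\mathbf{I})^{-1/2}$. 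The oversampling factor $\eps^{-2}\log n$ keeps each increment bounded by $\eps^2/\log n$ and the predictable variance bounded in the same way. Computing $\tau_t$ from $\bM_{t-1}$ rather than from $\bA^{(t-1)}$ costs only a constant factor, by the inductive hypothesis.

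\textbf{Step 2 (spectral to projection-cost).} Apply the standard argument of \cite{CohenMM17}: with $\lambda\approx\|\bA-\bA_{(k)}\|_F^2/k$, a ridge spectral approximation implies that $\|\bM-\bM\bP\|_F^2$ equals $\|\bA-\bA\bP\|_F^2$ up to multiplicative $(1\pm\eps)$ error and additive error $\O{\eps}\|\bA-\bA_{(k)}\|_F^2$, for every rank-$k$ projection. Split the error into the head and tail spectral parts. The additive term is at most $\O{\eps}$ times any projection cost, so it folds into the relative error.

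\textbf{Step 3 (sample count).} This is the main obstacle. It reduces to bounding $\sum_t\tau_t$. For a fixed $\lambda$, the matrix determinant lemma bounds the sum of online ridge scores by $\log\det$ ratios, giving $\O{k\log\kappa}$, because only about $k+\|\bA\|_F^2/\lambda$ effective directions can grow before saturating. The difficulty is that $\lambda_t$ increases over time. I would bucket time into epochs in which $\lambda_t$ stays within a factor of $2$. Since $\lambda_t$ ranges over a window of ratio $\poly(\kappa)$, there are $\O{\log\kappa}$ epochs. Inside each epoch, the potential argument gives $\O{k\log\kappa}$. Summing gives $\O{k\log^2\kappa}$, and multiplying by the oversampling factor gives $\O{\frac{k}{\eps^2}\log n\log^2\kappa}$ expected samples, with a Chernoff bound for high probability. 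The delicate points are monotonicity of the scores when $\lambda$ changes (larger $\lambda$ only decreases $\tau$), and making sure the epoch count depends on $\kappa$ rather than on the data magnitude.
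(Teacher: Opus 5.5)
The paper gives no proof of this statement; it imports it from \cite{BravermanDMMUWZ20}. Your outline is the standard route for that result: online ridge leverage scores with a tail-dependent regularizer, matrix Freedman with a union bound over prefixes, and a log-determinant potential summed over $O(\log\kappa)$ epochs in which the regularizer is roughly constant. Steps 1 and 3 are reasonable as a sketch. Step 2, however, makes a false claim. A ridge spectral approximation with $\lambda\approx\|\bA-\bA_{(k)}\|_F^2/k$ does \emph{not} by itself give projection-cost preservation with additive error $O(\eps)\|\bA-\bA_{(k)}\|_F^2$.

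To see where it breaks, write $\mathbf{E}=\bM^\top\bM-\bA^\top\bA$ and $\bQ=\mathbf{I}-\bP$, and let $\Pi$ project onto the bottom $d-k$ right singular directions of $\bA$. The cost error is $\Trace(\bQ\mathbf{E}\bQ)$. The ridge bound does make the following pieces $O(\eps)\|\bA-\bA\bP\|_F^2$: the head block, the cross blocks, and $\Trace(\bP\Pi\mathbf{E}\Pi\bP)$, which is a trace over only $k$ dimensions. The remaining term $\Trace(\Pi\mathbf{E}\Pi)=\|\bM\Pi\|_F^2-\|\bA\Pi\|_F^2$ is a trace over $d-k$ dimensions. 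For it, the ridge bound only gives $\eps(\|\bA\Pi\|_F^2+(d-k)\lambda)$.

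Here is a concrete counterexample. Let $\bA$ have $k$ large singular values and a tail of $m\ge k/\eps$ orthogonal directions, each with squared singular value $\mu$, so $\lambda=m\mu/k$. Then $\bM=\bA_{(k)}$ satisfies $(1-\eps)(\bA^\top\bA+\lambda\mathbf{I})\preceq\bM^\top\bM+\lambda\mathbf{I}\preceq\bA^\top\bA+\lambda\mathbf{I}$. This holds because on each tail direction the ratio is $\lambda/(\lambda+\mu)=m/(m+k)\ge 1-\eps$. Yet $\|\bM-\bM_{(k)}\|_F^2=0$ while $\|\bA-\bA_{(k)}\|_F^2=m\mu$.

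The statement has no additive constant independent of $\bP$ that could absorb this term, and neither does the paper's definition of a projection-cost preserving sketch. So your argument as written does not reach the theorem. Your inductive estimate of $\lambda_t$ from the sketch relies on Step 2, so it inherits the same gap.

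The missing ingredient is a separate scalar concentration bound for the tail mass. For each prefix $t$:
\begin{itemize}
\item Let $\Pi_t$ project onto the bottom $d-k$ right singular directions of $\bA^{(t)}$; this is deterministic for a fixed input stream.
\item Apply scalar Freedman to $\sum_{i\le t}(\xi_i/p_i-1)\|\ba_i\Pi_t\|_2^2$.
\item Take a union bound over $t$.
\end{itemize}
The increments are small for the following reason. For any $\lambda>0$,
\[\|\ba_i\Pi_t\|_2^2\le(\sigma_{k+1}^2(\bA^{(t)})+\lambda)\,\ba_i\big((\bA^{(t)})^\top\bA^{(t)}+\lambda\mathbf{I}\big)^{-1}\ba_i^\top.\]
Take $\lambda=\Theta(\lambda_t^*)$, where $\lambda^*_t=\|\bA^{(t)}-\bA^{(t)}_{(k)}\|_F^2/k$. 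Then $\sigma_{k+1}^2(\bA^{(t)})\le k\lambda_t^*$, and the quadratic form is $O(\tau_i)$ on the good event. So each rescaled term is at most $O(\eps^2/\log n)\cdot\|\bA^{(t)}-\bA^{(t)}_{(k)}\|_F^2$, and the predictable variance is bounded the same way.

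A smaller point in Step 1: normalize the martingale by the deterministic regularizer $\lambda^*_t$ (times a constant), not by the algorithm's sample-dependent $\lambda_t$. Normalizing by a random matrix breaks the martingale structure. Then use $\lambda_i\le O(\lambda^*_t)$ for $i\le t$, together with monotonicity of the scores in $\lambda$, to show that the actual sampling probabilities dominate the ones the analysis needs.
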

In particular, if the online condition number of the stream is upper bounded by $\poly(n)$, then \thmref{thm:online:pcp} states that online ridge leverage sampling achieves an online coreset of size $\O{\frac{k}{\eps^2}\log^3 n}$. 
By applying \lemref{lem:update:recourse}, it follows that simply maintaining the optimal rank-$k$ subspace for the online coreset at all times, we have \thmref{thm:condno:recourse}. 

\section{Algorithm for Relative Error}
\seclab{sec:mult:err}
In this section, we give our algorithm for $(1+\eps)$-multiplicative relative error at all times in the stream. 
Let $t\in[n]$, let $\bx_t\in\{-\Delta,\ldots,\Delta-1,\Delta\}^d$ and let $\bX_t=\bx_1\circ\ldots\circ\bx_t$. 
For each $\bX_t\in\mathbb{R}^{t\times d}$, we compute a low-rank approximation $\bU_t\bV_t$ of $\bX_t$, where $\bU_t\in\mathbb{R}^{t\times k}$ and $\bV_t\in\mathbb{R}^{k\times d}$ are the factors of $\bX_t$. 
We abuse notation and write $\bV_t$ as a set $V_t$ of $k$ points in $\mathbb{R}^d$. 
Note that the quantity $\sum_{t=1}^n|V_t\setminus V_{t-1}|$ is an upper bound on the recourse or the consistency cost. 
Thus in this section, we interchangeably refer to this quantity as the recourse or the consistency cost, as we can lower bound this sharper quantity. 

Our algorithm performs casework on the contribution of the bottom $\sqrt{k}$ singular values of the top $k$. 
If the contribution is small, the corresponding singular vectors can be replaced without substantially increasing the error. 
Thus, each time a new row arrives, we simply replace one of the bottom singular vectors with the new row. 
On the other hand, if the contribution is large, it means that the optimal solution cannot be projected away too much from these directions, or else the optimal low-rank approximation cost will also significantly increase. 
Thus we can simply choose the optimal set of top right $k$ singular vectors at each time, because there will be substantial overlap between the new subspace and the old subspace. 
We give our algorithm in full in \algref{alg:lra}. 

\begin{algorithm}[!htb]
\caption{Relative-error algorithm for low-rank approximation with low recourse}
\alglab{alg:lra}
\begin{algorithmic}[1]
\REQUIRE{Rows $\ba_1,\ldots,\ba_n$ of input matrix $\bA\in\mathbb{R}^{n\times d}$ with integer entries bounded in magnitude by $M$}
\ENSURE{$\left(1+\frac{\eps}{4}\right)$-approximation to the cost of the optimal low-rank approximation at all times}
\FOR{each row $\ba_t$}
\STATE{$\OPT\gets\sum_{i=k+1}^d\sigma_i^2(\bA_t)$}
\STATE{$\HEAVY\gets\TRUEFLAG$, $C\gets 0$, $c\gets 0$}
\IF{$\OPT\ge\left(1+\frac{\eps}{4}\right)C$ or ($c=\sqrt{k}$ and $\HEAVY=\FALSEFLAG$) or ($c=k$ and $\HEAVY=\TRUEFLAG$)}
\STATE{$C\gets\OPT$, $c\gets 0$}
\STATE{$\bV\gets\Recluster(\bA^{(t)},k)$}
\IF{$\sum_{i=k-\sqrt{k}}^k\sigma_i^2(\bA^{(t)})\ge\frac{\eps}{3}\cdot C$}
\STATE{$\HEAVY\gets\TRUEFLAG$}
\ELSE
\STATE{$\HEAVY\gets\FALSEFLAG$}
\ENDIF
\ELSE
\IF{$\HEAVY=\FALSEFLAG$}
\STATE{Let $\bv$ be the unit vector in $\bV$ that minimizes $\|\bA^{(s)}\bv\|_2^2$, where $s$ was the most recent time $\Recluster$ was called}
\STATE{Replace $\bv$ with $\frac{1}{\|\ba_t\|_2}\cdot\ba_t$ in $\bV$}
\COMMENT{Ignore all zero rows}
\STATE{$c\gets c+1$}
\ELSIF{$\HEAVY=\TRUEFLAG$}
\IF{$\|\bA^{(t)}-\bA^{(t)}\bV^\top\bV\|_F^2\ge\left(1+\frac{\eps}{2}\right)\OPT$}
\STATE{$\bV\gets\Recluster(\bA^{(t)},k)$}
\STATE{$c\gets c+1$}
\ENDIF
\ENDIF
\ENDIF
\STATE{Return $\bV$}
\ENDFOR
\end{algorithmic}
\end{algorithm}

\paragraph{Correctness.} 
For the purposes of discussion, say that an epoch is the set of times during which the optimal low-rank approximation cost has not increased by a multiplicative $(1+\eps)$-approximation. 
We first show the correctness of our algorithm across the times $t$ during epochs in which $\HEAVY$ is set to $\FALSEFLAG$. 
That is, we show that our algorithm maintains a $(1+\eps)$-multiplicative approximation to the optimal low-rank approximation cost across all times $t$ in an epoch where the bottom $\sqrt{k}$ singular values of the top $k$ right singular values do not contribute significant mass. 
\begin{restatable}{lemma}{lemlightcorrect}
\lemlab{lem:light:correct}
Consider a time $s$ during which $c$ is reset to $0$. 
Suppose $\HEAVY$ is set to $\FALSEFLAG$ at time $s$ and $c$ is not reset to $0$ within the next $r$ steps, for $r\le\sqrt{k}$. 
Let $\bV^{(t)}$ be the output of $\bV$ at time $t$. 
Then $\bV^{(t)}$ provides a $\left(1+\frac{\eps}{2}\right)$-approximation to the cost of the optimal low-rank approximation of $\bA^{(t)}$ for all $t\in[s,s+r]$. 
\end{restatable}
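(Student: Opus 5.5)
The plan is to compare the cost of $\bV^{(t)}$ against $\OPT_t$ through the epoch-start time $s$. We split $\bA^{(t)}$ into the old rows $\bA^{(s)}$ and the new rows $\ba_{s+1},\ldots,\ba_t$.

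First, the structure of $\bV^{(t)}$. At time $s$, $\Recluster$ sets $\bV^{(s)}$ to the top $k$ right singular vectors $\bv_1,\ldots,\bv_k$ of $\bA^{(s)}$. Each later step replaces the remaining vector $\bv$ minimizing $\|\bA^{(s)}\bv\|_2^2$. Among the original singular vectors, this is the one with the smallest singular value still present. Since $r\le\sqrt{k}$, after $t-s$ replacements $\bV^{(t)}$ consists of $\bv_1,\ldots,\bv_{k-(t-s)}$ together with the normalized new rows $\ba_{s+1}/\|\ba_{s+1}\|_2,\ldots,\ba_t/\|\ba_t\|_2$. The removed vectors are among $\bv_{k-\sqrt{k}},\ldots,\bv_k$.

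For the new rows, each $\ba_i$ with $i\in(s,t]$ lies in the span of $\bV^{(t)}$, so it contributes zero residual. The one subtlety is that the stored vectors need not be orthonormal. I will interpret $\bV^{(t)}$ as its row span, i.e., measure cost with the orthogonal projection $\bP$ onto that span, as the recourse discussion does. Then $\ba_i\bP=\ba_i$.

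For the old rows, $\bP$ contains the projection onto $\mathrm{span}(\bv_1,\ldots,\bv_{k-(t-s)})$, and enlarging the projected span only decreases the residual. Hence
\[\|\bA^{(s)}-\bA^{(s)}\bP\|_F^2\le\sum_{i>k}\sigma_i^2(\bA^{(s)})+\sum_{i=k-\sqrt{k}}^{k}\sigma_i^2(\bA^{(s)}),\]
by orthogonality of the right singular vectors. The first sum is $\OPT_s$. Because $\HEAVY=\FALSEFLAG$ at time $s$, the second sum is less than $\frac{\eps}{3}C$ with $C=\OPT_s$. Combining this with the new-row bound gives
\[\|\bA^{(t)}-\bA^{(t)}\bP\|_F^2\le\left(1+\frac{\eps}{3}\right)\OPT_s\le\left(1+\frac{\eps}{2}\right)\OPT_t,\]
using monotonicity $\OPT_s\le\OPT_t$, which follows from Cauchy interlacing (\thmref{thm:cauchy:interlace}) as in \lemref{lem:frob:correct}.

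I expect the main obstacle to be bookkeeping rather than analysis. Three points need checking. First, the minimizing vector is always an original $\bv_j$ and never a previously inserted row. This can fail if an inserted row $\ba_i$ happens to have small $\|\bA^{(s)}\ba_i\|$. If so, I will instead argue that the algorithm is understood to replace only original singular vectors. Alternatively, I can bound the loss directly: whatever remains still spans at least $k-\sqrt{k}-1$ of the top directions. Second, zero rows are skipped. Third, the index range $k-\sqrt{k}$ to $k$ covers the at most $r\le\sqrt{k}$ removed vectors.
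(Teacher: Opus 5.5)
Your proposal follows the paper's proof. Both compare against the configuration in which the $t-s$ smallest of the top $k$ right singular vectors of $\bA^{(s)}$ are replaced by $\ba_{s+1},\ldots,\ba_t$, note that the new rows then have zero residual, bound the old rows by $\OPT_s+\sum_{i=k-\sqrt{k}}^{k}\sigma_i^2(\bA^{(s)})<\left(1+\frac{\eps}{3}\right)\OPT_s$ using $\HEAVY=\FALSEFLAG$, and conclude with $\OPT_s\le\OPT_t$.

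Your first bookkeeping concern is real, and the paper covers it only with ``by optimality of $\bv$.'' A minimization over all of $\bV$ would evict an inserted row $\ba_i$ orthogonal to the row span of $\bA^{(s)}$, since $\|\bA^{(s)}\ba_i\|_2=0$. That leaves a residual as large as $\|\ba_i\|_2^2$, which nothing bounds in terms of $\OPT_t$. So keep your first fix: only surviving original singular vectors are replaced, which is what the paper's comparison tacitly assumes. Drop the ``bound the loss directly'' fallback, since it controls only the old rows and not that evicted row.
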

\begin{proof}
Consider $\|\bA^{(t)}-\bA^{(t)}(\bV^{(t)})^\top\bV^{(t)}\|_F^2$. 
Let $\bV^{(t')}$ be the matrix $\bV^{(s)}$ with the $t-s$ vectors corresponding to the smallest $t-s$ singular values of $\bA^{(s)}$ instead being replaced with the rows $\ba_{s+1},\ldots,\ba_t$. 
By optimality of $\bv$, we have
\[\|\bA^{(t)}-\bA^{(t)}(\bV^{(t)})^\top\bV^{(t)}\|_F^2\le\|\bA^{(t)}-\bA^{(t)}(\bV^{(t')})^\top\bV^{(t')}\|_F^2.\]
Since $\ba_{t}\in\bV^{(t')}$ for all $t\in[s,s+\sqrt{k}]$, then we have 
\[\|\bA^{(t)}-\bA^{(t)}(\bV^{(t')})^\top\bV^{(t')}\|_F^2=\|\bA^{(s)}-\bA^{(s)}(\bV^{(t')})^\top\bV^{(t')}\|_F^2.\]
In other words, the rows $\ba_{s+1},\ldots,\ba_{t}$ cannot contribute to the low-rank approximation cost of $\bV^{(t')}$ because they are contained within the span of $\bV^{(t')}$. 
It remains to upper bound $\|\bA^{(s)}-\bA^{(s)}(\bV^{(t')})^\top\bV^{(t')}\|_F^2$. 
We have 
\[C=\sum_{i=k+1}^d\sigma_i^2(\bA_s)=\|\bA^{(s)}-\bA^{(s)}(\bV^{(s)})^\top\bV^{(s)}\|_F^2\]
and $\sum_{i=k-\sqrt{k}}^k\sigma_i^2(\bA_s)<\frac{\eps}{3}\cdot C$ since $\HEAVY$ is set to $\FALSEFLAG$. 
Note that since $t\in[s,s+\sqrt{k}]$, then $\bA^{(t')}$ contains the top $k-\sqrt{k}$ singular vectors of $\bA^{(s)}$. 
Therefore, it follows that 
\[\|\bA^{(s)}-\bA^{(s)}(\bV^{(t')})^\top\bV^{(t')}\|_F^2\le\|\bA^{(s)}-\bA^{(s)}(\bV^{(s)})^\top\bV^{(s)}\|_F^2+\sum_{i=k-\sqrt{k}}^k\sigma_i^2(\bA_s)\le\left(1+\frac{\eps}{3}\right)\cdot C.\] 
Since the cost of the optimal low-rank approximation of $\bA^{(t)}$ is at least the cost of the optimal low-rank approximation of $\bA^{(s)}$ for $t>s$, then $\bV^{(t)}$ provides a $\left(1+\frac{\eps}{3}\right)$-approximation to the cost of the optimal low-rank approximation of $\bA^{(t)}$ for all $t\in[s,s+\sqrt{k}]$. 
\end{proof}

Next, we show the correctness of our algorithm across the times $t$ during epochs in which $\HEAVY$ is set to $\TRUEFLAG$. 
That is, we show that our algorithm maintains a $(1+\eps)$-multiplicative approximation to the optimal low-rank approximation cost across all times $t$ in an epoch where the bottom $\sqrt{k}$ singular values of the top $k$ right singular values do contribute significant mass. 
\begin{restatable}{lemma}{lemheavycorrect}
\lemlab{lem:heavy:correct}
Consider a time $t$ during which $\HEAVY$ is set to $\TRUEFLAG$. 
Let $\bV^{(t)}$ be the output of $\bV$ at time $t$. 
Then $\bV^{(t)}$ provides a $\left(1+\frac{\eps}{2}\right)$-approximation to the cost of the optimal low-rank approximation of $\bA^{(t)}$. 
\end{restatable}
\begin{proof}
Let $\OPT=\sum_{i=k+1}^d\sigma^2(\bA_t)$. 
We have two cases. 
Either $\|\bA^{(t)}-\bA^{(t)}(\bV^{(t-1)})^\top\bV^{(t-1)}\|_F^2\ge\left(1+\frac{\eps}{2}\right)\cdot\OPT$ or $\|\bA^{(t)}-\bA^{(t)}(\bV^{(t-1)})^\top\bV^{(t-1)}\|_F^2<\left(1+\frac{\eps}{2}\right)\cdot\OPT$. 

In the former case, $\bV^{(t-1)}$ is already a $\left(1+\frac{\eps}{2}\right)$-approximation to the cost of the optimal low-rank approximation of $\bA^{(t)}$ and the algorithm sets $\bV^{(t)}=\bV^{(t-1)}$, so that $\bV^{(t)}$ is also a $\left(1+\frac{\eps}{2}\right)$-approximation to the cost of the optimal low-rank approximation of $\bA^{(t)}$. 

In the latter case, the algorithm sets $\bV^{(t)}$ to be the output of $\Recluster(\bA^{(t)},k)$, i.e., the top $k$ eigenvectors of $\bA^{(t)}$, in which case
$\|\bA^{(t)}-\bA^{(t)}(\bV^{(t)})^\top\bV^{(t)}\|_F^2=\OPT$. 
Thus in both cases, $\bV^{(t)}$ provides a $\left(1+\frac{\eps}{2}\right)$-approximation to the cost of the optimal low-rank approximation of $\bA^{(t)}$. 
\end{proof}

Correctness at all times now follows from \lemref{lem:light:correct} and \lemref{lem:heavy:correct}: 
\begin{restatable}{lemma}{lemcorrect}
\lemlab{lem:correct}
At all times $t\in[n]$, \algref{alg:lra} provides a $\left(1+\frac{\eps}{2}\right)$-approximation to the cost of the optimal low-rank approximation of $\bA^{(t)}$. 
\end{restatable}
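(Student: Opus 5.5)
Every time $t\in[n]$ falls into a \emph{segment} of \algref{alg:lra}: a maximal interval $[s,s']$ that starts at a time $s$ where the reset branch runs (so $C$ and $c$ are reset to $0$ and $\Recluster(\bA^{(s)},k)$ is called) and lasts until the next such reset. The value of $\HEAVY$ is fixed at time $s$ and does not change within the segment. The plan is to split into two cases according to this value and invoke \lemref{lem:light:correct} or \lemref{lem:heavy:correct}. The reset time $s$ itself is immediate in both cases: $\bV^{(s)}$ is the output of $\Recluster$, so by \corref{cor:recluster:correct} its cost is exactly $\OPT_s$.

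\emph{Light segments} ($\HEAVY=\FALSEFLAG$ at time $s$). On each non-reset step, $c$ increases by exactly one, because the light branch always replaces a vector and increments $c$. The reset condition forces a new segment as soon as $c=\sqrt{k}$. Hence the segment has the form $[s,s+r]$ with $r\le\sqrt{k}$, and $c$ is not reset inside it. This is precisely the hypothesis of \lemref{lem:light:correct}, which gives a $(1+\frac{\eps}{2})$-approximation at every $t\in[s,s+r]$. I would check one point carefully: the proof of \lemref{lem:light:correct} compares against $\bV^{(s)}$ with at most $\sqrt{k}$ of its bottom vectors replaced by incoming rows. This needs that at most $\sqrt{k}$ replacements occur, which the counter guarantees. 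Zero rows are ignored; they do not change the cost, so they cause no difficulty.

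\emph{Heavy segments} ($\HEAVY=\TRUEFLAG$ at time $s$). For $t>s$ in the segment, the algorithm either keeps $\bV^{(t-1)}$, when its cost on $\bA^{(t)}$ is below $(1+\frac{\eps}{2})\OPT_t$, or recomputes via $\Recluster$, which gives cost exactly $\OPT_t$. \lemref{lem:heavy:correct} applies directly, because it only uses this dichotomy at time $t$. I would take the proof's ``$\ge$'' case to be the one that triggers $\Recluster$ (the printed proof appears to swap the two cases); the conclusion is the same either way.

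Combining the two cases with the reset times covers all $t$. The main obstacle is bookkeeping rather than new analysis. As printed, \algref{alg:lra} re-initializes $\HEAVY$, $C$, $c$ inside the loop, and I would read these as one-time initializations before the loop. I would also confirm that the reset triggered by $\OPT\ge(1+\frac{\eps}{4})C$ can only shorten a segment, never lengthen it, so that the segment-length bound used in the light case still holds.
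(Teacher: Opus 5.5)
Your proposal follows the paper's proof essentially verbatim: the paper also splits the times into reset times (where $\Recluster$ gives cost exactly $\OPT_t$), non-reset times with $\HEAVY=\FALSEFLAG$ (via \lemref{lem:light:correct}), and non-reset times with $\HEAVY=\TRUEFLAG$ (via \lemref{lem:heavy:correct}), and your remarks on the swapped cases and the in-loop re-initialization are accurate. One caution on your side check of the light case: the counter only bounds the number of replacements, but the comparison in \lemref{lem:light:correct} also needs every inserted row $\ba_{s+1},\ldots,\ba_t$ to stay in $\bV$ (with cost measured by projecting onto the row span of $\bV$, which is no longer orthonormal), whereas the literal rule minimizes $\|\bA^{(s)}\bv\|_2$ over all of $\bV$ and can evict an earlier inserted row, e.g.\ one orthogonal to the row span of $\bA^{(s)}$, so this step, for you as for the paper, implicitly assumes only original bottom singular vectors are ever evicted.
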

\begin{proof}
Let $\bV^{(t)}$ be the output of $\bV$ at time $t$. 
We first consider the times $t$ where $c$ is not reset to zero and $\HEAVY$ is set to $\FALSEFLAG$. 
By \lemref{lem:light:correct}, the output $\bV^{(t)}$ is provides a $\left(1+\frac{\eps}{2}\right)$-approximation to the cost of the optimal low-rank approximation of $\bA^{(t)}$ at these times. 

We next consider the times $t$ where $c$ is not reset to zero and $\HEAVY$ is set to $\TRUEFLAG$. 
By \lemref{lem:heavy:correct}, the output $\bV^{(t)}$ is provides a $\left(1+\frac{\eps}{2}\right)$-approximation to the cost of the optimal low-rank approximation of $\bA^{(t)}$ at these times. 

Finally, we consider the times $t$ where $c$ is reset to zero. 
At these times, the algorithm sets $\bV^{(t)}$ to be the output of $\Recluster(\bA^{(t)},k)$, i.e., the top $k$ eigenvectors of $\bA$, in which case
$\|\bA^{(t)}-\bA^{(t)}(\bV^{(t)})^\top\bV^{(t)}\|_F^2=\OPT$. 
Therefore, \algref{alg:lra} provides a $\left(1+\frac{\eps}{2}\right)$-approximation to the cost of the optimal low-rank approximation of $\bA^{(t)}$ at all times $t\in[n]$. 
\end{proof}

We next bound the recourse between epochs when the bottom $\sqrt{k}$ singular values of the top $k$ do not contribute significant mass. 

\begin{restatable}{lemma}{lemheavyrecoursestep}
\lemlab{lem:heavy:recourse:step}
Consider a time $t$ during which $\HEAVY$ is set to $\TRUEFLAG$. 
Let $\bV^{(t)}$ be the output of $\bV$ at time $t$. 
Suppose $\bV^{(t-1)}$ fails to be a $\left(1+\frac{\eps}{2}\right)$-approximation to the optimal low-rank approximation cost. 
Then $\Recourse(\bV^{(t)},\bV^{(t-1)})\le\sqrt{k}$. 
\end{restatable}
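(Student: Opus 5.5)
We argue by contradiction, comparing the two optimal subspaces through the heavy tail of the top-$k$ spectrum. Let $s\le t$ be the most recent time at which $\Recluster$ was called with the epoch reset. At that time $\HEAVY=\TRUEFLAG$, so
\[\sum_{i=k-\sqrt{k}}^k\sigma_i^2(\bA^{(s)})\ge\frac{\eps}{3}\cdot C, \qquad C=\OPT_s.\]
Because $t$ lies in the same epoch, $\OPT_t<(1+\frac{\eps}{4})C$. Since $\bV^{(t-1)}$ fails to be a $(1+\frac{\eps}{2})$-approximation, the algorithm sets $\bV^{(t)}=\Recluster(\bA^{(t)},k)$, the top-$k$ right singular space of $\bA^{(t)}$. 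The earlier $\bV^{(t-1)}$ was itself produced by $\Recluster$, either at time $s$ or at a later heavy step $t_0<t$. So in either case $\bV^{(t-1)}$ is the optimal subspace of some $\bA^{(t_0)}$ with $s\le t_0<t$. We then need to show that two optimal subspaces at times $t_0<t$ within one heavy epoch are close, with $\|\bP_{t_0}-\bP_t\|_F^2\le\sqrt{k}$.

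Assume instead that $\Recourse(\bV^{(t)},\bV^{(t-1)})>\sqrt{k}$. Since $\|\bP-\bQ\|_F^2=2k-2\Trace(\bP\bQ)$, the mass of $\bV^{(t_0)}$ lying outside $\bV^{(t)}$ satisfies
\[k-\Trace(\bP_{t_0}\bP_t)>\frac{\sqrt{k}}{2}.\]
Evaluate the cost of $\bV^{(t)}$ on the prefix $\bA^{(t_0)}$:
\[\|\bA^{(t_0)}-\bA^{(t_0)}\bP_t\|_F^2=\|\bA^{(t_0)}\|_F^2-\Trace\bigl(\bP_t(\bA^{(t_0)})^\top\bA^{(t_0)}\bigr).\]
We lower bound this using the eigen-decomposition of $(\bA^{(t_0)})^\top\bA^{(t_0)}$. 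By a Ky Fan/von Neumann-type argument, a $k$-dimensional projection that misses at least $\sqrt{k}/2$ units of trace mass from the top-$k$ eigenspace must lose at least the sum of the $\sqrt{k}/2$ smallest top-$k$ eigenvalues, minus what it gains from eigenvalues $\sigma_{k+1}^2,\dots$. Concretely, the loss is at least
\[\sum_{i\le k}\sigma_i^2\bigl(1-w_i\bigr)-\sum_{i>k}\sigma_i^2 w_i \qquad\text{where } w_i\in[0,1],\ \sum_i w_i=k.\]
The minimum of this is attained by taking mass away from the bottom indices $k-\sqrt{k}/2,\dots,k$ and giving it to $k+1$.

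This should give an extra cost of order $\sum_{i=k-\sqrt{k}/2}^k(\sigma_i^2-\sigma_{k+1}^2)$. We want to compare it with $\frac{\eps}{3}C$ via the heavy condition, using Cauchy interlacing (\thmref{thm:cauchy:interlace}) to transfer the heavy-tail bound from $\bA^{(s)}$ to $\bA^{(t_0)}$ (singular values only grow). Since the rows after $t_0$ only add cost, the cost of $\bV^{(t)}$ on $\bA^{(t)}$ is then at least $\OPT_{t_0}$ plus this extra, which should exceed $(1+\frac{\eps}{4})C>\OPT_t$. That contradicts the optimality of $\bV^{(t)}$.

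The main obstacle is the subtraction of $\sigma_{k+1}^2$ terms. If the spectrum is flat around index $k$, then large recourse is cheap and the argument collapses. To handle this, I would first try to use the threshold $\frac{\eps}{3}C$ together with $C\ge\sigma_{k+1}^2$, aiming for a bound like $\sigma_i^2-\sigma_{k+1}^2\gtrsim\frac{\eps}{\sqrt{k}}C$. If that fails, I would fall back on a gap-free comparison: measure the recourse only through the cost increase on $\bA^{(t)}$ relative to $\OPT_t$, and accept a constant-factor loss in the $\sqrt{k}$ bound.
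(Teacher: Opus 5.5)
Your setup is sound: the identity $\|\bP-\bQ\|_F^2=2k-2\Trace(\bP\bQ)$, the observation that $\bV^{(t-1)}$ is the \Recluster output at some $t_0\in[s,t)$, and the Ky Fan accounting. That accounting correctly shows that displacing $\mu$ units of trace mass from the top-$k$ eigenspace of $(\bA^{(t_0)})^\top\bA^{(t_0)}$ costs only the \emph{gaps} between the displaced eigenvalues and the ones picked up outside. The gap in the proposal is the next step. You need this extra cost to exceed $\frac{\eps}{4}C$, and nothing supplies it. The \HEAVY condition lower-bounds the values, $\sum_{i=k-\sqrt{k}}^{k}\sigma_i^2\ge\frac{\eps}{3}C$, not the differences $\sigma_i^2-\sigma_{k+1}^2$. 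So the hoped-for bound $\sigma_i^2-\sigma_{k+1}^2\gtrsim\frac{\eps}{\sqrt{k}}C$ does not follow. Your fallback of measuring recourse through the cost increase on $\bA^{(t)}$ fails for the same reason: when the spectrum is flat near index $k$, large rotations are almost free.

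This is not a technicality; the statement itself fails, so no completion of your argument exists.

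\emph{Construction.} Take $d=2k$, $\eps\le 3/\sqrt{k}$, $0<\gamma<\eps/4$, and a reset time $s$ (arranged by a suitable prefix) with $(\bA^{(s)})^\top\bA^{(s)}=\mathrm{diag}\bigl((1+\gamma)I_k,I_k\bigr)$. Then $C=\OPT_s=k$, and $\HEAVY=\TRUEFLAG$ since $(\sqrt{k}+1)(1+\gamma)>\sqrt{k}\ge\frac{\eps}{3}k$. Also $\bV^{(s)}=\mathrm{span}(\bfe_1,\ldots,\bfe_k)$. Now append the rows $\sqrt{\eta}\,\bfe_{k+1},\ldots,\sqrt{\eta}\,\bfe_{2k}$, where $\beta:=\eta-\gamma(1+\frac{\eps}{2})\in\bigl(\frac{\eps}{2},\frac{\eps}{2}\cdot\frac{k}{k-1}\bigr)$.

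\emph{Verification.} After $m$ of these rows, $\OPT=k+m\gamma<(1+\frac{\eps}{4})C$, so the epoch never ends. The stale subspace costs $k+m\eta$, and $k+m\eta-(1+\frac{\eps}{2})(k+m\gamma)=m\beta-\frac{\eps}{2}k$. This is negative for $m\le k-1$, so there is no recomputation and $c$ stays $0$; it is positive at $m=k$. So at $t=s+k$ every hypothesis of the lemma holds. Yet \Recluster returns $\mathrm{span}(\bfe_{k+1},\ldots,\bfe_{2k})$, which is orthogonal to $\bV^{(t-1)}$, so the recourse is $2k$. In your language, $\mu=k$ units are displaced while the extra cost on $\bA^{(s)}$ is only $k\gamma$. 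All conditions are open, so the entries can be made integral after scaling.

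The paper's own proof breaks at exactly the point you flagged. Its min-max step asserts that once $\sqrt{k}$ directions are displaced, $\OPT_t>C+\sum_{i=k-\sqrt{k}}^{k}\sigma_i^2\ge(1+\frac{\eps}{3})C$. This silently drops the mass the new subspace gains outside $\bV^{(t-1)}$; in the example $\OPT_t=k(1+\gamma)$. Any fix must change the statement (for example, by assuming a spectral gap at index $k$) or change how recourse is accounted for; repairing the proof alone is not enough.
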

\begin{proof}
Let $s$ be the most recent time at which $\HEAVY$ was set to $\TRUEFLAG$, so that $s\le t-1$. 
Let $\bV^{(s)}$ denote the top $k$ right singular vectors of $\bA^{(s)}$ and note that by condition of $\HEAVY$ being set to $\TRUEFLAG$,
\[\sum_{i=k-\sqrt{k}}^k\sigma_i^2(\bV^{(s)})\ge \frac{\eps}{3}\cdot\|\bA^{(s)}-\bA^{(s)}(\bV^{(s)})^\top\bV^{(s)}\|_F^2.\]

Now, let $r$ be the time at which $\bV^{(t-1)}$ was first set, so that $\bV^{(t-1)}$ are the top $k$ right singular vectors of $\bA^{(r)}$. 
Since $r\ge s$, then by the interlacing of singular values, i.e., \thmref{thm:cauchy:interlace}, we have that
\[\sigma_i(\bV^{(r)})\ge\sigma_i(\bV^{(s)}),\]
for all $i\in[d]$. 
Therefore,
\[\sum_{i=k-\sqrt{k}}^k\sigma_i^2(\bV^{(r)})\ge \frac{\eps}{3}\cdot\|\bA^{(s)}-\bA^{(s)}(\bV^{(s)})^\top\bV^{(s)}\|_F^2,\]
since all singular values are by definition non-negative. 

Suppose by way of contradiction that we have 
\[\Recourse(\bV^{(t)},\bV^{(t-1)})=\Recourse(\bV^{(t)},\bV^{(r)})>\sqrt{k}.\] 
Then at least $\sqrt{k}$ singular vectors for $\bV^{(t-1)}$ have been displaced and thus by the min-max theorem, i.e., \thmref{thm:min:max}, 
\begin{align*}
\|\bA^{(t)}-\bA^{(t)}(\bV^{(t)})^\top\bV^{(t)}\|_F^2&>\|\bA^{(r)}-\bA^{(s)}(\bV^{(s)})^\top\bV^{(s)}\|_F^2+\sum_{i=k-\sqrt{k}}^k\sigma_i^2(\bV^{(r)})\\
&\ge\left(1+\frac{\eps}{3}\right)\cdot\|\bA^{(s)}-\bA^{(s)}(\bV^{(s)})^\top\bV^{(s)}\|_F^2.
\end{align*}
Furthermore, because $\bV^{(t)}$ is the top $k$ right singular vectors of $\bA^{(t)}$, then $\|\bA^{(t)}-\bA^{(t)}(\bV^{(t)})^\top\bV^{(t)}\|_F^2$ is the cost of the optimal low-rank approximation at time $t$. 
In other words, the optimal low-rank approximation cost at time $t$ would be larger than $\left(1+\frac{\eps}{3}\right)\cdot\|\bA^{(s)}-\bA^{(s)}(\bV^{(s)})^\top\bV^{(s)}\|_F^2$.  

On the other hand, since $s$ and $t$ are in the same epoch, then 
\begin{align*}
\|\bA^{(s)}-\bA^{(s)}(\bV^{(s)})^\top\bV^{(s)}\|_F^2\le\|\bA^{(t)}-\bA^{(t)}(\bV^{(t)})^\top\bV^{(t)}\|_F^2\le\left(1+\frac{\eps}{4}\right)\|\bA^{(s)}-\bA^{(s)}(\bV^{(s)})^\top\bV^{(r)}\|_F^2,
\end{align*}
which is a contradiction. 
Hence, it follows that $\Recourse(\bV^{(t)},\bV^{(t-1)})\le\sqrt{k}$. 
\end{proof}

\paragraph{Recourse.}
We first bound the recourse if the bottom $\sqrt{k}$ singular values of the top $k$ right singular values do not contribute significant mass. 
\begin{restatable}{lemma}{lemlightrecoursetotal}
\lemlab{lem:light:recourse:total}
Suppose $\HEAVY$ is set to $\FALSEFLAG$ at time $s$ and $c$ is reset to $0$ at time $s$. 
If $c$ is not reset to $0$ within the next $r$ steps, for $r\le\sqrt{k}$, then $\sum_{i=s+1}^{s+r}\Recourse(\bV^{(s)},\bV^{(s-1)})\le r$.
\end{restatable}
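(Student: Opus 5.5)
The claim follows from a per-step bound: while $\HEAVY=\FALSEFLAG$ and $c$ has not been reset, each arriving row changes only one vector of $\bV$. The summand in the statement should be read as $\Recourse(\bV^{(i)},\bV^{(i-1)})$. I would show $\Recourse(\bV^{(i)},\bV^{(i-1)})\le 1$ for every $i\in[s+1,s+r]$ and then sum the $r$ terms.

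\textbf{Fixing the update.} Take any step $i\in[s+1,s+r]$. Since $c$ is not reset and $\HEAVY=\FALSEFLAG$, the algorithm takes the light branch. There it removes a single unit vector $\bv$ from $\bV^{(i-1)}$ and inserts $\ba_i/\|\ba_i\|_2$; a zero row is ignored, and then the recourse is $0$. So $\bV^{(i-1)}$ and $\bV^{(i)}$ share $k-1$ vectors.

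\textbf{Bounding one step.} I would use the paper's set-based surrogate $|V_i\setminus V_{i-1}|$, which Section~\ref{sec:mult:err} states is the quantity it calls recourse; it equals $1$ here. To tie this to the projection definition, let $\bW$ be the span of the $k-1$ shared vectors. Then
\[\bP_{i-1}=\bP_\bW+\bu_1\bu_1^\top,\qquad \bP_i=\bP_\bW+\bu_2\bu_2^\top\]
for unit vectors $\bu_1,\bu_2\perp\bW$, exactly as in the proof of \lemref{lem:update:recourse}. Hence $\|\bP_i-\bP_{i-1}\|_F^2=\|\bu_2\bu_2^\top-\bu_1\bu_1^\top\|_F^2=2-2\langle\bu_1,\bu_2\rangle^2\le 2$. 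This is a constant, and it matches the surrogate count up to the constants in the earlier equivalence lemma.

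\textbf{Main obstacle.} The inserted row need not be orthogonal to the vectors kept in $\bV$, so the span may drop in dimension or fail to be orthonormal. I would handle this by orthonormalizing $\ba_i$ against the retained $k-1$ vectors. If $\ba_i$ already lies in $\bW$, I would keep a rank-$k$ projection by leaving $\bv$ in place, so the recourse is $0$ and correctness, which only needs the new rows to lie in the span, is unaffected. With these choices each step costs at most $1$ in the surrogate measure (at most $2$ in squared projection distance). Summing over the $r\le\sqrt{k}$ steps gives the bound $r$.
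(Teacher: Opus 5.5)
Your proposal is correct and matches the paper's argument. In the light branch each arriving row swaps out exactly one vector of $\bV$, so each step costs at most one unit of the set-based recourse $|V_i\setminus V_{i-1}|$ that Section~\ref{sec:mult:err} adopts, and summing over the $r$ steps gives $r$. Your extra points go beyond the paper's two-line proof without changing its approach: reading the summand as $\Recourse(\bV^{(i)},\bV^{(i-1)})$, orthonormalizing the inserted row, and observing that the squared projection distance per step is $2-2\langle\bu_1,\bu_2\rangle^2\le 2$, i.e., only a constant factor worse.
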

\begin{proof}
Let $s$ be a time during which $c$ is reset to $0$ and $\HEAVY$ is set to $\FALSEFLAG$. 
Then for the next $r$ steps, each time a new row is received, then \algref{alg:lra} replaces a row of $\bV$ with the new row. 
Thus, the recourse is at most $r$. 
\end{proof}
At this point, we remark a subtlety in the analysis that is easily overlooked. 
Our general strategy is to show that each time the cost of the optimal low-rank approximation doubles, we should incur recourse $\O{k^{1.5}}$. 
One might then expect that because the matrix contains integer entries bounded by $\poly(n)$, then the cost of the optimal low-rank approximation can only double $\O{\log n}$ times, since it can be at most $\poly(n)$. 

Unfortunately, there exist constructions of anti-Hadamard integer matrices with dimension $n\times d$ but optimal low-rank cost as small as $\exp(-\Omega(k))$. 
Hence, the optimal cost can double $\O{k}$ times, thereby incurring total recourse $\O{k^{2.5}}$, which is undesirably large. 
Instead, we show that when the optimal low-rank cost is exponentially small, then the rank of the matrix must also be quite small, meaning that the recourse of our algorithm cannot be as large as in the full-rank case. 
To that end, we require a structural property, c.f., \lemref{lem:lra:cost:rank} that describes the cost of the optimal low-rank approximation, and parameterized to handle general matrices with rank $r>k$. 
This is to handle the case where the cost of the optimal low-rank approximation may be exponentially small in $k$. 
As a result, we have the following upper bound on the total recourse across all epochs when the bottom $\sqrt{k}$ singular values of the top $k$ do not contribute significant mass. 

\begin{restatable}{lemma}{lemlracostrank}
\lemlab{lem:lra:cost:rank}
Suppose $\bA\in\mathbb{Z}^{n\times d}$ is an integer matrix with rank $r>k$ and entries bounded in magnitude by $M$. 
Then the cost of the optimal low-rank approximation to $\bA$ is at least $(ndM^2)^{-\frac{k}{r-k}}$.
\end{restatable}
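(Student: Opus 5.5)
The plan is to bound the optimal cost $\sum_{i=k+1}^r\sigma_i^2(\bA)$ from below. The idea is to use the integrality of $\bA$ to get a lower bound on a product of nonzero singular values, and an upper bound on the top singular values from the entry bound $M$.

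\textbf{Step 1 (integrality).} Since $\bA$ has rank $r$, the matrix $\bA^\top\bA$ has exactly $r$ nonzero eigenvalues $\sigma_1^2,\ldots,\sigma_r^2$. Their product equals the $r$-th elementary symmetric polynomial of the eigenvalues. This is the sum of all $r\times r$ principal minors of $\bA^\top\bA$. By Cauchy--Binet, it also equals $\sum_{S,T}\det(\bA_{S,T})^2$, summed over $r$-subsets of rows and columns. Each $\det(\bA_{S,T})$ is an integer and at least one is nonzero, because rank is $r$. Hence
\[\prod_{i=1}^r\sigma_i^2(\bA)\ge 1.\]
This is the generalization of the \cite{ClarksonW09} argument, which treats the full-rank case.

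\textbf{Step 2 (top singular values are bounded).} For every $i$ we have $\sigma_i^2(\bA)\le\|\bA\|_F^2\le ndM^2$. Therefore
\[\prod_{i=1}^k\sigma_i^2\le (ndM^2)^k,\]
and so
\[\prod_{i=k+1}^r\sigma_i^2\ge (ndM^2)^{-k}.\]

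\textbf{Step 3 (AM--GM).} The tail product involves $r-k$ nonnegative terms. By AM--GM,
\[\OPT=\sum_{i=k+1}^r\sigma_i^2\ \ge\ (r-k)\Bigl(\prod_{i=k+1}^r\sigma_i^2\Bigr)^{1/(r-k)}\ \ge\ (r-k)\,(ndM^2)^{-\frac{k}{r-k}}.\]
Since $r-k\ge1$, the right-hand side is at least $(ndM^2)^{-\frac{k}{r-k}}$. Alternatively, one can skip AM--GM and use $\sigma_{k+1}^2\ge(\prod_{i>k}\sigma_i^2)^{1/(r-k)}$ directly.

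\textbf{Main obstacle.} The only delicate point is Step 1: justifying that the product of the nonzero squared singular values is at least $1$ when $\bA$ is not square or not of full rank. I would handle it through the Cauchy--Binet identity for the $r$-th coefficient of the characteristic polynomial of $\bA^\top\bA$. Concretely, the coefficient $e_r(\bA^\top\bA)$ is a nonnegative sum of squared integer minors, not all zero. Everything after that is a routine calculation.
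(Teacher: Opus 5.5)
Your proof is correct and follows the paper's argument: integrality forces the product of the nonzero eigenvalues of $\bA^\top\bA$ to be at least $1$, the top $k$ eigenvalues are each at most $ndM^2$, and $\sigma_{k+1}^2$ dominates the geometric mean of the tail (your Cauchy--Binet identity simply spells out the paper's claim that this product is an integer coefficient of the characteristic polynomial). Your AM--GM variant gains an extra factor of $r-k$. Your final step also correctly bounds the cost $\sum_{i>k}\sigma_i^2$ itself, rather than the stray square root that appears in the paper's last display.
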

\begin{proof}
Let $\bA\in\mathbb{Z}^{n\times d}$ be an integer matrix with rank $r>k$ and entries bounded in magnitude by $M$. 
Suppose without loss of generality that $n\ge d$. 
Let $\sigma_1\ge\ldots\ge\sigma_d\ge 0$ be the singular values of $\bA$ and let $\lambda_1\ge\ldots\ge\lambda_d\ge 0$ be the corresponding eigenvalues of $\bA^\top\bA$. 
Let $p(\lambda)=\lambda^{d-r}\prod_{i\in[r]}(\lambda-\lambda_i)$. 
Note that since the entries of $\bA$ are all integers, then the entries of $\bA^\top\bA$ are all integers and thus the coefficients of $p(\lambda)$ are all integers. 
In particular, since the coefficient of $\lambda^{d-r}$ in $p(\lambda)$ is the product of the nonzero eigenvalues of $\bA^\top\bA$, then $\prod_{i=1}^r\lambda_i\ge 1$. 
Since $\lambda_i=\sigma_i^2$ and $\sigma_i\ge 0$ for all $i\in[d]$, we also have $\prod_{i=1}^r\sigma_i\ge 1$. 

Moreover, the squared Frobenius norm satisfies
\[\sum_{i=1}^d\lambda_i=\sum_{i=1}^d\sigma_i^2=\|\bA\|_F^2\le ndM^2.\]
Thus, $\lambda_i\le ndM^2$ for all $i\in[d]$. 
Hence,
\[\lambda_{k+1}^{r-k}\ge\prod_{i=k+1}^r\lambda_i\ge\frac{1}{(ndM^2)^k}\prod_{i=1}^r\lambda_i\ge\frac{1}{(ndM^2)^k}.\]
Thus we have $\lambda_{k+1}\ge(ndM^2)^{-\frac{k}{r-k}}$. 
It follows that the optimal low-rank approximation cost is 
\[\sqrt{\sum_{i=k+1}^d\lambda_i}\ge\lambda_{k+1}\ge(ndM^2)^{-\frac{k}{r-k}}.\]
\end{proof}

We next bound the recourse between epochs when the bottom $\sqrt{k}$ singular values of the top $k$ contribute significant mass. 

\begin{restatable}{lemma}{lemheavyrecoursetotal}
\lemlab{lem:heavy:recourse:total}
Suppose $\HEAVY$ is set to $\TRUEFLAG$ at time $t$ and $c$ is not reset to $0$ within the next $r$ steps, for $r\le k$. 
Then $\sum_{i=t+1}^{t+r}\Recourse(\bV^{(t)},\bV^{(t-1)})\le r\sqrt{k}$.
\end{restatable}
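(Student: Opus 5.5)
The plan is to bound the total by summing per-step recourse over the $r$ steps following time $t$, and to show that each step contributes at most $\sqrt{k}$. (The summand in the statement should be read as $\Recourse(\bV^{(i)},\bV^{(i-1)})$.) Since $c$ is not reset within these steps, every time $i\in[t+1,t+r]$ lies in the same epoch as $t$, and $\HEAVY$ remains $\TRUEFLAG$ throughout. In particular, the optimal cost has not grown by a $(1+\frac{\eps}{4})$ factor relative to the value $C$ recorded at time $t$.

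At each step $i$ there are two cases, following the branch in \algref{alg:lra}. In the first case, $\|\bA^{(i)}-\bA^{(i)}\bV^\top\bV\|_F^2<(1+\frac{\eps}{2})\OPT$. Then the algorithm keeps $\bV^{(i)}=\bV^{(i-1)}$, so the step contributes $0$ to the recourse. In the second case, $\bV^{(i-1)}$ fails to be a $(1+\frac{\eps}{2})$-approximation and the algorithm calls $\Recluster$. This is exactly the hypothesis of \lemref{lem:heavy:recourse:step}, which gives $\Recourse(\bV^{(i)},\bV^{(i-1)})\le\sqrt{k}$.

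To check that the lemma applies, we need its setting to hold. The most recent time $\HEAVY$ was set to $\TRUEFLAG$ is $t$. The step $i$ lies in the same epoch as $t$, since no reset occurred, and $\bV^{(i-1)}$ is the output of the most recent $\Recluster$ call, made at some time in $[t,i-1]$. These are exactly the objects used in that lemma's proof.

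Summing the per-step bounds over $i=t+1,\ldots,t+r$ gives at most $r\sqrt{k}$ total. The counter $c$ increments only in the second case, which is consistent with the constraint $r\le k$ before the forced reset at $c=k$. If one prefers to count only recluster events, the bound still holds, because the number of such events is at most the number of steps.

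The main obstacle is making sure the invariants of \lemref{lem:heavy:recourse:step} really hold at every step of the window. In particular, we must confirm that the heavy-mass lower bound established at time $t$ persists to later recluster times $r'\ge t$. I would handle this, as in that proof, via \thmref{thm:cauchy:interlace}: singular values only grow when rows are appended, so $\sum_{j=k-\sqrt{k}}^k\sigma_j^2(\bA^{(r')})\ge\frac{\eps}{3}C$ continues to hold. Combined with the epoch condition, which says the optimal cost stays below $(1+\frac{\eps}{4})C$, this yields the contradiction inside that lemma, and so the per-step bound $\sqrt{k}$ holds uniformly across the window.
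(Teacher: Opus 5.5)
Your proposal is correct and follows the same route as the paper: the paper's proof applies \lemref{lem:heavy:recourse:step} at each of the $r$ steps in the window to get at most $\sqrt{k}$ per step, then sums to obtain $r\sqrt{k}$. Your explicit split into the no-recluster case (recourse $0$) and the recluster case (where the failure hypothesis of \lemref{lem:heavy:recourse:step} actually holds) only spells out what the paper's one-line proof leaves implicit.
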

\begin{proof}
By \lemref{lem:heavy:recourse:step}, we have $\Recourse(\bV^{(t)},\bV^{(t-1)})\le\sqrt{k}$ for all $i\in[t+1,t+r]$. 
Thus the total recourse is at most $r\sqrt{k}$. 
\end{proof}

We analyze the total recourse during times when we reset the counter $c$ because the cost of the optimal low-rank approximation has doubled.  
\begin{restatable}{lemma}{lemrecourseantihad}
\lemlab{lem:recourse:anti:had}
Let $T$ be the set of times at which $c$ is set to $0$ because $\sum_{i=k+1}^d\sigma_i^2(\bA_t)\ge 2C$. 
Then $\sum_{t\in T}\Recourse(\bV^{(t)},\bV^{(t-1)})\le\O{\frac{k}{\eps}\log^2(ndM)}$.
\end{restatable}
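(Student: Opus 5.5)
Each time $t\in T$ costs recourse at most $\min(k,\mathrm{rank}(\bA^{(t)}))$, since both the old and new factor span at most that many directions of the row span of $\bA^{(t)}$. Moreover, each reset time in $T$ at least doubles $\OPT$, because $\OPT\ge 2C$ and $C$ is the value of $\OPT$ at the previous reset. So I would group the times in $T$ by the rank of the current matrix $\bA^{(t)}$ and use \lemref{lem:lra:cost:rank} to bound how many doublings can occur in each group. The upper bound is uniform: $\OPT_t\le\|\bA^{(t)}\|_F^2\le ndM^2$ at all times.

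I would split into two regimes.

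\textbf{Regime $r\ge 2k$.} Here \lemref{lem:lra:cost:rank} gives $\OPT_t\ge (ndM^2)^{-k/(r-k)}\ge (ndM^2)^{-1}$. Once the rank reaches $2k$ it never decreases, since rows only arrive. From that point on, $\OPT$ can double at most $\O{\log(ndM)}$ times before exceeding $ndM^2$. Each such reset costs recourse at most $k$, so the total for this regime is $\O{k\log(ndM)}$.

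\textbf{Regime $r\le k$.} Here $\OPT=0$, so no resets of this type occur. (Equivalently, we maintain the full row span.)

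\textbf{Regime $k<r<2k$.} For each $i$ with $0\le i\le\log k$, consider the maximal interval of times during which $k+2^i\le r<k+2^{i+1}$; this is contiguous because the rank is monotone. Within the interval, \lemref{lem:lra:cost:rank} gives
\[
\OPT_t\ge (ndM^2)^{-k/2^i},
\]
so $\OPT$ can double at most $\O{\frac{k}{2^i}\log(ndM)}$ times there.

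Next I bound the recourse of each such reset by $\O{2^i}$ rather than $k$. Both the old and new factors have their top-$k$ subspaces inside the row span of $\bA^{(t)}$, which has dimension $r$. Two $k$-dimensional subspaces of an $r$-dimensional space intersect in dimension at least $2k-r\ge k-2^{i+1}$. The projection-difference argument used in \lemref{lem:update:recourse} then bounds $\|\bP-\bQ\|_F^2$ by $\O{r-k}=\O{2^i}$. One caveat: the old factor was computed on a prefix whose row span is contained in the current one, so it also lies in this row span.

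Multiplying, each dyadic class contributes $\O{\frac{k}{2^i}\log(ndM)}\cdot\O{2^i}=\O{k\log(ndM)}$. Summing over the $\O{\log k}$ classes gives $\O{k\log^2(ndM)}$.

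The $1/\eps$ factor comes from the fact that the algorithm's actual reset trigger is $\OPT\ge(1+\eps/4)C$ rather than $2C$. Replacing doubling by $(1+\eps/4)$-growth turns each count of $\log$ into $\frac{1}{\eps}\log$, yielding $\O{\frac{k}{\eps}\log^2(ndM)}$.

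\textbf{Main obstacle.} The delicate step is the rank-based recourse bound $\O{r-k}$. It requires that the factor held at time $t-1$ lies in the row span of $\bA^{(t)}$. This holds for \Recluster outputs, and in the light case the replaced vectors are normalized rows, which also lie in the span. The other piece to check is the dimension count on the intersection of the two subspaces. A smaller point is that the first reset time, when $C=0$, is counted once at cost $k$, which is absorbed into the bound.
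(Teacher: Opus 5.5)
Your proposal is correct and follows the paper's proof essentially step for step: the same split by rank into $r\le k$ (maintain the row span), dyadic classes $k+2^i\le r<k+2^{i+1}$ where \lemref{lem:lra:cost:rank} limits the number of $(1+\eps/4)$-growth resets to $\O{\frac{k}{\eps 2^i}\log(ndM)}$ at $\O{2^i}$ recourse each, and $r\ge 2k$ contributing $\O{\frac{k}{\eps}\log(ndM)}$. Your justification of the $\O{r-k}$ per-reset bound is more careful than the paper's bare assertion: you use that the intersection has dimension at least $2k-r$ inside the current row span, which also contains the old factor. The only nit is that with $C=\OPT=0$ the literal trigger $\OPT\ge 2C$ does fire, so the $r\le k$ regime should be charged via your parenthetical row-span argument (total recourse at most $k$), exactly as the paper does.
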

\begin{proof}
We define times $\tau_1<\tau_2$ and decompose $T$ into the times before $\tau_1$, the times between $\tau_1$ and $\tau_2$, and the times after $\tau_2$. 
Formally, let $t_0$ be the first time at which the optimal low-rank approximation cost is nonzero, i.e., the first time at which the input matrix has rank $k+1$. 
Let $T_0$ be the optimal low-rank approximation cost at time $t_0$. 
Define each epoch $i$ to be the times during which the cost of the optimal low-rank approximation is at least $\left(1+\frac{\eps}{4}\right)^i\cdot T_0$ and less than $\left(1+\frac{\eps}{4}\right)^{i+1}\cdot T_0$. 

Let $\tau_1$ be the first time the input matrix has rank $k+1$. 
Observe that before time $\tau_1$, we can maintain the entire row span of the matrix by adding each new linearly independent row to the low-rank subspace, thus preserving the optimal low-rank approximation cost at all times. 
This process incurs recourse at most $k$ in total before time $\tau_1$. 

Next, let $\tau_2$ be the first time such that the input matrix has rank at most $2k$. 
We analyze the recourse between times $\tau_1$ and $\tau_2$. 
By \lemref{lem:lra:cost:rank}, the cost of the optimal low-rank approximation to a matrix with integer entries bounded by $M$ and rank $r$ is at least $(ndM^2)^{-\frac{k}{r-k}}$. 
Thus if the rank $r$ of the matrix is at least $k+2^i$ and less than $k+2^{i+1}$, then the cost of the optimal low-rank approximation is at least $(ndM^2)^{-\frac{k}{2^i}}$. 
Thus there can be at most $\O{\frac{1}{\eps}\frac{k}{2^i}}$ epochs before the cost of the optimal low-rank approximation is at least $(ndM^2)^{-100}$. 
Let $j$ be the index of any epoch during which the cost of the optimal low-rank approximation exceeds $(ndM^2)^{-\frac{k}{2^i}}$. 
Since the total dimension of the span of the rows of $\bA$ that have arrived by epoch $j$ is $r$, then the recourse incurred by recomputing the top eigenspace is at most $\O{r-k}=\O{2^i}$. 
Since there can be at most $\O{\frac{1}{\eps}\frac{k}{2^i}\log(ndM)}$ epochs before the cost of the optimal low-rank approximation is at least $(ndM^2)^{-100}$, then the consistency cost for the times where the rank of the matrix is at least $k+2^i$ and less than $k+2^{i+1}$ is at most $\O{\frac{k}{\eps}\log(ndM)}$. 
Thus the total consistency cost between times $\tau_1$ and $\tau_2$ is $\sum_{i=0}^{\log k}\O{\frac{k}{\eps}\log(ndM)}=\O{\frac{k}{\eps}\log^2(ndM)}$. 

Note that after time $\tau_2$, the cost of the optimal low-rank approximation is at least $(ndM^2)^{-100}$. 
Since the squared Frobenius norm is at most $ndM^2$, then the low-rank cost is also at most $ndM^2$. 
Thus there can be at most $\O{\frac{1}{\eps}\log(ndM)}$ epochs after time $\tau_2$. 
Each epoch incurs recourse $\O{k}$ due to recomputing the top eigenspace of the prefix of $\bA$ that has arrived at that time. 
Thus the total recourse due to the set $T$ of times after $\tau_2$ is $\O{\frac{k}{\eps}\log(ndM)}$.

In summary, we can decompose $T$ into the times before $\tau_1$, the times between $\tau_1$ and $\tau_2$, and the times after $\tau_2$. 
The total recourse at times $t\in T$ before $\tau_1$ is at most $\O{k}$. 
The total recourse at times $t\in T$ between times $\tau_1$ and $\tau_2$ is at most $\O{\frac{k}{\eps}\log^2(ndM)}$. 
The total recourse at times $t\in T$  after time $\tau_2$ is at most $\O{k\log(ndM)}$. 
Hence, the total recourse of times $t\in T$ is at most $\O{\frac{k}{\eps}\log^2(ndM)}$.
\end{proof}

Finally, it remains to bound the total recourse at times when we transition from one epoch to another. 
Specifically, we bound the recourse at times $t$ where the optimal low-rank approximation cost has increased by a multiplicative $(1+\eps)$-approximation since the beginning of the previous epoch. 

\begin{restatable}{lemma}{lemrecourseepochs}
\lemlab{lem:recourse:epochs}
Let $T$ be the set of times at which $c$ is set to $0$. 
Then $\sum_{t\in T}\Recourse(\bV^{(t)},\bV^{(t-1)})\le\O{n\sqrt{k}+\frac{k}{\eps}\log^2(ndM)}$. 
\end{restatable}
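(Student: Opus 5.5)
The plan is to partition the set $T$ of reset times according to which clause of Line 4 of \algref{alg:lra} triggers the call to $\Recluster$. There are three triggers:
\begin{itemize}
\item[(i)] the optimal cost $\OPT$ has grown by a $\left(1+\frac{\eps}{4}\right)$ factor over $C$;
\item[(ii)] the counter reaches $c=\sqrt{k}$ while $\HEAVY=\FALSEFLAG$;
\item[(iii)] the counter reaches $c=k$ while $\HEAVY=\TRUEFLAG$.
\end{itemize}
If several clauses hold at once, we charge the time to clause (i). The total recourse over $T$ is then the sum of three separate bounds.

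\textbf{Clause (i).} These are exactly the times handled by \lemref{lem:recourse:anti:had}, after replacing the doubling threshold $2C$ there by $\left(1+\frac{\eps}{4}\right)C$. That lemma's epoch accounting was already done with $\left(1+\frac{\eps}{4}\right)$-epochs, so no change is needed. It splits times into three ranges:
\begin{itemize}
\item before rank $k+1$, only $O(k)$ recourse is incurred;
\item while the rank lies in $[k+2^i,k+2^{i+1})$, \lemref{lem:lra:cost:rank} lower-bounds the cost, so there are at most $O\!\left(\frac{k}{2^i\eps}\log(ndM)\right)$ epochs, each costing $O(2^i)$ recourse;
\item once the cost exceeds $(ndM^2)^{-100}$, there are only $O\!\left(\frac1\eps\log(ndM)\right)$ epochs, each costing $k$.
\end{itemize}
Together these contribute $O\!\left(\frac{k}{\eps}\log^2(ndM)\right)$.

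\textbf{Clauses (ii) and (iii).} For these I will use a charging argument rather than bounding the recourse of each reset by $k$ directly. A type-(ii) reset happens only after $\sqrt{k}$ consecutive arrivals since the previous reset. A type-(iii) reset happens only after $k$ arrivals. Since the counter $c$ increments at most once per arriving row (in the \TRUEFLAG\ case, only when a recluster occurs, which is at most once per row), the number of such resets is at most $n/\sqrt{k}$. Each reset changes at most $k$ vectors, so its recourse is at most $O(k)$, using the comparison between $\Recourse$ and the symmetric-difference count from the preliminary lemma. Charging $k$ per reset to the $\ge\sqrt{k}$ rows that preceded it gives total recourse at most $\frac{n}{\sqrt{k}}\cdot k=n\sqrt{k}$. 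Adding the clause-(i) bound yields $O\!\left(n\sqrt{k}+\frac{k}{\eps}\log^2(ndM)\right)$.

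\textbf{Expected obstacle.} The delicate step is clause (iii). The increment of $c$ in the \HEAVY\ branch happens only on recluster events, so $k$ increments could in principle span many more than $k$ rows. That only helps the count, because it means fewer resets. The real worry is the opposite direction: that a reset under (ii)/(iii) might coincide with, or immediately follow, a clause-(i) reset, so that the charge is not to fresh rows. I would resolve this by charging each (ii)/(iii) reset only to the rows strictly after the previous reset of any type. Those row sets are disjoint, which keeps the count $n/\sqrt{k}$ valid.
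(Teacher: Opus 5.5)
Your proposal is correct and follows the paper's proof. You split the reset times by the trigger in Line 4, bound the counter-triggered resets by $n/\sqrt{k}$ at $O(k)$ recourse each to get $n\sqrt{k}$, and invoke \lemref{lem:recourse:anti:had} for the cost-growth resets; your extra care over the $(1+\frac{\eps}{4})C$ threshold (the paper's proof writes $2C$) and over charging each counter-triggered reset to the disjoint block of rows since the previous reset of any type only makes explicit what the paper leaves implicit.
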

\begin{proof}
Note that $c$ can only be reset for one of the three different following reasons:
\begin{enumerate}
\item 
$\HEAVY=\TRUEFLAG$ and $c=k$
\item 
$\HEAVY=\FALSEFLAG$ and $c=\sqrt{k}$
\item 
$\sum_{i=k+1}^d\sigma_i^2(\bA_t)\ge 2C$
\end{enumerate}
In all three cases, the algorithm calls $\Recluster(\bA,k)$, incurring recourse $k$. 
Observe that for a matrix $\bA$ with $n$ rows, the counter $c$ can exceed $\sqrt{k}$ at most $\frac{n}{\sqrt{k}}$ times. 
Thus the first two cases can occur at most $\frac{n}{\sqrt{k}}$ times, so the total recourse contributed by the first two cases is at most $n\sqrt{k}$. 

It remains to consider the total recourse incurred over the steps where the cost of the optimal low-rank approximation has at least doubled from the previous time $C$ was set. 
By \lemref{lem:recourse:anti:had}, the recourse from such times is at most $\O{\frac{k}{\eps}\log^2(ndM)}$. 
Hence, the total recourse is at most $\O{n\sqrt{k}+\frac{k}{\eps}\log^2(ndM)}$. 
\end{proof}

Using \lemref{lem:light:recourse:total}, \lemref{lem:heavy:recourse:total}, and \lemref{lem:recourse:epochs}, we then bound the total recourse of our algorithm. 
\begin{restatable}{lemma}{lemrecourse}
\lemlab{lem:recourse}
The total recourse of \algref{alg:lra} on an input matrix $\bA\in\mathbb{R}^{n\times d}$ with integer entries bounded in magnitude by $M$ is $\O{n\sqrt{k}+\frac{k}{\eps}\log^2(ndM)}$. 
\end{restatable}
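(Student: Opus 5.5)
Every time step $t\in[n]$ falls into one of three categories: (i) $c$ is reset to $0$ at $t$; (ii) $t$ lies strictly inside a segment started by a reset at time $s$ at which $\HEAVY$ was set to $\FALSEFLAG$; (iii) $t$ lies strictly inside a segment started by a reset at which $\HEAVY$ was set to $\TRUEFLAG$. The plan is to sum the recourse separately over each category and then add the three bounds.

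Category (i) is handled directly by \lemref{lem:recourse:epochs}. It bounds the total recourse over all reset times by $\O{n\sqrt{k}+\frac{k}{\eps}\log^2(ndM)}$. Inside that lemma, the anti-Hadamard regime with exponentially small optimal cost is taken care of by \lemref{lem:recourse:anti:had} and the rank-dependent lower bound \lemref{lem:lra:cost:rank}.

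For the remaining two categories, split the stream into maximal segments $[s,s']$ between consecutive resets.

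For a $\FALSEFLAG$ segment, the counter forces a reset once $c=\sqrt{k}$, so the segment has $r\le\sqrt{k}$ non-reset steps. By \lemref{lem:light:recourse:total}, its recourse is at most $r$, since each step replaces a single unit vector. Summed over all such segments, this is at most the number of steps, namely $n$.

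For a $\TRUEFLAG$ segment of length $r\le k$, \lemref{lem:heavy:recourse:total} (which uses \lemref{lem:heavy:recourse:step}) bounds the recourse by $r\sqrt{k}$. Steps at which the old $\bV$ remains a $(1+\frac{\eps}{2})$-approximation cost nothing. Summed over segments, this is at most $n\sqrt{k}$, because the segments are disjoint and their lengths add up to at most $n$.

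Adding the three categories gives $\O{n\sqrt{k}}+n+\O{n\sqrt{k}+\frac{k}{\eps}\log^2(ndM)}=\O{n\sqrt{k}+\frac{k}{\eps}\log^2(ndM)}$, which is the claimed bound.

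The main point needing care is that the segments really partition the time steps, so nothing is double-counted. Reset times are counted only in \lemref{lem:recourse:epochs}, and each interior step belongs to exactly one segment, whose type is fixed by the flag set at the reset that started it. One should also check that the per-step recourse notion $\sum_t|V_t\setminus V_{t-1}|$ used in the light lemma dominates $\Recourse$ up to constants; this follows from the comparison lemma in \secref{sec:prelims}, since replacing one vector changes the projection by a rank-two perturbation. The nontrivial content, the anti-Hadamard epoch counting, is already packaged in the earlier lemmas, so this proof reduces to bookkeeping.
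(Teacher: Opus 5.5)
Your proposal is correct and matches the paper's proof essentially step for step. It uses the same three-way split into light non-reset steps (\lemref{lem:light:recourse:total}, total at most $n$), heavy non-reset steps (\lemref{lem:heavy:recourse:total}, total at most $n\sqrt{k}$, since each step costs either $0$ or at most $\sqrt{k}$), and reset steps (\lemref{lem:recourse:epochs}), summed to the stated bound. Your extra bookkeeping remarks, that the segments are disjoint and that a single-vector replacement changes the projection by $\mathcal{O}(1)$ in squared Frobenius norm, are sound and simply make explicit what the paper leaves implicit.
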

\begin{proof}
We first consider the times $t$ where $c$ is not reset to zero and $\HEAVY$ is set to $\FALSEFLAG$
By \lemref{lem:light:recourse:total}, the recourse across any consecutive set of $r$ of these steps is at most $r$. 
Thus over the stream of $n$ rows, the total recourse incurred across all steps where $\HEAVY$ is set to $\FALSEFLAG$ is at most $n$. 

We next consider the times $t$ where $c$ is not reset to zero and $\HEAVY$ is set to $\TRUEFLAG$. 
By \lemref{lem:heavy:recourse:total}, the recourse across any uninterrupted sequence of $r$ these steps is at most $r\sqrt{k}$. 
Hence over the stream of $n$ rows, the total recourse incurred across all steps where $\HEAVY$ is set to $\TRUEFLAG$ is at most $n\sqrt{k}$.

Finally, we consider the times $t$ where $c$ is reset to zero. 
By \lemref{lem:recourse:epochs}, the total recourse incurred across all steps is at most $\O{n\sqrt{k}+\frac{k}{\eps}\log^2(ndM)}$. 
Therefore, the total recourse is $\O{n\sqrt{k}+\frac{k}{\eps}\log^2(ndM)}$.
\end{proof}

Using \lemref{lem:correct} and \lemref{lem:recourse}, we can provide the formal guarantees of our subroutine. 
\begin{restatable}{lemma}{lemsubalg}
\lemlab{lem:sub:alg}
Given an input matrix $\bA\in\mathbb{R}^{n\times d}$ with integer entries bounded in magnitude by $M$, \algref{alg:lra} achieves a $\left(1+\frac{\eps}{2}\right)$-approximation to the cost of the optimal low-rank approximation and achieves recourse $\O{n\sqrt{k}+\frac{k}{\eps}\log^2(ndM)}$. 
\end{restatable}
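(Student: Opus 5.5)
This statement simply combines the two halves already established for \algref{alg:lra}, so the plan is to split it into its approximation guarantee and its recourse guarantee and invoke the corresponding earlier lemmas. No new argument should be needed; the only work is checking that the lemmas' hypotheses together cover every time step.

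For the approximation guarantee, I will invoke \lemref{lem:correct}. It states that at every time $t\in[n]$ the output $\bV^{(t)}$ gives a $\left(1+\frac{\eps}{2}\right)$-approximation to $\OPT_t$. Its proof partitions the times into three classes: times where $c$ is reset and $\Recluster$ is called, so the output is exactly optimal by \corref{cor:recluster:correct}; non-reset times with $\HEAVY=\FALSEFLAG$, handled by \lemref{lem:light:correct}; and non-reset times with $\HEAVY=\TRUEFLAG$, handled by \lemref{lem:heavy:correct}. I will check that these three classes are exhaustive. They are, because every time step either triggers the reset condition in the outer conditional or falls into exactly one branch of the $\HEAVY$ flag.

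For the recourse guarantee, I will invoke \lemref{lem:recourse}, which bounds the total recourse by $\O{n\sqrt{k}+\frac{k}{\eps}\log^2(ndM)}$. That lemma sums three contributions:
\begin{itemize}
\item at most $n$ from light steps, by \lemref{lem:light:recourse:total};
\item at most $n\sqrt{k}$ from heavy steps, by \lemref{lem:heavy:recourse:total} together with the per-step bound \lemref{lem:heavy:recourse:step};
\item $\O{n\sqrt{k}+\frac{k}{\eps}\log^2(ndM)}$ from reset times, by \lemref{lem:recourse:epochs}, which in turn uses \lemref{lem:recourse:anti:had} and the rank-dependent cost lower bound of \lemref{lem:lra:cost:rank}.
\end{itemize}

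Combining the two guarantees yields the statement. The main point needing care is consistency of accounting. Each time step must be charged to exactly one of the three recourse classes, so that no step is double-counted or omitted. The reset-time bound must also absorb both ways a reset can be triggered: the counter hitting its threshold of $\sqrt{k}$ or $k$, and the optimal cost growing by the $\left(1+\frac{\eps}{4}\right)$ factor. If the bookkeeping of the counter $c$ within an epoch turned out to be subtle, I would argue directly that each reset incurs recourse at most $k$. Counter-triggered resets then occur at most $n/\sqrt{k}$ times, contributing $\O{n\sqrt{k}}$. Cost-triggered resets are controlled by \lemref{lem:recourse:anti:had}.
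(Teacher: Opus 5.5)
Your proposal is correct and matches the paper's proof. The paper likewise takes the $\left(1+\frac{\eps}{2}\right)$-approximation guarantee directly from \lemref{lem:correct} and the recourse bound $\O{n\sqrt{k}+\frac{k}{\eps}\log^2(ndM)}$ directly from \lemref{lem:recourse}, and your extra bookkeeping (exhaustiveness of the three time-classes, counter-triggered versus cost-triggered resets) mirrors how those two lemmas are themselves proved.
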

\begin{proof}
Note that correctness follows from \lemref{lem:correct} and the upper bound on recourse follows from \lemref{lem:recourse}. 
\end{proof}

\begin{algorithm}[!htb]
\caption{Relative-error algorithm for low-rank approximation with recourse $\frac{k^{3/2}}{\eps^2}\cdot\polylog(ndM)$}
\alglab{alg:lra:full}
\begin{algorithmic}[1]
\REQUIRE{Rows $\ba_1,\ldots,\ba_n$ of input matrix $\bA\in\mathbb{R}^{n\times d}$ with integer entries magnitude at most $M$}
\ENSURE{$\left(1+\frac{\eps}{4}\right)$-approximation to the cost of the optimal low-rank approximation at all times}
\FOR{each row $\ba_t$}
\STATE{Sample $\ba_t$ with online ridge leverage score}
\COMMENT{\thmref{thm:online:pcp}}
\STATE{Run \algref{alg:lra} on the stream induced by the sampled rows}
\ENDFOR
\end{algorithmic}
\end{algorithm}
To reduce the number of rows in the input, we again apply \thmref{thm:online:pcp}, which, combined with \lemref{lem:sub:alg}, gives our main result \thmref{thm:main} for \algref{alg:lra:full}. 
\thmmain*
\begin{proof}
Let $\eps\in\left(0,\frac{1}{100}\right)$. 
By \thmref{thm:online:pcp}, the optimal low-rank approximation to the rows of $\bM$ that have been sampled by a time $t$ achieves a $\left(1+\frac{\eps}{10}\right)$-approximation to the cost of the optimal low-rank approximation of $\bA$ that have arrived at time $t$. 
Thus, it suffices to show that we provide a $\left(1+\frac{\eps}{2}\right)$-approximation to the cost of the optimal low-rank approximation to the matrix $\bM$ at all times. 
Thus we instead consider a new stream consisting of the rows of $\bM\in\mathbb{R}^{m\times d}$, where $m=\frac{k}{\eps^2}\cdot\polylog(ndM)$ and the entries of $\bM$ are integers bounded in magnitude by $M\cdot\poly(n)$. 

Consider \algref{alg:lra} on input $\bM$. 
Correctness follows from \lemref{lem:sub:alg}, so it remains to reparameterize the settings in \lemref{lem:sub:alg} to analyze the total recourse. 
By \lemref{lem:recourse}, the total recourse on an input matrix $\bA\in\mathbb{R}^{n\times d}$ with integer entries bounded in magnitude by $M$ is $\O{n\sqrt{k}+\frac{k}{\eps}\log^2(ndM)}$. 
Thus for input matrix $\bM$ with $\frac{k}{\eps^2}\cdot\polylog(ndM)$ rows and integer entries bounded in magnitude by $M\cdot\poly(n)$, the total recourse is $\frac{k^{3/2}}{\eps^2}\cdot\polylog(ndM)$. 
\end{proof}
Finally, we remark that since \thmref{thm:online:pcp} samples $\O{\frac{k}{\eps^2}\log n\log^2\kappa}$ rows and there are input sparsity algorithms for approximations of the sampling probabilities~\cite{CohenMM17}, then \algref{alg:lra:full} can be implemented used $\frac{kd}{\eps}\cdot\polylog(ndM)$ bits of space and $d\cdot\poly\left(k,\frac{1}{\eps},\log(ndM)\right)$ amortized update time.

\section{Recourse Lower Bound}
\seclab{sec:lb}
In this section, we prove our recourse lower bound. 
The main idea is to simply partition the data stream into $\Theta\left(\frac{1}{\eps}\log\frac{n}{k}\right)$ phases, so that the optimal low-rank approximation cost increases by a multiplicative $(1+\O{\eps})$-factor between each phase. 
We also design the input matrix so that the optimal solution to the $i$-th phase is orthogonal to the optimal solution to the $(i-1)$-th phase. 
Hence, depending on the parity of the phase $i$, the optimal solution is either the first $k$ elementary vectors, or the elementary vectors $k+1$ through $2k$ and thus a $(1+\eps)$-approximation at all times requires incurring $\Omega(k)$ recourse between each phase, which shows the desired $\Omega\left(\frac{k}{\eps}\log\frac{n}{k}\right)$ lower bound. 

\thmlb*
\begin{proof}
We divide the stream into $\Theta\left(\frac{1}{\eps}\log\frac{n}{k}\right)$ phases. 
Let $C>2$ be some parameter that we shall set. 
If $i$ is odd, then in the $i$-th phase, we add $(1+\eps)^i$ copies of the elementary vectors $\bfe_1,\ldots,\bfe_k$. 
If $i$ is even, then in the $i$-th phase, we add $(1+C\eps)^i$ copies of the elementary vectors $\bfe_{k+1},\ldots,\bfe_{2k}$. 
Note that since there are $\Theta\left(\frac{1}{\eps}\log\frac{n}{k}\right)$ phases and each phase inserts $(1+C\eps)^i$ copies of $k$ rows, then the total number of copies of each row inserted is at most $\O{n}{2k}$ for the correct fixing of the constant in the $\Theta(\cdot)$ notation, and thus there are at most $n$ rows overall. 

We remark that by construction, after the $i$-th phase, the optimal rank-$k$ approximation is the elementary vectors $\bfe_1,\ldots,\bfe_k$ if $i$ is odd and the elementary vectors $\bfe_{k+1},\ldots,\bfe_{2k}$ if $i$ is even. 
In particular, by the Eckart-Young-Mirsky theorem, i.e., \thmref{thm:lra:opt}, after the $i$-th phase, the optimal rank-$k$ approximation to the underlying matrix induces cost $k\cdot\left((1+C\eps)^2+(1+C\eps)^4+\ldots+(1+C\eps)^{i-1}\right)$ if $i$ is odd and $k\cdot\left((1+C\eps)+(1+C\eps)^3+\ldots+(1+C\eps)^{i-1}\right)$ if $i$ is even. 
Note that both of these quantities are at most $2k(1+C\eps)^{i-1}$. 
Thus for the time $t$ after an odd phase $i$, a matrix $\bM$ of rank-$k$ factors must satisfy
\[\|\bX^{(t)}-\bX^{(t)}\bM^\dagger\bM\|_F^2\le 2\eps k(1+C\eps)^{i-1},\]
in order to be a $(1+\eps)$-approximation to the optimal low-rank cost. 

Let $\bME^{(1)}=\bfe_1\circ\ldots\circ\bfe_k$ and $\bME^{(2)}=\bfe_{k+1}\circ\ldots\circ\bfe_{2k}$. 
For any constant $C>100$, it follows that $\bM$ must have squared mass at least $k(1-2\eps)$ onto $\bME^{(1)}$ to be a $(1+\eps)$-approximation to the cost of the optimal low-rank approximation to $\bX^{(t)}$, i.e., $\|\bM(\bME^{(1)})^\top\bME^{(1)}\|_F^2\ge k(1-2\eps)$ for the time $t$ immediately following an odd phase $i$. 
By similar reasoning, $\bM$ must have squared mass at least $k(1-2\eps)$ onto $\bME^{(2)}$ to be a $(1+\eps)$-approximation to the cost of the optimal low-rank approximation to $\bX^{(t)}$ for the time $t$ immediately following an odd phase $i$. 
However, because $\bME^{(1)}$ and $\bME^{(2)}$ are disjoint, then it follows that $\bM$ must have recourse $\Omega(k)$ between each phase. 
Since there are $\Omega\left(\frac{1}{\eps}\log\frac{n}{k}\right)$ phases, then the total recourse must be $\Omega\left(\frac{k}{\eps}\log\frac{n}{k}\right)$. 
\end{proof}

\section{Empirical Evaluations}
\seclab{sec:exp}
We describe our empirical evaluations on a large-scale real-world dataset, comparing the quality of the solution of our algorithm to the quality of the optimal low-rank approximation solution. 
We discuss a number of additional experiments on both synthetic and real-world datasets. 
All experiments were conducted utilizing Python version 3.10.4 on a 64-bit operating system running on an AMD Ryzen 7 5700U CPU. 
The system was equipped with 8GB of RAM and featured 8 cores, each operating at a base clock speed of 1.80 GHz. 
The code is publicly accessible at \url{https://github.com/samsonzhou/consistent-LRA}. 

\begin{table}[!htb]
\centering
\begin{tabular}{c|c|c|c|c}
$k$ & $(1+\eps)$ & Median & Std. Dev. & Mean \\\hline
\multirow{5}{*}{25} & $1.1$ & $1.000$ & $0.0000$ & $1.0000$ \\
& $2$ & $1.000$ & $0.0000$ & $1.0000$ \\
& $5$ & $1.0000$ & $0.0367$ & $1.0016$ \\
& $10$ & $1.0000$ & $2.4463$ & $1.598$ \\
& $100$ & $1.1907$ & $51.9353$ & $7.8882$ \\\hline
\end{tabular}
\caption{Median, standard deviation, and mean for ratios of cost across various values of accuracy parameters for landmark dataset, between 150 and 5000 updates}
\tablelab{tab:landmark:stats}
\end{table}

\begin{figure*}[!htb]
    \centering    
    \begin{subfigure}{0.47\textwidth}
        \includegraphics[width=\linewidth]{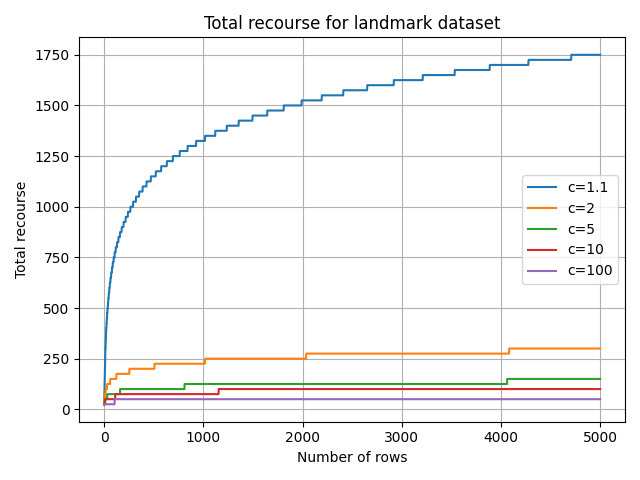}
        \caption{Recourse comparisons}
        \figlab{fig:landmark:recourses:a}
    \end{subfigure}
    \begin{subfigure}{0.47\textwidth}
    \includegraphics[width=\linewidth]{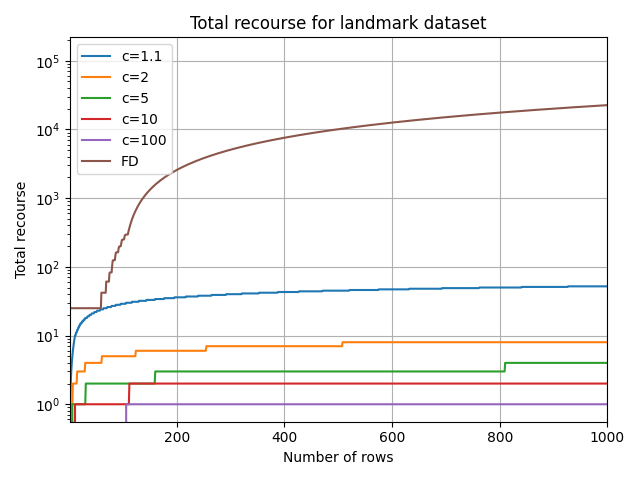}
        \caption{Recourse comparisons}
        \figlab{fig:landmark:recourses:b}
    \end{subfigure}
\caption{Recourse comparisons for $k=25$, $c=(1+\varepsilon)\in\{1.1,2.5,5,10,100\}$}
\figlab{fig:landmark:recourses}
\end{figure*}

\subsection{Landmark Dataset}
In this section, we focus on our evaluations \algref{alg:lra:frob} on the Landmark dataset from the SuiteSparse Matrix Collection~\cite{davis2011university}, which is commonly used in benchmark comparison for low-rank approximation, e.g., \cite{BanWZ19}. 
The dataset consists of a total of $71952$ rows with $d=2704$ features. 

\paragraph{Experimental setup.} 
As our theoretical results prove that our algorithm has a small amount of recourse, we first compare the cost of the solution output by \algref{alg:lra:frob} with the cost of the optimal low-rank approximation. 
However, determining the optimal cost over each time is computationally expensive and serves as the main bottleneck. 
Specifically, for a stream of length $n$, the baseline requires $n\cdot\O{n^\omega}=\Omega(n^3)$ runtime for $n\approx d$, where $\omega\approx2.37$ is the exponent for matrix multiplication~\cite{AlmanDWXXZ25}. 
Thus we consider the first $n=5000$ rows for our data stream, so the goal was to perform low-rank approximation on every single prefix matrix of size $n'\times d$ with $n'\le n$. 
In particular, we computed in the runtimes and ratios of the two costs for $k=25$ across $c=(1+\eps)\in\{1.1,2.5,5,10,100\}$ in \figref{fig:landmark:graphs} with central statistics in \tableref{tab:landmark:stats}, even though \algref{alg:lra:frob} only guarantees additive error, rather than multiplicative error. 
Finally, we compared the recourse of our algorithms in \figref{fig:landmark:recourses}, along with \textsc{FrequentDirections}, labeled FD, a standard algorithm for online low-rank approximation~\cite{GhashamiLPW16}. 

\paragraph{Results and discussion.}
Our results show a strong separation in the quality of online low-rank approximations such as Frequent Directions and our algorithms, which were specifically designed to achieve low recourse. 
Namely, for $n=5000$, Frequent Directions has achieved recourse $121904$ while our algorithms range from recourse $100$ to $300$, more than a factor of $400$X. 
Moreover, our results show that the approximation guarantees of our algorithms are actually quite good in practice, especially as the number of rows increases; we believe the large variance in \figref{fig:landmark:approx} is due to the optimal low-rank approximation cost being quite small compared to the additive Frobenius error. 
Thus it seems our empirical evaluations provide compelling evidence that our algorithms achieve significantly better recourse than existing algorithms for online low-rank approximation; we provide a number of additional experiments as follows. 


\begin{figure*}[!htb]
    \centering    
    \begin{subfigure}{0.32\textwidth}
        \includegraphics[width=\linewidth]{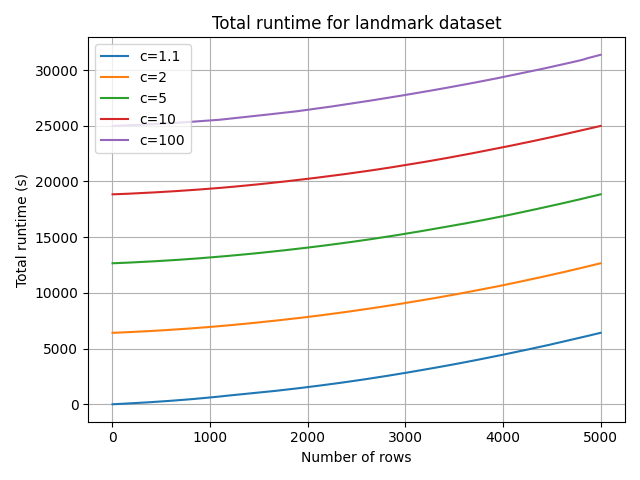}
        \caption{Runtime comparisons}
        \figlab{fig:landmark:times}
    \end{subfigure}
    \begin{subfigure}{0.32\textwidth}
    \includegraphics[width=\linewidth]{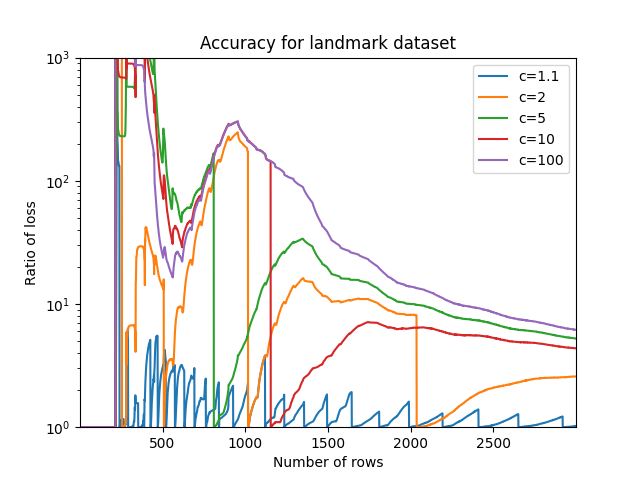}
        \caption{Approximation comparisons}
        \figlab{fig:landmark:approx}
    \end{subfigure}
    \begin{subfigure}{0.32\textwidth}
    \includegraphics[width=\linewidth]{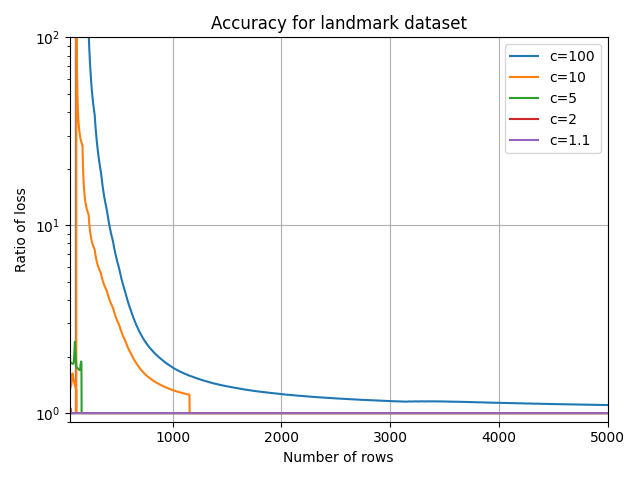}
        \caption{Approximations comparisons}
        \figlab{fig:landmark:approx:b}
    \end{subfigure}
\caption{Runtime and approximations on landmark dataset, for $k=25$, $c=(1+\varepsilon)\in\{1.1,2.5,5,10,100\}$}
\figlab{fig:landmark:graphs}
\end{figure*}

\subsection{Skin Segmentation Dataset}
\begin{figure*}[!htb]
    \centering    
    \begin{subfigure}{0.32\textwidth}
        \includegraphics[width=\linewidth]{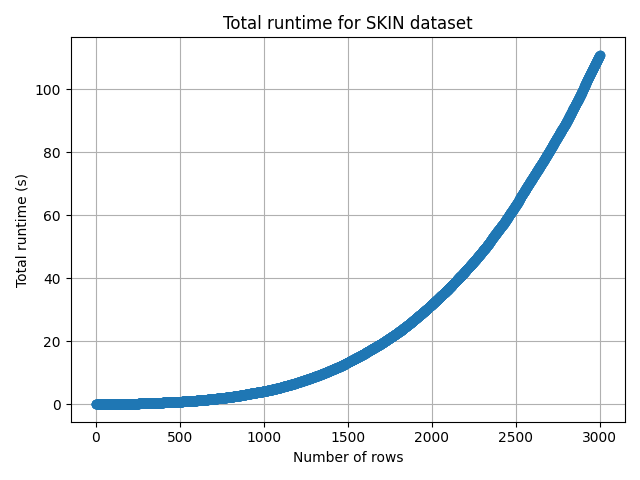}
        \caption{Runtime on SKIN dataset}
        \figlab{fig:skin:time}
    \end{subfigure}
    \begin{subfigure}{0.32\textwidth}
       \includegraphics[width=\linewidth]{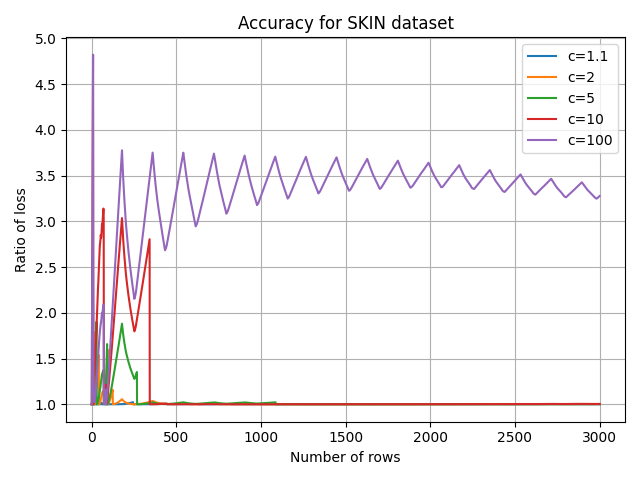}
        \caption{Approximations for $k=1$}
        \figlab{fig:skin:one}
    \end{subfigure}
    \begin{subfigure}{0.32\textwidth}
        \includegraphics[width=\linewidth]{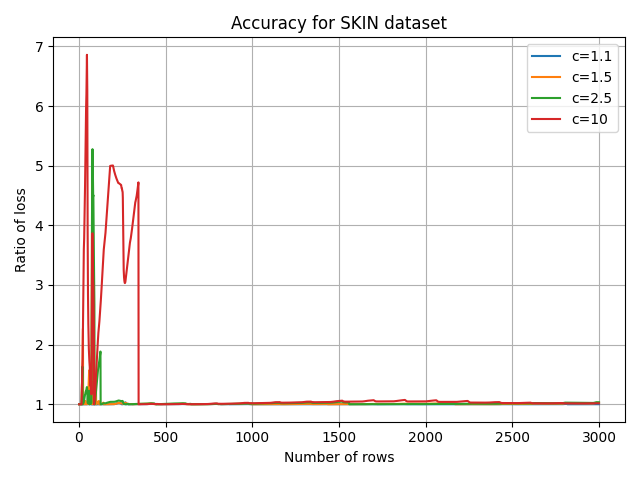}
        \caption{Approximations for $k=2$}
        \figlab{fig:skin:two}
    \end{subfigure}
\caption{Runtime and approximations on SKIN dataset. \figref{fig:skin:time} considers $k=1$ and $c=1.1$, while \figref{fig:skin:one} considers $k=1$, $c=(1+\eps)\in\{1.1,2.5,5,10,100\}$ and \figref{fig:skin:two} considers $k=2$, $c=(1+\eps)\in\{1.1,1.5,2.5,10\}$}
\figlab{fig:skin:graphs}
\end{figure*}
We evaluate \algref{alg:lra:frob} on the Skin Segmentation (SKIN) dataset~\cite{misc_skin_segmentation_229} from the UCI repository~\cite{UCI}, which is commonly used in benchmark comparison for unsupervised learning tasks, e.g., \cite{BorassiELVZ20,EpastoMNZ23,WoodruffZZ23}. 
The dataset consists of a total of 245057 face images encoded by B,G,R values, collected from the Color FERET Image Database and the PAL Face Database from Productive Aging Laboratory from the University of Texas at Dallas. 
The faces were collected from various age groups (young, middle, and old), race groups (white, black, and asian), and genders. 
Among the dataset, 50859 images are skin samples, while the other 194198 images are non-skin samples, and the task is to classify which category each image falls under. 

\paragraph{Experimental setup.} 
For our experiments, we only considered the first 3000 skin images and stripped the labels, so that the goal was to perform low-rank approximation on the B,G,R values of the remaining skin images. 
As our theoretical guarantees ensure that the solution is changed a small number of times, we compared the cost of the solution output by \algref{alg:lra:frob} with the cost of the optimal low-rank approximation. 
In particular, we computed the ratios of the two costs for $k=1$ across $c=(1+\eps)\in\{1.1,2.5,5,10,100\}$ and for $k=2$ across $c=(1+\eps)\in\{1.1,1.5,2.5,10\}$, even though the formal guarantees of \algref{alg:lra:frob} involve upper bounding the additive error. 

\paragraph{Results and discussion.}
In \figref{fig:skin:graphs}, we plot the ratio of the cost of the solution output by \algref{alg:lra:frob} with the cost of the optimal low-rank approximation at each time over the duration of the data stream. 
We then provide the central statistics, i.e., the mean, standard deviation, and maximum for the ratio of across various values of $k$ and accuracy parameters for the SKIN dataset in \tableref{tab:skin:stats}. 

Our results show that as expected, our algorithm provides better approximation to the optimal solution as $c=(1+\eps)$ decreases from $100$ to $1$. 
Once the optimal low-rank approximation cost became sufficiently large, our algorithm achieved a good multiplicative approximation. 
Thus we believe the main explanation for the spikes at the beginning of \figref{fig:skin:two} is due to the optimal low-rank approximation cost being quite small compared to the additive Frobenius error. 
It is somewhat surprising that despite the worst-case theoretical guarantee that our algorithm should only provide a $100$-approximation, it actually performs significantly better, i.e., it provides roughly a $4$-approximation. 
Thus it seems our empirical evaluations provide a simple proof-of-concept demonstrating that our theoretical worst-case guarantees can be even stronger in practice. 

\begin{table}[!htb]
\centering
\begin{tabular}{c|c|c|c|c}
$k$ & $(1+\eps)$ & Mean & Std. Dev. & Max \\\hline
\multirow{5}{*}{1} & $1.1$ & $1.0006$ & $0.0022$ & $1.0383$ \\
& $2$ & $1.0088$ & $0.0479$ & $1.7870$ \\
& $5$ & $1.0374$ & $0.1341$ & $2.6038$ \\
& $10$ & $1.1324$ & $0.4167$ & $4.8201$ \\
& $100$ & $3.2971$ & $0.4532$ & $4.8201$ \\\hline
\multirow{4}{*}{2} & $1.1$ & $1.0016$ & $0.0069$ & $1.1521$ \\
& $1.5$ & $1.0148$ & $0.1327$ & $5.1533$ \\
& $2.5$ & $1.0371$ & $0.2430$ & $5.2778$ \\
& $10$ & $1.3175$ & $0.9238$ & $6.8602$ \\\hline
\end{tabular}
\caption{Average, standard deviation, and maximum for ratios of cost across various values of $k$ and accuracy parameters for SKIN dataset.}
\tablelab{tab:skin:stats}
\end{table}


\subsection{Rice Dataset}
\begin{figure*}[!htb]
    \centering    
    \begin{subfigure}{0.32\textwidth}
        \includegraphics[width=\linewidth]{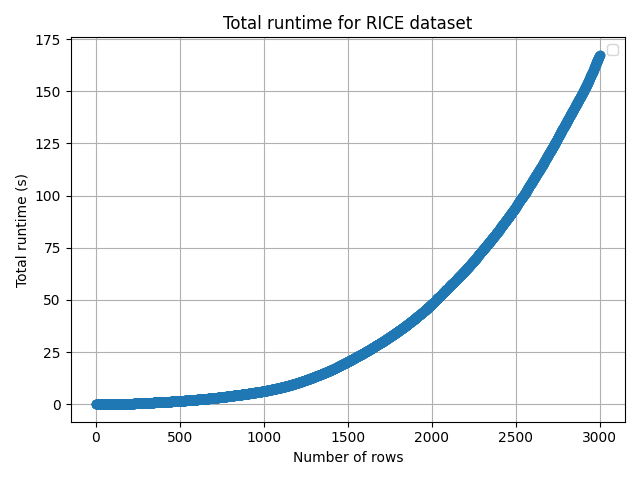}
        \caption{Runtime on RICE}
        \figlab{fig:rice:times}
    \end{subfigure}
    \begin{subfigure}{0.32\textwidth}
    \includegraphics[width=\linewidth]{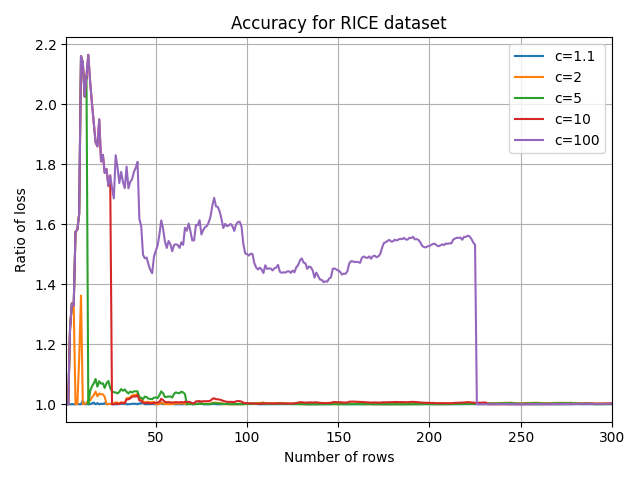}
        \caption{Approximations on RICE}
        \figlab{fig:rice:approx}
    \end{subfigure}
    \begin{subfigure}{0.32\textwidth}
    \includegraphics[width=\linewidth]{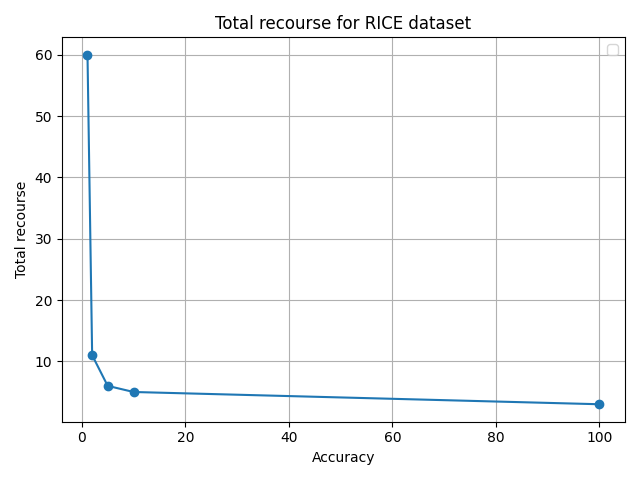}
        \caption{Recourse on RICE}
        \figlab{fig:rice:recourse}
    \end{subfigure}
\caption{Runtime and approximations on RICE dataset. \figref{fig:rice:times} considers $k=1$, $c=10$, while \figref{fig:rice:approx} and \figref{fig:rice:recourse} consider $k=1$, $c=(1+\varepsilon)\in\{1.1,2.5,5,10,100\}$}
\figlab{fig:rice:graphs}
\end{figure*}
We next consider the RICE dataset~\cite{misc_rice_545} from the UCI repository~\cite{UCI}, where the goal is to classify between 2 types of rice grown in Turkey. 
The first type of rice is the Osmancik species, which has a large planting area since 1997, while the second type of rice is the Cammeo species, which has been grown since 2014~\cite{misc_rice_545}. 
The dataset consists of a total of 3810 rice grain images taken for the two species, with 7 morphological features were obtained for each grain of rice. 
Specifically, the features are the area, perimeter, major axis length, minor axis length, eccentricity, convex area, and the extent of the rice grain. 

\paragraph{Evaluation summary.} 
For our experiments, we performed low-rank approximation on the seven provided features of the RICE dataset. 
We compared the cost of the solution output by \algref{alg:lra:frob} with the cost of the optimal low-rank approximation for $k=1$ across $c=(1+\eps)\in\{1.1,2.5,5,10,100\}$. 
We summarize our results in \figref{fig:rice:graphs}, plotting the runtime of our algorithm in \figref{fig:rice:times} and the ratio of the cost of the solution output by \algref{alg:lra:frob} with the cost of the optimal low-rank approximation in \figref{fig:rice:approx}.  

Our results show that similar to the skin segmentation dataset, our algorithm provides better approximation to the optimal solution as $c=(1+\eps)$ decreases from $100$ to $1$. 
\figref{fig:rice:approx} again has a number of spikes for a small number of rows likely due to the optimal low-rank approximation cost being quite small compared to the additive Frobenius error. 
Furthermore, our results again exhibit the somewhat surprising result that our algorithm provides a relatively good approximation compared to the worst-case theoretical guarantee, i.e., our algorithm empirically provides roughly a $2$-approximation despite the worst-case guarantees only providing a $100$-approximation. 

Finally, we note that, as anticipated, the recourse of our algorithm decreases as the required accuracy of the factors decreases. 
This is because coarser factor representations require less frequent updates as the matrix evolves.

\subsection{Random Synthetic Dataset}
We generated a random synthetic dataset with 3000 rows and 4 columns with integer entries between 0 and 100 and subsequently normalized by column. 
We again compared the cost of the solution output by \algref{alg:lra:frob} with the cost of the optimal low-rank approximation for $k=1$ across $c=(1+\eps)\in\{1.1,2.5,5,10,100\}$. 
We summarize our results in \figref{fig:random:graphs}. 
In particular, we first plot in \figref{fig:random:times} the runtime of our algorithm. 
We then plot in \figref{fig:random:approx} the ratio of the cost of the solution output by \algref{alg:lra:frob} with the cost of the optimal low-rank approximation.  

Similar to the skin segmentation and the rice datasets, our algorithm provides better approximation to the optimal solution as $c=(1+\eps)$ decreases from $100$ to $1$, with a number of spikes for a small number of rows likely due to the optimal low-rank approximation cost being quite small compared to the additive Frobenius error. 
Moreover, our results perform demonstrably better than the worst-case theoretical guarantee, giving roughly a $2.5$-approximation compared to the theoretical guarantee of $100$-approximation. 

\begin{figure*}[!htb]
    \centering    
    \begin{subfigure}{0.47\textwidth}
        \includegraphics[width=\linewidth]{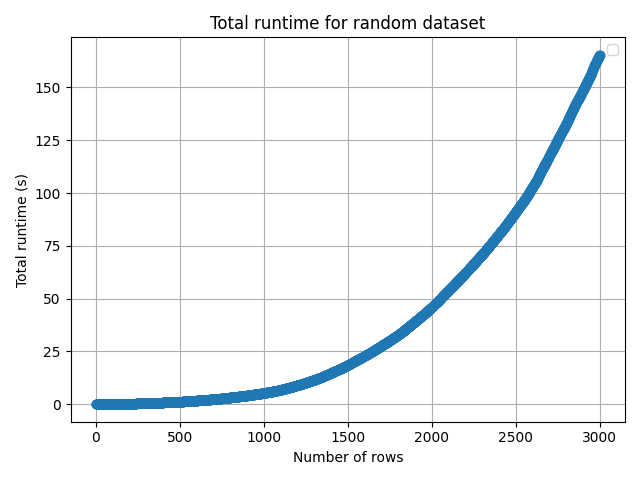}
        \caption{Runtime on random dataset}
        \figlab{fig:random:times}
    \end{subfigure}
    \begin{subfigure}{0.47\textwidth}
    \includegraphics[width=\linewidth]{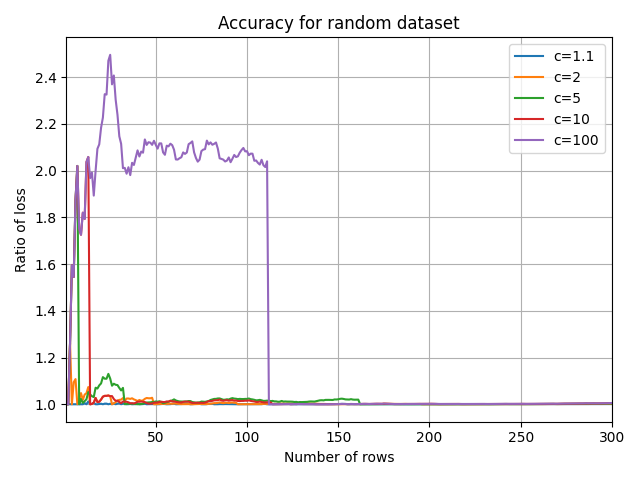}
        \caption{Approximations on random dataset}
        \figlab{fig:random:approx}
    \end{subfigure}
\caption{Runtime and approximations on random dataset. \figref{fig:random:times} considers $k=1$, $c=10$, while \figref{fig:random:approx} considers $k=1$, $c=(1+\varepsilon)\in\{1.1,2.5,5,10,100\}$}
\figlab{fig:random:graphs}
\end{figure*}

\section*{Acknowledgments}
David P. Woodruff is supported in part Office of Naval Research award number N000142112647, and a Simons Investigator Award. 
Samson Zhou is supported in part by NSF CCF-2335411. 
Samson Zhou gratefully acknowledges funding provided by the Oak Ridge Associated Universities (ORAU) Ralph E. Powe Junior Faculty Enhancement Award. 
The results of \lemref{lem:update:recourse} were produced with AI assistance from Gemini DeepThink and were independently verified by the authors.

\def\shortbib{0}
\bibliographystyle{alpha}
\bibliography{references}

\end{document}